\newtheorem{proposition}{Proposition}[section]
\newtheorem{definition}{Definition}[section]
\newtheorem{remark}{Remark}[section]
\newtheorem{lemma}{Lemma}[section]
\newtheorem{theorem}{Theorem}[section]
\newcommand{\details}[1]{}
\newcommand{\AAWA}{\text{AAWA}}
\newcommand{\NAAWA}{\text{$n$AAWA}}
\newcommand{\HLTL}{\text{HyperLTL}}
\newcommand{\AHLTL}{\text{AHyperLTL}}
\newcommand{\SLTL}{\text{LTL$_{S}$}}
\newcommand{\SHLTL}{\text{HyperLTL$_{S}$}}
\newcommand{\CHLTL}{\text{HyperLTL$_{C}$}}
\newcommand{\HPDL}{\text{HyperPDL$-\Delta$}}
\newcommand{\HQPTL}{\text{HyperQPTL}}
\newcommand{\HU}{\text{H$_\mu$}}
\newcommand{\FOE}{\text{FOL[$<$,E]}}
\newcommand{\MOE}{\text{S1S[E]}}
\newcommand{\NBA}{\text{NBA}}
\newcommand{\PDL}{\text{PDL}}
\newcommand{\QPTL}{\text{QPTL}}
\newcommand{\CTL}{\text{CTL}}
\newcommand{\CTLStar}{\text{CTL$^{*}$}}
\newcommand{\LTL}{\text{LTL}}
\newcommand{\HCTLStar}{\text{HyperCTL$^{*}$}}
\newcommand{\Lang}{\mathcal{L}}
\newcommand{\AP}{\textsf{AP}}
\newcommand{\Ku}{\mathcal{K}}
\newcommand{\Instance}{\mathcal{I}}
\newcommand{\Au}{\mathcal{A}}
\newcommand \tpl[1]{\langle #1 \rangle}
 \newcommand{\Var}{\textsf{VAR}}
  \newcommand{\Lab}{\textit{Lab}}
 \newcommand{\stfr}{\textit{stfr}}
\newcommand{\SUCC}{\textit{succ}}
\newcommand{\acc}{\textit{acc}}
\newcommand{\Acc}{\textit{Acc}}
\newcommand{\cl}{{\textit{cl}}}
\newcommand{\Dom}{{\textit{Dom}}}
\newcommand{\dir}{{\textit{dir}}}
\newcommand{\last}{{\textit{lst}}}
\newcommand{\first}{{\textit{fst}}}
\newcommand{\Inst}{{\textit{L}}}
\newcommand{\halt}{{\textit{halt}}}
\newcommand{\init}{{\textit{init}}}
\newcommand{\Beg}{\textit{beg}}
\newcommand{\Tower}{\mathsf{Tower}}
\newcommand{\inc}{{\mathsf{inc}}}
\newcommand{\dec}{{\mathsf{dec}}}
\newcommand{\zero}{{\mathsf{zero}}}
\newcommand{\instr}{{\textit{op}}}
\newcommand{\From}{{\textit{from}}}
\newcommand{\To}{{\textit{to}}}
\newcommand{\Until}{\mathbin{\textsf{U}}}
\newcommand{\Release}{\textsf{R}}
\newcommand{\Next}{\textsf{X}}
\newcommand{\Always}{\textsf{G}}
\newcommand{\Eventually}{\textsf{F}}
\def\N{{\mathbb{N}}}
\def\B{{\mathbb{B}}}
\def\S{{\mathcal{S}}}
\def\C{{\mathcal{C}}}
\def\F{{\mathcal{F}}}
\newcommand{\true}{\texttt{true}}
\newcommand{\false}{\texttt{false}}
\def\PSPACE{{\sc Pspace}}
\def\EXPSPACE{{\sc Expspace}}
\def\NLOGSPACE{{\sc Nlogspace}}
\newcommand{\Rel}[2]{\ensuremath{#1[#2]}}
\newcommand{\Into}{\mathrel{\rightarrow}} 
\newcommand{\ie}{i.e.\xspace}
\newcommand{\vect}[1]{\ensuremath{\overline{#1}}}
\newcommand{\DefinedAs}{\ensuremath{\,\stackrel{\text{\textup{def}}}{=}\,}}
\begin{document}


\newcommand{\Thanks}{\thanks{This work was funded in part by the
    Madrid Regional Government under project ``S2018/TCS-4339
    (BLOQUES-CM)'', by Spanish National Project ``BOSCO
    (PGC2018-102210-B-100)''.}}

\title{Asynchronous Extensions of HyperLTL\Thanks}


\IEEEpubid{\makebox[\columnwidth]{978-1-6654-4895-6/21/\$31.00~
\copyright2021 IEEE \hfill} \hspace{\columnsep}\makebox[\columnwidth]{
}}

\author{\IEEEauthorblockN{Laura Bozzelli}
  \IEEEauthorblockA{University of Napoli ``Federico II''\\Napoli, Italy}
\and
\IEEEauthorblockN{Adriano Peron}
\IEEEauthorblockA{University of Napoli ``Federico II''\\Napoli, Italy}
\and
\IEEEauthorblockN{C\'esar S\'anchez}
\IEEEauthorblockA{IMDEA Software Institute\\Madrid, Spain}}

\maketitle

\begin{abstract}
  Hyperproperties are a modern specification paradigm that extends
  trace properties to express properties of sets of traces.
  Temporal logics for hyperproperties studied in the literature,
  including $\HLTL$, assume a synchronous semantics and enjoy a
  decidable model checking problem.
  In this paper, we introduce two asynchronous and orthogonal
  extensions of $\HLTL$, namely \emph{Stuttering $\HLTL$} ($\SHLTL$)
  and \emph{Context $\HLTL$} ($\CHLTL$).
  Both of these extensions are useful, for instance, to formulate
  asynchronous variants of information-flow security properties.
  We show that for these logics, model checking is in general
  undecidable.
  On the positive side, for each of them, we identify a fragment with
  a decidable model checking that subsumes $\HLTL$ and that can
  express meaningful asynchronous requirements.
  Moreover, we provide the exact computational complexity of model
  checking for these two fragments which, for the $\SHLTL$ fragment,
  coincides with that of the strictly less expressive logic $\HLTL$.
  %
\end{abstract}

\IEEEpeerreviewmaketitle

\section{Introduction}\label{sec:Intro}

Model checking is a well-established formal method technique to
automatically check for global correctness of finite-state
systems~\cite{Clarke81ctl,Queille81verification}.
Properties for model checking are usually specified in classic
\emph{regular} temporal logics such as \LTL, \CTL, and
\CTLStar~\cite{Pnueli77,EmersonH86}, which provide temporal modalities
for describing the ordering of events along individual execution
traces of a system (trace properties).
These logics lack mechanisms to relate distinct traces, which is
required to express important information-flow security policies.
Examples include properties that compare observations made by an
external low-security agent along traces resulting from different
values of not directly observable inputs.
These security requirements go, in general, beyond regular properties.

In the last decade, a novel specification paradigm has been introduced
that generalizes traditional regular trace properties by properties of
sets of traces, the so called
\emph{hyperproperties}~\cite{ClarksonS10}.
Hyperproperties relate distinct traces and are useful to formalize
information-flow security policies like
noninterference~\cite{goguen1982security,McLean96} and observational
determinism~\cite{ZdancewicM03}.
Hyperproperties also have applications in other settings, such as the
symmetric access to critical resources in distributed
protocols~\cite{FinkbeinerRS15}.
Many temporal logics for hyperproperties have been proposed in the
literature~\cite{DimitrovaFKRS12,ClarksonFKMRS14,BozzelliMP15,Rabe2016,FinkbeinerH16,CoenenFHH19,GutsfeldMO20}
for which model checking is decidable, including
$\HLTL$~\cite{ClarksonFKMRS14}, $\HCTLStar$~\cite{ClarksonFKMRS14},
$\HQPTL$~\cite{Rabe2016,CoenenFHH19}, and $\HPDL$~\cite{GutsfeldMO20}
which extend $\LTL$, $\CTLStar$, $\QPTL$~\cite{SistlaVW87}, and
$\PDL$~\cite{FischerL79}, respectively, by explicit first-order
quantification over traces and trace variables to refer to multiple
traces at the same time.
%

In all these logics, the mechanism for comparing distinct traces is
synchronous and consists in evaluating the temporal modalities by a
lockstepwise traversal of all the traces assigned to the quantified
trace variables.
This represents a limitation in various
scenarios~\cite{Finkbeiner17,GutsfeldOO21} where properties of
interest are instead asynchronous, since these properties require to
relate traces at distinct time points which can be arbitrarily far
from each other.
Recently, two powerful and expressively equivalent formalisms have
been introduced~\cite{GutsfeldOO21} for specifying asynchronous
linear-time hyperproperties.
The first one, called $\HU$, is based on a fixpoint calculus, while
the second one exploits parity multi-tape Alternating Asynchronous
Word Automata (\AAWA)~\cite{GutsfeldOO21} for expressing the
quantifier-free part of a
specification.
%
%
$\AAWA$ allow to specify very expressive non-regular multi-trace
properties.
As a matter of fact, model checking against $\HU$ or its $\AAWA$-based
counterpart is undecidable even for the (quantifier) alternation-free
fragment.
In~\cite{GutsfeldOO21}, two decidable subclasses of parity $\AAWA$ are
identified which lead to $\HU$ fragments with decidable model
checking.
%
%
Both subclasses express only $\omega$-regular languages over the
synchronous product of tuples of traces with fixed arity.
In particular, the first subclass captures all the multi-trace regular
properties and the corresponding $\HU$ fragment (the so called
$k$-synchronous fragment for a given $k\geq 1$) is strictly more
expressive than $\HLTL$, while the second subclass is non-elementarily
more succinct than the first subclass and leads to a $\HU$ fragment
which seems expressively incomparable with $\HLTL$.

\paragraph*{Our contribution}
In this paper, we introduce two novel expressive extensions of $\HLTL$
for the specification of asynchronous linear-time hyperproperties,
obtained by adding intuitive logical features that provide natural
modeling facilities.
The first formalism, that we call \emph{Stuttering $\HLTL$}
($\SHLTL$), is useful in information-flow security settings where an
observer is not time-sensitive, \ie the observer cannot distinguish
consecutive time points along an execution having the same
observation.
This requires asynchronously matching sequences of observations along
distinct execution traces.
The novel feature of $\SHLTL$ consists in temporal modalities
parameterized by finite sets $\Gamma$ of $\LTL$ formulas.
These modalities are evaluated along sub-traces of the given traces
which are obtained by removing ``redundant'' positions with respect to
the pointwise evaluation of the $\LTL$ formulas in $\Gamma$.
We show that model checking against the alternation-free fragment of
$\SHLTL$ is already undecidable.
On the positive side, we identify a meaningful fragment, called
\emph{simple $\SHLTL$}, with a decidable model-checking problem, which
strictly subsumes $\HLTL$ and allows to express asynchronous variants
of relevant security properties such as
noninterference~\cite{goguen1982security} and observational
determinism~\cite{ZdancewicM03}.
Moreover, model checking against simple $\SHLTL$ has the same
computational complexity as model checking for $\HLTL$ and is
expressively incomparable with the two $\HU$ fragments previously
described
In particular, unlike these two fragments and $\HLTL$, quantifier-free
formulas of simple $\SHLTL$ can express some non-regular multi-trace
properties.

The second logic that we introduce, called \emph{Context $\HLTL$}
($\CHLTL$), allows to specify complex combinations of asynchronous and
synchronous requirements.
$\CHLTL$ extends $\HLTL$ by unary modalities parameterized by a
non-empty subset $C$ of trace variables (\emph{context}) which
restrict the evaluation of the temporal modalities to the traces
associated with the variables in $C$.
Like $\SHLTL$,  model checking against $\CHLTL$ is undecidable.
In this case we exhibit a fragment of $\CHLTL$ which is, in a certain
sense, maximal with respect to the decidability of model checking, and
extends $\HLTL$ by allowing the comparison of different traces at time
points of bounded distance.
This fragment is subsumed by $k$-synchronous $\HU$, and we establish
that for a fixed quantifier alternation depth, model checking this
fragment is exponentially harder than model checking $\HLTL$.

With regard to expressiveness issues, both $\CHLTL$ and $\SHLTL$ are
subsumed by $\HU$.
On the other hand, questions concerning the comparison of the
expressive power of $\SHLTL$ and $\CHLTL$ are left open: we conjecture
that (simple) $\SHLTL$ and $\CHLTL$ are expressively incomparable.

\paragraph*{Related work}
Another linear-time temporal logic, called asynchronous $\HLTL$
($\AHLTL$), for pure asynchronous hyperproperties and useful for
asynchronous security analysis has been recently introduced
in~\cite{CoenenFS21}.
This logic, which is expressively incomparable with $\HLTL$, adds an
additional quantification layer over the so called trajectory
variables.
Intuitively, a \emph{trajectory} describes an asynchronous
interleaving of the traces in the current multi-trace where single
steps of distinct traces can overlap, and temporal modalities, indexed
by trajectory variables, are evaluated along the associated
trajectories.
The logic has an undecidable model-checking problem,
but~\cite{CoenenFS21} identifies practical fragments with decidable
model-checking, and reports an empirical evaluation. 
 
Other known logics for linear-time hyperproperties are the first-order
logic with equal-level predicate $\FOE$~\cite{Finkbeiner017} and its
monadic second-order extension $\MOE$~\cite{CoenenFHH19}.
We conjecture that these logics are expressively incomparable with
$\HU$, $\CHLTL$, and $\SHLTL$.
For instance, we believe that $\MOE$ cannot express counting
properties requiring that two segments along two different traces at
an unbounded distance from each other have the same length.
This kind of requirements can be instead expressed in $\CHLTL$ and
$\HU$.
Proving these conjectures are left for future work.


\section{Preliminaries}
\label{sec-prelim}

%
Given $i,j\in\N$, we use $[i,j]$ for the set of natural numbers $h$
such that $i\leq h\leq j$, $[i,j)$ for the set of natural numbers $h$
such that $i\leq h< j$, and by $[i,\infty]$ the set of natural numbers
$h$ such that $h\geq i$.

We fix a \emph{finite} set $\AP$ of atomic propositions.
A \emph{trace} is an infinite word over $2^{\AP}$. A \emph{pointed
  trace} is a pair $(\pi,i)$ consisting of a trace $\pi$ and a
position $i\in\N$ along $\pi$.

For a word $w$ over some alphabet $\Sigma$, $|w|$ is the length of $w$
($|w|=\infty$ if $w$ is infinite), for each $0\leq i<|w|$, $w(i)$ is
the $(i+1)^{th}$ symbol of $w$, and $w^{i}$ is the suffix of $w$ from
position $i$, i.e., the word $w(i)w(i+1)\ldots$
\details{Given $n\geq 1$ and an $n$-tuple
  $\vect{w}=(w_1,\ldots,w_n)$ of infinite words over
  $\Sigma$, we identify $\vect{w}$ with the infinite word
  over $\Sigma^{n}$ whose $(i+1)^{th}$ symbol, for each $i\geq 0$, is
  given by $(w_1(i),\ldots,w_n(i))$.  }

Given $n,h\in\N$ and integer constants $c>1$, $\Tower_c(h,n)$
denotes a tower of exponentials of base $c$, height $h$, and argument
$n$: $\Tower_c(0,n)=n$ and $\Tower_c(h+1,n)=c^{\Tower_c(h,n)}$.
For each $h\in\N$, we denote by $h$-\EXPSPACE\ the class of languages
decided by deterministic Turing machines bounded in space by functions
of $n$ in $O(\Tower_c(h,n^d))$ for some integer constants $c> 1$ and
$d\geq 1$.
Note that $0$-\EXPSPACE\ coincides with \PSPACE.

 \subsection{Linear-time Temporal Logic (\LTL)}

 We recall syntax and semantics of $\LTL$~\cite{Pnueli77}.
 Formulas $\theta$ of $\LTL$ over the set $\AP$ of atomic propositions
 are defined as follows:
\[
\theta ::=   p  \ | \ \neg \theta \ | \ \theta \wedge \theta \ | \ \Next \theta   | \ \theta \Until \theta
\]
where $p\in \AP$ and  $\Next$ and $\Until$ are the
``next'' and ``until'' temporal modalities respectively.
The logic  is interpreted over pointed traces $(\pi,i)$. The satisfaction relation $(\pi,i)\models \theta$, meaning that formula $\theta$
holds at position $i$ along $\pi$, is inductively defined as follows (we omit the semantics for the Boolean connectives which standard):
\[ \begin{array}{ll}
     (\pi, i) \models  p  &  \Leftrightarrow  p \in \pi(i)\\
     (\pi,i)\models  \Next\theta &  \Leftrightarrow (\pi,i+1)\models  \theta\\
     (\pi, i) \models  \theta_1\Until \theta_2  &
  \Leftrightarrow  \text{for some $j\geq i$}: (\pi, j)
  \models  \theta_2
  \text{ and } \\
  & \phantom{\Leftrightarrow}\,\, (\pi, k) \models   \theta_1 \text{ for all }i\leq k<j
\end{array} \]
A trace $\pi$ is a model of $\theta$, written $\pi\models \theta$, if
$(\pi,0)\models \theta$.


\subsection{Linear-time Hyper Specifications}\label{sec:HyperSpecifications}

In this section, we consider an abstract notion of linear-time hyper
specifications which are interpreted over sets of traces.
For the rest of the discussion, we fix an ordered set $\Var$ of trace
variables.

A \emph{pointed trace assignment $\Pi$} is a partial mapping over
$\Var$, assigning to each trace variable $x$ in its domain $\Dom(\Pi)$
a pointed trace.
The assignment $\Pi$ is initial if for each $x\in \Dom(\Pi)$, $\Pi(x)$
is of the form $(\pi,0)$ for some trace $\pi$.
For a trace variable $x\in \Var$ and a pointed trace $(\pi,i)$, we
denote by $\Pi[x\mapsto (\pi,i)]$ the pointed trace assignment having
domain $\Dom(\Pi)\cup \{x\}$ that behaves as $\Pi$ on the variables in
$\Dom(\Pi)\setminus \{x\}$ and assigns to $x$ the pointed trace
$(\pi,i)$.

A \emph{multi-trace specification} $\S(x_1,\ldots,x_n)$ is a
specification (in some formalism) parameterized by a subset
$\{x_1,\ldots,x_n\}$ of $\Var$ whose semantics is represented by a set
$\Upsilon$ of pointed trace assignments with domain
$\{x_1,\ldots,x_n\}$.
Depending on the given formalism, one can restrict to consider only
\emph{initial} pointed trace assignments.
We write $\Pi\models S(x_1,\ldots,x_n)$ for the trace assignments
$\Pi$ in $\Upsilon$.

Given a class $\C$ of multi-trace specifications, linear-time hyper
expressions $\xi$ over $\C$ are defined as follows:
\[
\xi ::=    \exists x.  \xi \ | \ \forall x.  \xi \ | \ S(x_1,\ldots,x_n)\]
where $x,x_1,\ldots,x_n\in \Var$, $S(x_1,\ldots,x_n)$ is a multi-trace
specification in the class $\C$, $\exists x$ is the \emph{hyper}
existential trace quantifier for variable $x$, and $\forall x$ the
hyper universal trace quantifier for $x$.
Informally, the expression $\exists x. \xi$ requires that for some
trace $\pi$ in the given set of traces, $\xi$ holds when $x$ is mapped
to $(\pi,0)$, while $\forall x. \xi$ requires that all traces $\pi$,
$\xi$ holds when $x$ is mapped to $(\pi,0)$.
We say that an expression $\xi$ is a \emph{sentence} if every variable
$x_i$ in the multi-trace specification $S(x_1,\ldots,x_n)$ of $\xi$ is
in the scope of a quantifier for the trace variable $x_i$, and
distinct occurrences of quantifiers are associated with distinct trace
variables.
The \emph{quantifier alternation depth} of $\xi$ is the number of
switches between $\exists$ and $\forall$ quantifiers in the quantifier
prefix of $\xi$.

For instance, $\HLTL$ sentences~\cite{ClarksonFKMRS14} are linear-time
hyper sentences over the class of multi-trace specifications obtained
by $\LTL$ formulas by replacing atomic propositions $p$ with
relativized versions $\Rel{p}{x}$, where $x\in\Var$.
Intuitively, $\Rel{p}{x}$ asserts that $p$ holds at the pointed trace
assigned to $x$.


Given a linear-time expression $\xi$ with multi-trace specification
$S(x_1,\ldots,x_n)$, a set $\Lang$ of traces, and an initial pointed
trace assignment $\Pi$ such that $\Dom(\Pi)$ contains the variables in
$\{x_1,\ldots,x_n\}$ which are not in the scope of a quantifier, and
the traces referenced by $\Pi$ are in $\Lang$, the satisfaction
relation $(\Lang,\Pi)\models \xi$ is inductively defined as follows:
  \[ \begin{array}{ll}
  (\Lang,\Pi) \models  \exists x. \xi  &  \Leftrightarrow \text{ for some trace } \pi\in\Lang:\, \\
  & \phantom{\Leftrightarrow}\,\,\,\, (\Lang,\Pi[x\mapsto (\pi,0)]) \models  \xi \\
    (\Lang,\Pi) \models  \forall x. \xi  &  \Leftrightarrow \text{ for each trace } \pi\in\Lang:\, \\
    & \phantom{\Leftrightarrow}\,\,\,\, (\Lang,\Pi[x\mapsto (\pi,0)]) \models  \xi \\
(\Lang,\Pi) \models S(x_1,\ldots,x_n)  &  \Leftrightarrow \Pi\models S(x_1,\ldots,x_n)
\end{array}
\]
If $\xi$ is a sentence, we write $\Lang\models \xi$ to mean that
$(\Lang,\Pi_\emptyset)\models \xi$, where $\Pi_\emptyset$ is the empty
assignment.

\subsection{Kripke Structures and Asynchronous Word Automata}

\noindent \textbf{Kripke structures.}
A \emph{Kripke structure $($over
  $\AP$$)$} is a tuple $\Ku=\tpl{S,S_0,E,V}$, where
$S$ is a set of states, $S_0\subseteq
S$ is the set of initial states, $E\subseteq S\times
S$ is a transition relation which is total in the first argument (\ie
for each $s\in S$ there is a $t\in S$ with $(s,t)\in E$), and $V:S
\rightarrow 2^{\AP}$ is an
\emph{$\AP$-valuation} assigning to each state
$s$ the set of propositions 
holding at $s$.
The Kripke structure $\Ku$ is finite if $S$ is finite.

A \emph{path} $\nu= t_0,t_1,\ldots$ of
$\Ku$ is an infinite word over $S$ such that $t_0\in
S_0$ is an initial state and for all $i\geq 0$, $(t_{i},t_{i+1})\in
E$.
The path $\nu= t_0,t_1,\ldots$ induces the trace
$V(t_0)V(t_1)\ldots$.
A finite path of $\Ku$ is a non-empty finite infix of some path of
$\Ku$. 
A \emph{trace} of $\Ku$ is a trace induced by some path of $\Ku$.
We denote by $\Lang(\Ku)$ the set of traces of
$\Ku$.
We also consider \emph{fair finite Kripke structures}
$(\Ku,F)$, that is, finite Kripke structures
$\Ku$ equipped with a subset $F$ of $\Ku$-states.
A path $\nu$ of $\Ku$ is \emph{$F$-fair} if
$\nu$ visits infinitely many times states in
$F$.
We denote by $\Lang(\Ku,F)$ the set of traces of
$\Ku$ associated with the $F$-fair paths of
$\Ku$.
We consider the following decision problems for a given class
$\C$ of multi-trace specifications:
\begin{itemize}
\item \emph{Model checking problem}: checking for a given finite
  Kripke structure $\Ku$ and a linear-time hyper sentence
  $\xi$ over $\C$, whether $\Lang(\Ku)\models
  \xi$ (we also write $\Ku\models \xi$).
\item \emph{Fair model checking problem}: checking for a given fair
  finite Kripke structure
  $(\Ku,F)$ and a linear-time hyper sentence $\xi$ over
  $\C$, whether $\Lang(\Ku,F)\models \xi$.
\end{itemize}

Note that model checking reduces to fair model checking for the
special case where $F$ coincides with the set of
$\Ku$-states.\vspace{0.2cm}

\noindent \textbf{Labeled Trees.}
A tree $T$ is a prefix closed subset of $\N^{*}$.
Elements of $T$ are called nodes and the empty word $\varepsilon$ is
the root of $T$.
For $x\in T$, a child of $x$ in $T$ is a node of the form $x\cdot n$
for some $n\in \N$.
A path of $T$ is a maximal sequence $\pi$ of nodes such that
$\pi(0)=\varepsilon$ and $\pi(i)$ is a child in $T$ of $\pi(i-1)$ for
all $0<i<|\pi|$.
For an alphabet $\Sigma$, a $\Sigma$-labeled tree is a pair
$\tpl{T, \Lab}$ consisting of a tree and a labelling
$\Lab:T \Into \Sigma$ assigning to each node in $T$ a symbol in
$\Sigma$. \vspace{0.2cm}

\noindent\textbf{Asynchronous Word Automata.}
We consider a variant of the framework of alternating asynchronous
word automata introduced in~\cite{GutsfeldOO21}, a class of
finite-state automata for the asynchronous traversal of multiple
infinite words.
Given a set $X$, $\B^{+}(X)$ denotes the set of \emph{positive}
Boolean formulas over $X$, that is, Boolean formulas built from
elements in $X$ using $\vee$ and $\wedge$ (we also allow the formulas
$\true$ and $\false$).
\details{A subset $Y$ of $X$ \emph{satisfies} $\theta\in\B_+(X)$ iff
  the truth assignment that assigns $\true$ to the elements in $Y$ and
  $\false$ to the elements of $X\setminus Y$ satisfies $\theta$.
}
Let $n\geq 1$.
A B\"{u}chi $\NAAWA$ over a finite alphabet $\Sigma$ is a tuple
$\Au=\tpl{\Sigma,q_0,Q,\rho,F}$, where $Q$ is a finite set of
(control) states, $q_0\in Q$ is the initial state,
$\rho:Q\times \Sigma^{n}\rightarrow \B^{+}(Q\times [1,n])$ is the
transition function, and $F\subseteq Q$ is a set of accepting states.
Intuitively, an $\NAAWA$ has access to $n$ infinite input words over
$\Sigma$ and at each step, it activates multiple copies.
For each copy, there is exactly one input word whose current input
symbol is consumed, so the reading head of such word moves one
position to the right.

In particular, the target of a move of $\Au$ is encoded by a pair
$(q,i)\in A\times [1,n]$, where $q$ indicates the target state while
the direction $i$ indicates on which input word to progress.

Formally, a run of $\Au$ over an $n$-tuple
$\vect{w}=(w_1,\ldots,w_n)$ of infinite words over $\Sigma$
is a $(Q\times \N^{n})$-labeled tree $r=\tpl{T_r,\Lab_r}$, where each
node of $T_r$ labelled by $(q,\wp)$ with $\wp=(i_1,\ldots,i_n)$
describes a copy of the automaton that is in state $q$ and reads the
$(i_h+1)^{th}$ symbol of the input word $w_h$ for each $h\in[1,n]$.
Moreover, we require that
\begin{compactitem}
\item $r(\varepsilon)=(q_0,(0,\ldots,0))$, that is, initially, the
  automaton is in state $q_0$ reading the first position of each input
  word);
\item for each $\tau\in T_r$ with $\Lab_r(\tau)=(q,(i_1\ldots,i_n))$,
  there is a (possibly empty) set
  $\{(q_1,d_1),\ldots,(q_k,d_k)\}\subseteq Q\times [1,n]$ for some
  $k\geq 0$ satisfying $\delta(q,(w_1(i_1),\ldots,w_n(i_n)))$ such
  that $\tau$ has $k$ children $\tau_1,\ldots,\tau_k$ and
  $\Lab_r(\tau_j) = (q_j,(i_1,\ldots, i_{d_j}+1,\ldots,i_n))$ for all
  $1\leq j\leq k$.
\end{compactitem}  
The run $r$ is accepting if each infinite path $\nu$ visits infinitely
often nodes labeled by some accepting state in $F$.
We denote by $\Lang(\Au)$ the set of $n$-tuples $\vect{w}$
of infinite words over $\Sigma$ such that there is an accepting run of
$\Au$ over $\vect{w}$.

For each $k\geq 1$, we also consider \emph{$k$-synchronous} B\"{u}chi
$\NAAWA$~\cite{GutsfeldOO21}, which is a B\"{u}chi $\NAAWA$ such that
for each run $r$ and for each node of $r$ with label $(q,\wp)$, the
position vector $\wp=(i_1,\ldots,i_n)$ satisfies
$|i_\ell - i_{\ell'}|\leq k$ for all $\ell,\ell'\in [1,k]$.
Intuitively, a $k$-synchronous $\NAAWA$ can never be ahead more than
$k$ steps in one direction with respect to the others.
Note that $\AAWA$ over $2^{\AP}$ can be seen as multi-trace
specifications.
It is known~\cite{GutsfeldOO21} that model checking against
linear-time hyper sentences over B\"{u}chi $\AAWA$ is undecidable, and
the problem becomes decidable when one restricts to consider
$k$-synchronous B\"{u}chi $\AAWA$.
In particular, the following holds.

\begin{proposition}[\cite{GutsfeldOO21}]
  \label{prop:KsynchronousAWA}
  Let $d\in \N$.
  The (fair) model checking problem against linear-time hyper
  sentences of quantifier alternation depth $d$ over the class of
  $k$-synchronous B\"{u}chi $\NAAWA$ over $2^{\AP}$ ($k,n,\AP$ being
  input parameters of the problem instances) is
  $(d+1)$-\EXPSPACE-complete, and for a fixed formula, it is
  $(d-1)$-\EXPSPACE-complete for $d>0$ and \NLOGSPACE-complete
  otherwise.
\end{proposition}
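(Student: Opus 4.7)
The plan is to prove both directions (upper and lower bound) by reducing to the corresponding result for HyperLTL, after first converting $k$-synchronous $\NAAWA$ to a synchronous alternating device. For the upper bound, my first step would be to show that any $k$-synchronous B\"uchi $\NAAWA$ $\Au$ can be translated to an equivalent alternating B\"uchi word automaton $\Au'$ over the product alphabet $(2^{\AP})^{n}$, where $\Au'$ reads the pointwise synchronous product of the $n$ input traces. The idea is to exploit the $k$-synchrony invariant $|i_\ell-i_{\ell'}|\leq k$: any reachable position vector decomposes as a base position $m=\min_\ell i_\ell$ plus an offset vector in $[0,k]^{n}$, and $\Au'$ can keep the offset vector together with an $n$-indexed buffer of at most $k$ already-read symbols per direction inside its control state. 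Whenever the source $\NAAWA$ moves in direction $d$, $\Au'$ either consumes a buffered symbol (if direction $d$ lags the base) or advances the base synchronously and refills the buffers. This yields an ABW whose size is polynomial in $|\Au|$ and simply exponential in $k$, $n$, and $|\AP|$.

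The second step is to plug this ABW into the standard HyperLTL-style model-checking pipeline. Given a Kripke structure $\Ku$ and a hyper sentence with quantifier prefix of alternation depth $d$, I would proceed by induction on the prefix from innermost to outermost: at the base case, form the product of $\Au'$ with the $n$-fold self-product of $\Ku$ (with fair variants adapted via a generalized B\"uchi construction), obtaining an alternating B\"uchi automaton whose nonemptiness can be checked in \PSPACE. Each universal $\forall$-quantifier in the prefix is then handled by dualizing to an existential one and complementing, using the Miyano--Hayashi translation from ABW to NBW followed by nondeterministic B\"uchi complementation; each such step incurs a single exponential blow-up. Accumulating $d$ alternations on top of the initial exponential yields a $(d+1)$-fold exponential space procedure in the combined size, and a $(d-1)$-fold one when the formula is fixed (with \NLOGSPACE\ reachability at $d=0$).

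For the lower bound, my approach would be to reduce from the hardness of $\HLTL$ model checking at the same alternation depth, established in the literature via tiling encodings of $k$-\EXPSPACE\ computations. Since every $\HLTL$ formula can be compiled into a $1$-synchronous $\NAAWA$ of polynomial size (translate each relativized $\LTL$ sub-formula into an alternating B\"uchi automaton on its associated direction and combine directions in lockstep), the $(d+1)$-\EXPSPACE\ and $(d-1)$-\EXPSPACE\ hardness bounds transfer verbatim. For the \NLOGSPACE\ lower bound at $d=0$, a reduction from graph reachability on $\Ku$ using a trivial $1$-synchronous $\NAAWA$ suffices.

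The main obstacle I expect is in the first step: the ABW translation must carefully preserve the B\"uchi acceptance condition across the asynchronous-to-synchronous simulation, since a run of $\Au$ may visit accepting states arbitrarily late in some directions while the base position catches up. Handling this correctly requires encoding, per branch, a flag indicating whether an accepting state was seen since the last time the buffers were aligned, and ensuring the NBW obtained via Miyano--Hayashi for the subsequent complementation stages respects the fairness constraints coming from the Kripke structure. Once this bookkeeping is set up, the remaining complexity accounting is routine and matches Proposition~\ref{prop:KsynchronousAWA}.
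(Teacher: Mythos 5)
This proposition is imported from~\cite{GutsfeldOO21}; the paper gives no proof of its own, so your argument can only be judged on its merits. Your upper bound follows the standard route and is essentially sound: eliminate asynchrony by turning a $k$-synchronous B\"uchi $\NAAWA$ into a synchronous alternating automaton over $(2^{\AP})^{n}$ at the cost of one exponential, then run the usual quantifier-elimination pipeline with one complementation per alternation, which correctly yields $(d+1)$-\EXPSPACE\ membership and the fixed-formula bounds. One technical caveat: since every $\NAAWA$ move advances exactly one head while every move of the synchronous automaton must consume one product symbol, the simulation cannot be ``one move per step'' as you describe it; you must batch the at most $n(k+1)$ asynchronous moves that can occur between two consecutive increments of the minimum coordinate into a single synchronous transition (this is where the bounded lookahead buffer is actually used). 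This is fixable and does not change the complexity.

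The genuine gap is in the lower bound. Model checking $\HLTL$ at quantifier alternation depth $d$ is $d$-\EXPSPACE-complete --- this is precisely the bound from~\cite{Rabe2016} that the paper invokes in Subsection~\ref{sec:DecidableSHLTLFragments} --- not $(d+1)$-\EXPSPACE-complete. Hence embedding $\HLTL$ into $1$-synchronous $\NAAWA$ of polynomial size transfers only $d$-\EXPSPACE-hardness, one exponential short of the claim; your assertion that the $(d+1)$-\EXPSPACE\ hardness ``transfers verbatim'' is false. The fixed-formula $(d-1)$-\EXPSPACE\ hardness and the \NLOGSPACE\ hardness at $d=0$ do transfer this way, but the combined-complexity $(d+1)$-\EXPSPACE\ lower bound cannot come from $\HLTL$ alone: it must exploit the extra succinctness of the automaton model, namely that a polynomial-size $k$-synchronous $\NAAWA$ can enforce a yardstick one exponential longer than a polynomial-size $\HLTL$ formula can (length of order $\Tower_2(2,n)$ rather than $n\cdot 2^{n}$ at the base of the recursive counter construction of~\cite{Rabe2016}). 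This is exactly the additional work the present paper has to carry out for bounded $\CHLTL$ in the proof of Theorem~\ref{theo:ComplexityBoundedCHLTL} (via the claim established in Lemma~\ref{lemma:lowerboundBoundedCHLTL}), and an analogous construction is needed here; as written, your proof establishes only $d$-\EXPSPACE-hardness.
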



\section{Stuttering $\HLTL$}\label{sec:StutteringHLTL}

In this section we introduce an asynchronous extension of $\HLTL$
that we call \emph{stuttering $\HLTL$} ($\SHLTL$ for short).
Stuttering $\HLTL$ is obtained by exploiting relativized versions of
the temporal modalities with respect to finite sets $\Gamma$ of $\LTL$
formulas.
Intuitively, these modalities are evaluated along sub-traces of the
given traces which are obtained by removing ``redundant'' positions
with respect to the pointwise evaluation of the $\LTL$ formulas in
$\Gamma$.
The rest of this section is organized as follows.
In Subsection~\ref{sec:RelativizedStuttering} we introduce a
generalization of the classical notion of stuttering.
Then, in Subsection~\ref{sec:SyntaxSemanticsHLTLS} we define the
syntax and semantics of $\SHLTL$ and provide some examples of
specifications in this novel logic.
Finally, we investigate the model checking problem against $\SHLTL$.
In Subsection~\ref{sec:UndecidabilitySHLTL}, we show that the problem
is in general undecidable, and in
Subsection~\ref{sec:DecidableSHLTLFragments}, we identify a meaningful
fragment of $\SHLTL$ for which model checking is shown to be
decidable.

\subsection{\LTL-Relativized Stuttering}\label{sec:RelativizedStuttering}

Classically, a trace is stutter-free if there are no consecutive
positions having the same propositional valuation unless the valuation
is repeated ad-infinitum.
We can associate to each trace a unique stutter-free trace by removing
``redundant'' positions.
In this subsection, we generalize these notions with respect to the
pointwise evaluation of a finite set of $\LTL$ formulas.

\begin{definition}[$\LTL$ stutter factorization]
  Let $\Gamma$ be a finite set of $\LTL$ formulas and $\pi$ a
  trace.
  The \emph{$\Gamma$-stutter factorization of $\pi$} is the unique
  increasing sequence of positions $\{i_k\}_{k\in[0,m_{\infty}]}$ for
  some $m_{\infty}\in \N\cup\{\infty\}$ such that the following holds
  for all $j <m_{\infty}$:
  \begin{compactitem}
  \item $i_0=0$ and $i_j<i_{j+1}$;
  \item for each $\theta\in \Gamma$, the truth value of $\theta$ along
    the segment $[i_j,i_{j+1})$ does not change, \ie for all
    $h,k\in [i_j,i_{j+1})$, $(\pi,h)\models \theta$ iff
    $(\pi,k)\models \theta$, and the same holds for the infinite segment
    $[m_{\infty}, \infty]$ in case $m_{\infty}\neq \infty$;
  \item the truth value of some formula in $\Gamma$ changes along
    adjacent segments, \ie for some $\theta\in \Gamma$ (depending on
    $j$), $(\pi,i_j)\models \theta$ iff $(\pi,i_{j+1})\not\models \theta$.
\end{compactitem}
\end{definition}

Thus, the $\Gamma$-stutter factorization
$\{i_k\}_{k\in[0,m_{\infty}]}$ of $\pi$ partitions the trace in
adjacent non-empty segments such that the valuation of formulas in
$\Gamma$ does not change within a segment, and changes in moving from
a segment to the adjacent ones.
This factorization induces in a natural way a trace obtained by
selecting the first positions of the finite segments and all the
positions of the unique infinite segment, if any.
Formally, the \emph{$\Gamma$-stutter trace of $\pi$}, denoted by
$\stfr_{\Gamma}(\pi)$, is defined as follows:
\begin{itemize}
\item $\stfr_{\Gamma}(\pi)\DefinedAs\pi(i_0)\pi(i_1)\ldots $ if $m_{\infty}=\infty$;
\item $\stfr_{\Gamma}(\pi)\DefinedAs\pi(i_0)\pi(i_1)\ldots \pi(i_{m_{\infty}-1})\cdot \pi^{i_{m_{\infty}}}$ if $m_{\infty}\neq\infty$.
\end{itemize}

As an example, assume that $\AP =\{p,q,r\}$ and let
$\Gamma= \{p\Until q\}$.
Given $h,k\geq 1$, let $\pi_{h,k}$ be the trace
$\pi_{h,k} = p^{h} q^{k} r^{\omega}$.
All these traces have the same $\Gamma$-stutter trace which is given
by $p r^{\omega}$.
This is because for $p\Until q$ is true for $p^{h'}q^{k'} r^{\omega}$
and $q^{k'} r^{\omega}$ (for all $h'$ and $k'$) and false for $r^{\omega}$.
Therefore, there are two change points which are the initial position
(with valuation $p$) and the first $r$ position.

We say that a trace $\pi$ is \emph{$\Gamma$-stutter free} if it
coincides with its $\Gamma$-stutter trace,
\ie~$\stfr_{\Gamma}(\pi)= \pi$.
Note that if $\Gamma=\emptyset$, each trace is $\emptyset$-stutter
free, i.e.~$\stfr_{\emptyset}(\pi)= \pi$.
\details{ The notion of $\Gamma$-stutter trace of a trace $\pi$
  induces an equivalence relation over the set of traces: two traces
  $\pi$ and $\pi'$ are \emph{$\Gamma$-stutter equivalent} if
  $\stfr_{\Gamma}(\pi)= \stfr_{\Gamma}(\pi')$.
  If $\Gamma=\emptyset$, then $\emptyset$-stutter equivalence classes
  are singletons, and each trace is $\emptyset$-stutter free.
  Note that the classical notion of stutter equivalence corresponds to
  the $\AP$-stutter equivalence.
  In this case, we have that the $\AP$-stutter trace of a trace $\pi$
  is $\AP$-stutter equivalent to $\pi$.
  This property evidently does not hold for arbitrary finite sets of
  $\LTL$ formulas.
  For instance, let us consider the trace $\pi= p r q r^{\omega}$.
  We have that $\stfr_{\{p\Until q\}}(\pi)= p q r^{\omega}$.
  On the other hand, the $\{p\Until q\}$-stutter free trace of
  $p q r^{\omega}$ is $p r^{\omega}$.
}

For each finite set $\Gamma$ of $\LTL$ formulas, we define the
successor function $\SUCC_\Gamma$ as follows.
The function maps a pointed trace $(\pi,i)$ to the trace $(\pi,\ell)$
where $\ell$ is the first position of the segment in the
$\Gamma$-stutter factorization of $\pi$ following the $i$-segment (if
the $i$-segment is not the last one).
If the segment is the last one then $\ell$ is $i+1$.
Formally,

\begin{definition}[Relativized Successor]
  Let $\Gamma$ be a finite set of $\LTL$ formulas, $\pi$ a trace with
  $\Gamma$-stutter factorization $\{i_k\}_{k\in[0,m_{\infty}]}$, and
  $i\geq 0$.
  The \emph{$\Gamma$-successor of the pointed trace $(\pi,i)$},
  denoted by $\SUCC_\Gamma(\pi,i)$, is the trace $(\pi,\ell)$ where
  position $\ell$ is defined as follows: if there is $j<m_{\infty}$
  such that $i\in [i_j,i_{j+1}-1]$, then $\ell= i_{j+1}$; otherwise
  (note that in this case $m_{\infty}\neq \infty$ and
  $i\geq i_{m_{\infty}}$), $\ell=i+1$.
\end{definition}

\subsection{Syntax and Semantics of Stuttering $\HLTL$ ($\SHLTL$)}%
\label{sec:SyntaxSemanticsHLTLS}

$\SHLTL$ formulas over the given finite set $\AP$ of atomic
propositions and finite set $\Var$ of trace variables are linear-time
hyper expressions over multi-trace specifications $\psi$, called
\emph{$\SHLTL$ quantifier-free formulas}, where $\psi$ is defined by
the following syntax:
\[
  \psi ::=    \top \ | \  \Rel{p}{x}  \ | \ \neg \psi \ | \ \psi \wedge \psi \ | \ \Next_\Gamma \psi  \ | \ \psi \Until_\Gamma \psi
\]
where $p\in \AP$, $x\in \Var$, $\Gamma$ is a finite set of $\LTL$
formulas over $\AP$, and $\Next_\Gamma$ and $\Until_\Gamma$ are the
stutter-relativized versions of the $\LTL$ temporal modalities.

When $\Gamma$ is empty, we omit the subscript $\Gamma$ in the temporal
modalities.
Informally, $\Rel{p}{x}$ asserts that $p$ holds at the pointed trace
assigned to $x$, while the relativized temporal modalities
$\Next_\Gamma$ and $\Until_\Gamma$ are evaluated by a lockstepwise
traversal of the $\Gamma$-stutter traces associated with the currently
quantified traces.
We also exploit the standard logical connectives $\vee$ (disjunction)
and $\rightarrow$ (implication) as abbreviations, and the
\emph{relativized eventually} modality
$\Eventually_\Gamma \psi \DefinedAs \top \Until_\Gamma \psi$ and its dual
$\Always_\Gamma \psi \DefinedAs \neg \Eventually_\Gamma\neg \psi$
(\emph{relativized always}).
The size $|\xi|$ of a $\SHLTL$ (quantifier-free) formula $\xi$ is the
number of distinct sub-formulas of $\xi$ plus the number of distinct
sub-formulas of those $\LTL$ formulas occurring in the subscripts of
the temporal modalities.

For each finite set $\Gamma$ of $\LTL$ formulas, we denote by
$\SHLTL[\Gamma]$ the syntactical fragment of $\SHLTL$ where the
subscript of each temporal modality is $\Gamma$.
Note that standard $\HLTL$ corresponds to the fragment
$\SHLTL[\emptyset]$.
In the following, for each $\SHLTL$ formula $\varphi$, we denote by
$\HLTL(\varphi)$ the $\HLTL$ formula obtained from $\varphi$ by
replacing each relativized temporal modality in $\varphi$ with its
$\emptyset$-relativized version.
\vspace{0.2cm}

\noindent \textbf{Semantics of  $\SHLTL$ Quantifier-free Formulas.}
Given a finite set $\Gamma$ of $\LTL$ formulas, we extend in a natural
way the relativized successor function $\SUCC_\Gamma$ to pointed
trace assignments $\Pi$ as follows: the \emph{$\Gamma$-successor
  $\SUCC_\Gamma(\Pi)$ of $\Pi$} is the pointed trace assignment with
domain $\Dom(\Pi)$ associating to each $x\in\Dom(\Pi)$ the
$\Gamma$-successor $\SUCC_\Gamma(\Pi(x))$ of the pointed trace
$\Pi(x)$.
For each $j\in \N$, we use $\SUCC^{\,j}_\Gamma$ for the function
obtained by $j$ applications of the function $\SUCC_\Gamma$.

Given a $\SHLTL$ quantifier-free formula $\psi$ and a pointed trace
assignment $\Pi$ such that $\Dom(\Pi)$ contains the trace variables
occurring in $\psi$, the satisfaction relation $\Pi\models \psi$ is
inductively defined as follows (we omit the semantics of the Boolean
connectives which is standard):
  \[ \begin{array}{ll}
 \Pi  \models \Rel{p}{x}  &  \Leftrightarrow  \Pi(x)=(\pi,i) \text{ and }p\in \pi(i)\\
   \Pi  \models  \Next_\Gamma\psi &  \Leftrightarrow  \SUCC_\Gamma(\Pi) \models  \psi\\
   \Pi  \models  \psi_1\Until_\Gamma \psi_2  &
  \Leftrightarrow  \text{for some }i\geq 0:\,   \SUCC^{\,i}_\Gamma(\Pi) \models  \psi_2 \text{ and }\\
   & \phantom{\Leftrightarrow}  \SUCC^{\,k}_\Gamma(\Pi) \models  \psi_1 \text{ for all } 0\leq k<i
\end{array} \]
In the following, given a set $\Gamma$ of $\LTL$ formulas, we also
consider the model checking problem against the fragment
$\SHLTL[\Gamma]$ of $\SHLTL$.
For this fragment, by the semantics of $\SHLTL$, we deduce the
following fact, where for a set $\Lang$ of traces,
$\stfr_\Gamma(\Lang)$ denotes the set of $\Gamma$-stutter traces over
the traces in $\Lang$,
\ie $\stfr_\Gamma(\Lang)\DefinedAs\{\stfr_\Gamma(\pi)\mid \pi\in \Lang\}$.

\begin{remark}\label{remark:connection}
  A set $\Lang$ of traces is a model of a $\SHLTL[\Gamma]$ sentence
  $\varphi$ if and only if $\stfr_\Gamma(\Lang)$ is a model of the
  $\HLTL$ sentence $\HLTL(\varphi)$.
\end{remark}

Let $\SLTL$ be the extension of $\LTL$ obtained by adding the
stutter-relativized versions of the $\LTL$ temporal modalities.
Note that $\SLTL$ formulas correspond to \emph{one-variable} $\SHLTL$
quantifier-free formulas.
We can show that $\SLTL$ has the same expressiveness as $\LTL$, as
established by the following Proposition~\ref{prop:FromSLTLtoLTL}
(missing proofs of all the claims in this paper can be found
in the Appendix).
On the other hand, $\SHLTL$ quantifier-free formulas are in general
more expressive than $\HLTL$ quantifier-free formulas.
In particular, while model checking $\HLTL$ is known to be
decidable~\cite{ClarksonFKMRS14}, model checking the alternation-free
fragment of $\SHLTL$ is already undecidable (see
Subsection~\ref{sec:UndecidabilitySHLTL}).

\newcounter{prop-FromSLTLtoLTL}
\setcounter{prop-FromSLTLtoLTL}{\value{proposition}}
\newcounter{sec-FromSLTLtoLTL}
\setcounter{sec-FromSLTLtoLTL}{\value{section}}

\begin{proposition}\label{prop:FromSLTLtoLTL}
  Given a $\SLTL$ formula, one can construct in polynomial time an
  equivalent $\LTL$ formula.
\end{proposition}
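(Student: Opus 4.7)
The plan is to give a syntax-directed translation $T$ from $\SLTL$ to $\LTL$ that preserves models and is computable in polynomial time. Boolean connectives and the ordinary temporal operators are translated homomorphically: $T(p):=p$, $T(\neg\varphi):=\neg T(\varphi)$, $T(\varphi_1\wedge\varphi_2):=T(\varphi_1)\wedge T(\varphi_2)$, $T(\Next\varphi):=\Next T(\varphi)$, and $T(\varphi_1\Until\varphi_2):=T(\varphi_1)\Until T(\varphi_2)$. All the work is concentrated in the two stutter-relativized cases.

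The central observation is that the $\Gamma$-stutter factorization of a trace is definable in plain $\LTL$. For each $\Gamma$ I introduce three auxiliary formulas of total size linear in $\sum_{\theta\in\Gamma}|\theta|$:
\begin{align*}
\alpha_\Gamma &:= \textstyle\bigvee_{\theta\in\Gamma}\bigl((\theta\wedge\Next\neg\theta)\vee(\neg\theta\wedge\Next\theta)\bigr),\\
\beta_\Gamma &:= \Always\neg\alpha_\Gamma,\\
\sigma_\Gamma &:= \alpha_\Gamma\vee\beta_\Gamma.
\end{align*}
A direct reading of the definition of the factorization shows that $\alpha_\Gamma$ holds at $j$ iff $j$ is the last position of a finite segment, $\beta_\Gamma$ holds at $j$ iff $j$ lies inside the infinite tail (which exists precisely when every $\theta\in\Gamma$ eventually stabilizes). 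Combining these, for every $i$ the set of positions reached from $(\pi,i)$ by iterating $\SUCC_\Gamma$ is exactly $\{i\}\cup\{k+1 \mid k\geq i\text{ and }\sigma_\Gamma\text{ holds at }k\}$, and $\sigma_\Gamma$ is eventually true from every position.

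With this in hand, I would set
\[T(\Next_\Gamma\varphi)\;:=\;\neg\sigma_\Gamma\Until\bigl(\sigma_\Gamma\wedge\Next T(\varphi)\bigr),\]
which walks forward to the first separator $k\geq i$ and then demands $T(\varphi)$ at $k+1=\SUCC_\Gamma(\pi,i)$, and
\[T(\varphi_1\Until_\Gamma\varphi_2)\;:=\;T(\varphi_2)\vee\Bigl(T(\varphi_1)\wedge\bigl[(\neg\sigma_\Gamma\vee\Next T(\varphi_1))\Until(\sigma_\Gamma\wedge\Next T(\varphi_2))\bigr]\Bigr).\]
The outer disjunction accounts for the current position $i$, which always belongs to the walk but is not preceded by a separator, while the inner $\Until$ only constrains positions $k+1$ whose predecessor $k$ satisfies $\sigma_\Gamma$, i.e., precisely the successive $\SUCC_\Gamma$-iterates of $i$. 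Correctness is then a routine structural induction on $\varphi$ that invokes the characterisation above at each inductive step.

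For the complexity bound, each translation step adds only $O(|\sigma_\Gamma|)+O(1)$ new distinct sub-formulas on top of those already produced for the immediate sub-formulas of $\varphi$. Since the size measure of the paper counts distinct sub-formulas (DAG size), the two occurrences of $T(\varphi_1)$ and $T(\varphi_2)$ in the $\Until_\Gamma$ clause are shared for free, so $|T(\varphi)|$ is polynomial in $|\varphi|$ and the construction is clearly performed in polynomial time. The point I expect to require the most care is the $\Until_\Gamma$ clause: the initial position is selected without any preceding separator, and the tail regime (where $\alpha_\Gamma$ is absent yet $\SUCC_\Gamma$ still acts as the ordinary successor) must be handled uniformly — these are precisely what $\beta_\Gamma$ and the outer disjunction are designed to absorb.
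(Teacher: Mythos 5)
Your proposal is correct and follows essentially the same route as the paper's proof: both express the change points of the $\Gamma$-stutter factorization in plain $\LTL$ (your $\alpha_\Gamma$ is the negation of the paper's $\theta_\Gamma=\bigwedge_{\xi\in\Gamma}(\xi\leftrightarrow\Next\xi)$) and then simulate $\SUCC_\Gamma$ with an $\Until$ that skips non-separator positions, relying on distinct-sub-formula (DAG) size for the polynomial bound. The only difference is cosmetic: the paper case-splits on $\Always\theta_\Gamma$ to handle the infinite tail segment separately, whereas you absorb that case uniformly into $\sigma_\Gamma=\alpha_\Gamma\vee\beta_\Gamma$.
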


We now show that $\SHLTL$ is strictly less expressive than the
fixpoint calculus $\HU$ introduced in~\cite{GutsfeldOO21}.
Indeed, $\HU$ cannot be embedded into $\SHLTL$ since for singleton
trace sets, $\HU$ characterizes the class of $\omega$-regular
languages, while $\SHLTL$ corresponds to $\LTL$, which consequently,
captures only a strict subclass of $\omega$-regular languages.
Moreover, by the following result and the fact that parity $\AAWA$ are
equivalent to $\HU$ quantifier-free formulas~\cite{GutsfeldOO21}, we
obtain that $\SHLTL$ is subsumed by $\HU$.

\newcounter{prop-FromStutteringHLTLtoAAWA}
\setcounter{prop-FromStutteringHLTLtoAAWA}{\value{proposition}}
\newcounter{sec-FromStutteringHLTLtoAAWA}
\setcounter{sec-FromStutteringHLTLtoAAWA}{\value{section}}

\begin{proposition}\label{prop:FromStutteringHLTLtoAAWA}
  Given a $\SHLTL$ quantifier-free formula $\psi$ with trace variables
  $x_1,\ldots,x_n$, one can build in polynomial time a B\"{u}chi
  $\NAAWA$ $\Au_\psi$ such that $\Lang(\Au_\psi)$ is the set of
  $n$-tuples $(\pi_1,\ldots,\pi_n)$ of traces so that
  $(\{x_1\mapsto (\pi_1,0),\ldots,x_1\mapsto (\pi_n,0)\})\models
  \psi$.
\end{proposition}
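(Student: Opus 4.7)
My plan is to adapt the classical $\LTL$-to-alternating-B\"uchi automaton construction to the asynchronous setting of $\SHLTL$, producing $\Au_\psi$ whose states are essentially the sub-formulas of $\psi$ plus a modest amount of auxiliary bookkeeping for the $\Gamma$-stutter advancements. I first put $\psi$ into positive normal form by pushing negations down to the atoms $\Rel{p}{x}$, exploiting the identity $\neg\Next_\Gamma\varphi\equiv\Next_\Gamma\neg\varphi$ (valid since $\SUCC_\Gamma$ is deterministic) and introducing the standard dual $\Release_\Gamma$ of $\Until_\Gamma$; this at most doubles the size of $\psi$. For every $\LTL$ formula $\theta$ occurring as a subscript inside $\psi$ and every $i\in[1,n]$, I invoke the linear Vardi--Wolper translation to obtain a sub-automaton $\Au_{\theta,i}$ of size $O(|\theta|)$ whose states track the truth of $\theta$ at the current position of $x_i$; viewed as an $\NAAWA$ it only moves head $i$, and $\Au_{\neg\theta,i}$ is built analogously.

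The states of $\Au_\psi$ then consist of (a) the sub-formulas of $\psi$ in positive normal form, driving the principal evaluation; (b) the states of all $\Au_{\theta,i}$ and $\Au_{\neg\theta,i}$, used as on-the-fly certificates for $\LTL$ truth values; and (c) auxiliary advancement states $(\varphi',i)$ indexed by a sub-formula $\varphi'$ and $i\in[1,n]$, used to sequence per-trace $\Gamma$-stutter advancement during a $\Next_\Gamma\varphi'$ step. Transitions for atoms, Boolean connectives, and the unfolding of $\Until_\Gamma$ and $\Release_\Gamma$ are classical: $\psi_1\Until_\Gamma\psi_2$ unfolds to $\psi_2\vee(\psi_1\wedge\Next_\Gamma(\psi_1\Until_\Gamma\psi_2))$ with $\psi_1\Until_\Gamma\psi_2$ excluded from the B\"uchi accepting set so that pending eventualities must be discharged, and analogously for $\Release_\Gamma$; the advancement states $(\varphi',i)$ are also excluded from the B\"uchi set, so the gadget cannot postpone advancement forever.

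The heart of the construction is the advancement gadget for $\Next_\Gamma\varphi'$. From $(\varphi',i)$ with heads at $(\ell_1,\ldots,\ell_n)$, the gadget must advance head $i$ from $\ell_i$ to $\SUCC_\Gamma(\pi_i,\ell_i)$ and then transition to $(\varphi',i+1)$, or to $\varphi'$ itself if $i=n$. A single advancement step existentially chooses among three alternatives, \emph{continue}, \emph{stop}, and \emph{final-block}, each of which advances head $i$ by one. The \emph{continue} branch universally spawns, for each $\theta\in\Gamma$, a pair of verification children that jointly certify that $\theta$'s truth value on trace $x_i$ is the same at the old and new head-$i$ positions, using $\Au_{\theta,i}$ and $\Au_{\neg\theta,i}$ and choosing each child's direction so that its initial head-$i$ position is the intended one (direction $d\neq i$ preserves head $i$ at the parent's position, direction $d=i$ advances it). The \emph{stop} branch dually selects some $\theta^*\in\Gamma$, certifies via $\Au_{\theta^*,i}$ and $\Au_{\neg\theta^*,i}$ that $\theta^*$'s value differs between the two positions, and transitions to $(\varphi',i+1)$. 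The \emph{final-block} branch universally asserts, again via the $\Au_{\theta,i}$ family, that every $\theta\in\Gamma$ remains constant on the suffix of $\pi_i$ from $\ell_i$ onward, and then also transitions to $(\varphi',i+1)$; this models the $\ell=\ell_i+1$ case in the definition of $\SUCC_\Gamma$.

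Correctness is by structural induction on $\psi$: Boolean and until/release cases are classical, and the $\Next_\Gamma$ case reduces to a gadget lemma showing that the \emph{continue} steps witness ``still inside the same $\Gamma$-block on trace $i$'' while \emph{stop}/\emph{final-block} witnesses the $\Gamma$-boundary, respectively the final-block condition. Counting states gives $O(|\psi|)$ principal states, at most $n\cdot|\psi|$ sub-automata of size $O(|\psi|)$ each, and $O(n\cdot|\psi|)$ advancement states, hence a polynomial total; all transition formulas are constructed syntactically in polynomial time. The main obstacle is designing the advancement gadget without either enumerating the $2^{|\Gamma|}$ possible $\Gamma$-valuations (as a state-based valuation tracker would) or requiring cross-branch synchronization (not provided by $\NAAWA$). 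The plan circumvents both by using universal branching to factor the $\Gamma$-valuation check $\theta$-by-$\theta$ and by exploiting the fact that a child with direction $d\neq i$ leaves head $i$ at its parent position, so ``old-position'' and ``new-position'' certificates for the same head can coexist in one run; the degenerate case $n=1$ is handled separately by applying Proposition~\ref{prop:FromSLTLtoLTL} and then the standard $\LTL$-to-alternating-B\"uchi translation.
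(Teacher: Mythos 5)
Your overall architecture matches the paper's: states are subformulas plus per-direction advancement states, the $\Gamma$-successor is reached by a non-accepting loop that advances head $i$ one step at a time, and auxiliary single-trace $\LTL$ automata certify the guesses. The one genuine gap is in the \emph{continue} branch of your advancement gadget. As you define it, \emph{continue} only certifies that every $\theta\in\Gamma$ has the same truth value at the old and new head-$i$ positions; it does not certify that the current segment is not the last (infinite) one. When the current position $j$ lies in the infinite segment of the $\Gamma$-stutter factorization of $\pi_i$, the certificates for \emph{continue} are all satisfiable, so an accepting run may take $k\geq 1$ \emph{continue} steps and then \emph{final-block}, landing head $i$ at $j+k+1$, whereas $\SUCC_\Gamma(\pi_i,j)=(\pi_i,j+1)$. (Excluding advancement states from the B\"uchi set only forbids infinitely many \emph{continue} steps, not finitely many spurious ones.) Concretely, with $\Gamma=\{p\}$, $\pi_i=\{p\}^\omega$ and $\varphi'$ true at position $2$ but false at position $1$, your automaton accepts $\Next_\Gamma\varphi'$ at position $0$ although the semantics rejects it; since acceptance is existential over runs, this makes $\Lang(\Au_\psi)$ too large.

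The fix is exactly the extra conjunct the paper builds into its guess verification: its analogue of \emph{continue} activates an automaton for the negation of $\xi_\Gamma=\bigwedge_{\theta\in\Gamma}\Always(\theta\leftrightarrow\Next\theta)\vee\bigvee_{\theta\in\Gamma}(\theta\leftrightarrow\neg\Next\theta)$, i.e.\ it checks both that no formula of $\Gamma$ changes at the next step \emph{and} that some formula eventually changes, which rules out the infinite segment. In your factored design it suffices to add to \emph{continue} one further universally spawned child certifying $\bigvee_{\theta\in\Gamma}\Eventually(\theta\leftrightarrow\neg\Next\theta)$ on $\pi_i$ from the current position. With that repair, the rest of your argument (the per-$\theta$ old/new-position certificates obtained via the direction trick, the \emph{stop} and \emph{final-block} branches, the B\"uchi condition, the separate $n=1$ case, and the polynomial size count) goes through and amounts to a reorganization of the paper's construction, which checks the single aggregate formula $\xi_\Gamma$ at the current position instead of factoring the test $\theta$-by-$\theta$ across the two positions.
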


\begin{proof}
  By exploiting the dual $\Release_\Gamma$ (\emph{relativized
    release}) of the until modality $\Until_\Gamma$, we can assume
  without loss of generality that $\psi$ is in \emph{negation normal
    form}, so negation is applied only to relativized atomic
  propositions.
  Given a finite set $\Gamma$ of $\LTL$ formulas, let $\xi_{\Gamma}$ be
  the following $\LTL$ formula
  \[
    \xi_\Gamma = \displaystyle{\bigwedge_{\xi\in \Gamma} \Always(\xi \leftrightarrow \Next \xi) \vee \bigvee_{\xi\in\Gamma} (\xi \leftrightarrow \neg \Next\xi)}
  \]
  \noindent The $\LTL$ formula $\xi_\Gamma$ has as models the traces
  $\pi$ such that the first segment in the factorization of $\pi$ is
  either infinite or has length $1$.
  For each $i\in [1,n]$, we can easily construct in linear time (in
  the number of distinct sub-formulas in $\Gamma$) a B\"{u}chi
  $\NAAWA$ $\Au_{\Gamma,i}$ 
  accepting the $n$-tuples $(\pi_1,\ldots,\pi_n)$ of traces so that
  the $i^{th}$ component $\pi_i$ is a model of
  $\xi_\Gamma$.
  Similarly, we can also define $\overline{\Au}_{\Gamma,i}$ accepting
  the $n$-tuples $(\pi_1,\ldots,\pi_n)$ of traces so that the $i^{th}$
  component $\pi_i$ is not a model of $\xi_\Gamma$.  
 
  Let $\Upsilon$ be the set of subscripts $\Gamma$ occurring in the
  temporal modalities of $\psi$.
  Then by exploiting the automata $\Au_{\Gamma,i}$ and
  $\overline{\Au}_{\Gamma,i}$ where $\Gamma\in\Upsilon$ and
  $i\in [1,n]$, we construct a B\"{u}chi $\NAAWA$ $\Au_\psi$
  satisfying Proposition~\ref{prop:FromStutteringHLTLtoAAWA} as
  follows.
  Given an input multi-trace $(\pi_1,\ldots,\pi_n)$, the behaviour of
  the automaton $\Au_\psi$ is subdivided in phases.
  At the beginning of each phase with current position vector
  $\wp=(j_1,\ldots,j_n)$, $\Au_\psi$ keeps track in its state of the
  currently processed sub-formula $\theta$ of $\psi$.
  By the transition function, $\theta$ is processed in accordance with
  the `local' characterization of the semantics of the Boolean
  connectives and the relativized temporal modalities.
  Whenever $\theta$ is of the form $\theta_1\Until_\Gamma\theta_2$ or
  $\theta_1\Release_\Gamma\theta_2$, or $\theta$ is argument of a
  sub-formula of the form $\Next_\Gamma\theta$, and $\Au_\psi$ has to
  check that $\theta$ holds at the position vector
  $(\SUCC_\Gamma(\pi_1,j_1),\ldots,\SUCC_\Gamma(\pi_1,j_1))$,
  $\Au_\psi$ moves along the directions $1,\ldots,n$ in turns.
  During the movement along direction $i\in [1,n]$, the automaton is
  in state $(\theta,i,\Gamma)$ and guesses that either
  \begin{inparaenum}[(i)]
  \item the next input position is in the current segment of the
    $\Gamma$-factorization of $\pi_i$ and this segment is not the last
    one, or
  \item the previous condition does not hold, hence, the next input
    position corresponds to $\SUCC_\Gamma(\pi_1,j_1)$.
  \end{inparaenum}
  In the first (resp., second case) case, it activates in parallel a
  copy of the auxiliary automaton $\overline{\Au}_{\Gamma,i}$ (resp.,
  $\Au_{\Gamma,i}$) for checking that the guess is correct and moves
  one position to the right along $\pi_i$.
  Moreover, in the first case, $\Au_\psi$ remains in state
  $(\theta,i,\Gamma)$, while in the second case, the automaton changes
  direction by moving to the state $(\theta,i+1,\Gamma)$ if $i<n$, and
  starts a new phase by moving at state $\theta$ otherwise.
\end{proof}

\noindent \textbf{Examples of Specifications.}
Stuttering $\HLTL$ can express relevant information-flow security
properties for asynchronous frameworks such as distributed systems or
cryptographic protocols.
These properties specify how information may propagate from input to
outputs by comparing distinct executions of a system possibly at
different points of time.
Assume that each user is classified either at a low security level,
representing public information, or at a high level, representing
secret information.
Moreover, let $LI$ be a set of propositions for describing inputs of
low users, $LO$ propositions that describe outputs of low users, and
$HI$ be a set of propositions for representing inputs of high users.
As a first example, we consider the asynchronous variant of the
\emph{noninterference} property, as defined by Goguen and
Meseguer~\cite{goguen1982security}, asserting that the observations of
low users do not change when all high inputs are removed.
In an asynchronous setting, a user cannot infer that a transition
occurred if consecutive observations remain unchanged.
In other terms, steps observed by a user do not correspond to the
same number of steps in different executions of the system.
Thus, since a low user can only observe the low output propositions,
we require that for each trace $\pi$, there is a trace $\pi'$ with no
high inputs such that the $LO$-stutter traces of $\pi$ and $\pi'$
coincide, that is, $\pi$ and $\pi'$ are indistinguishable to a low
user.
This can be expressed in $\SHLTL$ as follows, where proposition
$p_\emptyset$ denotes absence of high input.
\[
\forall x.\, \exists y.\,\Always \Rel{p_\emptyset}{y} \wedge \Always_{LO} \bigwedge_{p\in LO} (\Rel{p}{x} \leftrightarrow \Rel{p}{y})
\]
Assuming that the observations are not time-sensitive,
noninterference cannot in general be expressed in $\HLTL$ unless one
only considers systems where all the traces are $LO$-stutter free.
Another relevant example is \emph{generalized noninterference} as
formulated in~\cite{McLean96} which allows nondeterminism in the
low-observable behavior and requires for all system traces $\pi$ and
$\pi'$, the existence of an interleaved trace $\pi''$ whose high
inputs are the same as $\pi$ and whose low outputs are the same as
$\pi'$.
This property can be expressed in $\SHLTL$ as follows:
\[
\forall x.\,\forall y.\, \exists z.\, \Always_{HI} \bigwedge_{p\in HI} (\Rel{p}{y} \leftrightarrow \Rel{p}{z}) \wedge \Always_{LO} \bigwedge_{p\in LO} (\Rel{p}{y} \leftrightarrow \Rel{p}{z})
\]
Another classical security policy is \emph{observational determinism}
specifying that traces which have the same initial low inputs are
indistinguishable to a low user.
The following $\SHLTL$ formula captures observational determinism with
equivalence of traces up to stuttering as formulated
in~\cite{ZdancewicM03}.
\[
\forall x.\,\forall y.\,   \bigwedge_{p\in LI}  (\Rel{p}{x} \leftrightarrow \Rel{p}{y}) \rightarrow \Always_{LO} \bigwedge_{p\in LO} (\Rel{p}{x} \leftrightarrow \Rel{p}{y})
\]
Lastly, an interesting feature of $\SHLTL$ is the possibility of
combining asynchrony and synchrony constraints.
We illustrate this ability by considering an unbounded time
requirement which has application in the analysis of procedural
software:
``\emph{whenever a procedure $A$ is invoked, the procedure terminates,
  but there is no bound on the running time of $A$ that upper-bounds
  the duration of $A$ on all traces}''.
In other words, for every candidate bound $k$ there is a trace in
which $A$ is invoked and terminates with a longer duration.
We assume, crucially, that procedure $A$ can be activated at most once
along an execution, and we let $c_A$ characterize the call to $A$ and
$r_A$ the return.
This requirement can be expressed in $\SHLTL$ as follows.
%
\[
  \forall x.\,\exists y.\, \Eventually c_A[x] \rightarrow
  \begin{pmatrix}
    \begin{array}{@{}l@{}}
      \Eventually c_A[y] \wedge\Next_{\{\Eventually c_A\}}
      \begin{pmatrix}
        \begin{array}{@{}c@{}}
          \neg \Rel{r_A}{x} \wedge \neg \Rel{r_A}{y}\\
          \Until_{\{\Eventually c_A\}}\\
          \Rel{r_A}{x}\wedge \neg\Rel{r_A}{y}                   
        \end{array}
      \end{pmatrix}\end{array}\end{pmatrix}
\]
Essentially, we claim there is always a call to $A$ that runs for a
longer period of time than any candidate maximum duration.
Note that the occurrence of the relativized until
$\Until_{\{\Eventually c_A\}}$ in the previous formula can be
equivalently replaced by the standard until $\Until$.
We have used the relativized until since the previous formula is in
the fragment investigated in
Subsection~\ref{sec:DecidableSHLTLFragments} below which enjoys a
decidable model checking problem.

\subsection{Undecidability of Model Checking $\SHLTL$}
\label{sec:UndecidabilitySHLTL}

In this section, we establish the following negative result.

\begin{theorem}\label{theo:UndecidabilitySHLTL}
  The model checking problem for $\SHLTL$ is undecidable even for the
  $\SHLTL$ fragment where the quantifier alternation depth is $0$ and
  the stutter-relativized temporal modalities just use two sets of
  $\LTL$-formulas where one is empty and the other one consists of
  atomic propositions only.
\end{theorem}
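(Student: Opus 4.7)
The plan is to reduce from the halting problem for deterministic two-counter Minsky machines, which is undecidable. Given such a machine $M$, I will build a finite Kripke structure $\Ku_M$ and a $\SHLTL$ sentence $\varphi_M$ satisfying the syntactic restrictions of the theorem (quantifier alternation depth $0$ and only two stutter-sets used: $\Gamma_1=\emptyset$ and some $\Gamma_2\subseteq\AP$) such that $\Ku_M\models\varphi_M$ iff $M$ halts from its initial configuration. A configuration $(q,c_1,c_2)$ of $M$ is encoded over $\AP$ by the finite word $q\cdot p_1^{c_1}\cdot p_2^{c_2}$, where distinct propositions mark the control states of $M$ and the two propositions $p_1,p_2$ mark counter-$1$ and counter-$2$ cells; a separator symbol $\#$ and a $\halt$-marker are added. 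The Kripke structure $\Ku_M$ is the straightforward finite automaton generating all traces of the form $C_0\#C_1\#C_2\#\cdots$ terminated by an infinite loop on $\halt$, where each $C_i$ is a syntactically well-formed configuration; crucially, $\Ku_M$ does not enforce that adjacent configurations are connected by a legal $M$-transition.

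The sentence has the form $\varphi_M \DefinedAs \exists x.\,\exists y.\,\psi(x,y)$, a single existential block so the alternation depth is $0$. The trace $x$ is guessed to encode a halting computation while $y$ is guessed to be the one-configuration shift of $x$, i.e.\ the $i^{\text{th}}$ configuration of $y$ equals the $(i+1)^{\text{th}}$ configuration of $x$. The quantifier-free body $\psi$ enforces $(a)$ that $x$ starts with the initial configuration and eventually reaches $\halt$; $(b)$ the shift alignment between $x$ and $y$; and $(c)$ that for every $i$ the transition from the $i^{\text{th}}$ configuration of $x$ to the $i^{\text{th}}$ of $y$ matches the deterministic instruction of $M$ at the control state of $x$'s $i^{\text{th}}$ configuration. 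Parts $(a)$ and $(b)$ can already be expressed essentially in pure $\HLTL$. The interesting part is $(c)$, which is where the stutter modalities enter: with $\Gamma_2$ chosen as a set of atomic propositions (for instance $\{p_1\}$, or a slightly larger subset of $\AP$ if needed to describe block boundaries), the relativized $\Next_{\Gamma_2}$ advances each trace independently to the end of its own current maximal block of constant $\Gamma_2$-valuation, while the ordinary $\Next_\emptyset$ advances both traces lockstep by one position.

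The main obstacle, and the crux of the proof, is step $(c)$: verifying an individual Minsky instruction across two configurations of different total length. To check, say, an increment of counter $1$, $\psi$ must assert that the $p_1$-block inside the $i^{\text{th}}$ configuration of $y$ is exactly one position longer than the $p_1$-block inside the $i^{\text{th}}$ configuration of $x$, and that the $p_2$-blocks in these two configurations coincide in length. This is provably impossible in plain $\HLTL$ because the positions at which the $p_2$-blocks begin in the two configurations differ by $c_1$, which is unbounded; the lockstep traversal of $\HLTL$ cannot realign them. The stutter-relativized operator $\Next_{\{p_1\}}$ removes exactly this obstacle: applied once, it lets each of $x$ and $y$ skip across the whole of its current $p_1$-block, so that after one such relativized step the two assignments simultaneously point to the start of their respective $p_2$-blocks irrespective of the counter-$1$ values. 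A careful combination of $\Next_\emptyset, \Next_{\Gamma_2}$ and their $\Until$-variants then encodes each instruction of $M$ by a fixed-size quantifier-free sub-formula of $\SHLTL[\Gamma_1,\Gamma_2]$. The tuning of $\Gamma_2$ so that it captures all required block boundaries while remaining a subset of $\AP$, and the verification that no quantifier alternation sneaks into the body, are the delicate engineering steps; once they are in place, $\varphi_M$ has the shape required by the theorem and the correctness of the reduction follows directly from the semantics of $\SHLTL$, yielding undecidability.
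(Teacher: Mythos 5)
Your reduction goes through Minsky two-counter machines, whereas the paper reduces from Post's Correspondence Problem, and this difference matters: the PCP route lets the paper avoid all counter arithmetic. There, the two existentially quantified traces encode the two sides of a candidate solution (each letter of $u^{\ell}_i$ tagged with $p_i$ and $q_\ell$, the last letter of each block tagged with $\#$); a plain synchronous $\Always$ checks that the concatenated $\Sigma$-words coincide letter by letter, and a single relativized $\Always_{\Gamma}$ with $\Gamma=\{\#,p_1,\ldots,p_n\}$ checks that the index sequences coincide, because the $\Gamma$-stutter trace collapses each tile to a bounded number of positions. No per-instruction gadgets, no length comparisons between blocks of unbounded size are needed. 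Your route is essentially the one the paper uses for the \emph{Context} HyperLTL undecidability result, where the context modality $\tpl{\{x_2\}}$ is available to advance one trace independently to the next configuration; in $\SHLTL$ that tool is absent, which is precisely why the instruction-checking step you defer is the hard part.

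And that step is where your argument has a genuine gap, in two places. First, your encoding $q\cdot p_1^{c_1}\cdot p_2^{c_2}$ produces empty blocks when a counter is zero (unavoidably so, since the initial configuration has both counters at zero); an empty block changes the number of segments in the $\Gamma_2$-stutter factorization of that configuration, so the $k$-th segment of $x$ and the $k$-th segment of $y$ no longer correspond to the same syntactic component, and the entire alignment on which part $(c)$ rests collapses. (This is fixable by padding, e.g.\ encoding $c$ by $c+1$ occurrences of $p_i$ --- the paper makes the analogous ``every word has length at least $2$'' assumption --- but you do not address it.) Second, and more seriously, your sketch suggests comparing the $p_1$-blocks synchronously with $\Until_\emptyset$ and \emph{then} using $\Next_{\{p_1\}}$ to jump both traces to their $p_2$-blocks. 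After the synchronous excursion the two traces sit at different offsets relative to their segment boundaries (one is already past its $p_1$-block, the other is not, in the increment case), so a subsequent $\Next_{\Gamma_2}$ sends them to non-corresponding segments. A correct construction must instead launch each counter's length comparison as a separate guarded conjunct under a top-level $\Always_{\Gamma_2}$, so that every comparison starts from a common segment boundary; since this realignment mechanism is exactly the ``careful combination'' you assert rather than construct, the reduction as written does not yet establish the theorem.
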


Theorem~\ref{theo:UndecidabilitySHLTL} is proved by a reduction from
the Post's Correspondence Problem (PCP, for short)~\cite{HU79}.
We fix an instance $\Instance$ of PCP which is a tuple
 \[
 \Instance=\tpl{\tpl{u_1^{1},\ldots,u_n^{1}},\tpl{u_1^{2},\ldots,u_n^{2}}}
 \]
 where $n\geq 1$ and for each $1\leq i\leq n$, $u^{1}_i$ and $u^{2}_i$
 are non-empty finite words over a finite alphabet $\Sigma$.
 Let $[n]=\{1,\ldots,n\}$.
 A \emph{solution of} $\Instance$ is a non-empty sequence
 $i_1,i_2,\ldots,i_k$ of integers in $[n]$ such that
 $u^{1}_{i_1}\cdot u^{1}_{i_2}\cdot\ldots\cdot
 u^{1}_{i_k}=u^{2}_{i_1}\cdot u^{2}_{i_2}\cdot\ldots\cdot
 u^{2}_{i_k}$.
 PCP consists in checking for a given instance $\Instance$, whether
 $\Instance$ admits a solution. This problem is known to be
 undecidable~\cite{HU79}.\vspace{0.1cm}

 \noindent \textbf{Assumption.}
 We assume without loss of generality that each word $u^{\ell}_i$ of
 $\Instance$, where $i\in [n]$ and $\ell=1,2$, has length at least
 $2$.
 Indeed, if this assumption does not hold, we consider the instance
 $\Instance'$ of PCP obtained from $\Instance$ by replacing each word
 $u^{\ell}_i$ of the form $a_1\ldots a_n$ with the word
 $a_1 a_1\ldots a_n a_n$ (i.e., we duplicate each symbol occurring in
 $u_i^{\ell}$).
 Evidently $\Instance'$ has a solution if and only if $\Instance$ has
 a solution. \vspace{0.1cm}

 In order to encode the PCP instance $\Instance$ into an instance of
 the model checking problem for $\SHLTL$, we exploit the following set
 $\AP$ of atomic propositions, where $\#,p_1,\ldots,p_n,q_1,q_2$ are
 fresh symbols not in $\Sigma$.
   \[
   \AP\DefinedAs \Sigma\cup \{\#\}\cup \{p_1\ldots,p_n\}\cup \{q_1,q_2\}
   \]
   Intuitively, for each $i\in [n]$ and $\ell =1,2$, propositions
   $p_i$ and $q_\ell$ are exploited to mark each symbol of the word
   $u_i^{\ell}$ of the instance $\Instance$, while proposition $\#$ is
   used to mark only the last symbol of $u_i^{\ell}$.
   Thus, for the word $u_i^{\ell}$, we denote by
   $[u_i^{\ell},p_i,q_\ell]$ the finite word over $2^{\AP}$ of length
   $|u_i^{\ell}|$ obtained from $u_i^{\ell}$ by marking each symbol of
   $u_i^{\ell}$ with propositions $p_i$ and $q_\ell$ and,
   additionally, by marking the last symbol of $u_i^{\ell}$ with
   proposition $\#$.
   Formally, $[u_i^{\ell},p_i,q_\ell]$ is the finite word over
   $2^{\AP}$ having length $|u_i^{\ell}|$ such that for each
   $0\leq h<|u_i^{\ell}|$,
   $[u_i^{\ell},p_i,q_\ell](h)= \{u_i^{\ell}(h),p_i,q_\ell\}$ if
   $h<|u_i^{\ell}|-1$, and
   $[u_i^{\ell},p_i,q_\ell](h)= \{u_i^{\ell}(h),p_i,q_\ell,\#\}$
   otherwise.
   
   Given a non-empty sequence $i_1,i_2,\ldots,i_k$ of integers in
   $[n]$ and $\ell=1,2$, we encode the word
   $u^{\ell}_{i_1}\cdot u^{\ell}_{i_2}\cdot\ldots\cdot u^{\ell}_{i_k}$
   by the trace, denoted by $\pi^{\ell}_{i_1,\ldots,i_k}$, defined as:
   \[
   \pi^{\ell}_{i_1,\ldots,i_k} \DefinedAs \{\#\}\cdot [u_{i_1}^{\ell},p_{i_1},q_\ell] \cdot \ldots \cdot [u_{i_k}^{\ell},p_{i_k},q_\ell] \cdot \{\#\}^{\omega}
   \]
   Let $\Gamma$ be the set of atomic propositions given by
   $\Gamma=\{\#,p_1,\ldots,p_n\}$.
   We crucially observe that since each word of $\Instance$ has length
   at least $2$, the projection of the $\Gamma$-stutter trace
   $\stfr_{\Gamma}(\pi^{\ell}_{i_1,\ldots,i_k})$ of
   $\pi^{\ell}_{i_1,\ldots,i_k}$ over $\Gamma$ is given by
   \[
   \{\#\} \cdot \{p_{i_1}\} \cdot \{p_{i_1},\#\} \cdot\ldots   \cdot \{p_{i_k}\}\cdot \{p_{i_k},\#\}\cdot \{\#\}^{\omega}
   \]
   Hence, we obtain the following characterization of non-emptiness of
   the set of $\Instance$'s solutions, where a \emph{well-formed
     trace} is a trace of the form $\pi^{\ell}_{i_1,\ldots,i_k}$ for
   some non-empty sequence $i_1,i_2,\ldots,i_k$ of integers in $[n]$
   and $\ell=1,2$.

   \begin{proposition}\label{prop:CharcaterizeSolutions}
     $\Instance$ has some solution \emph{if and only} if there are two
     well-formed traces $\pi_1$ and $\pi_2$ satisfying the following
     conditions, where $\Gamma= \{\#,p_1,\ldots,p_n\}$:
   \begin{compactenum}
   \item for each $\ell=1,2$, $\pi_\ell$ does not contain occurrences
     of propositions $q_{3-\ell}$, i.e. for each $h\in\N$,
     $q_{3-\ell}\notin \pi_\ell(h)$;
   \item the projections of $\pi_1$ and $\pi_2$ over $\Sigma$
     coincide, i.e. for each $h\in\N$ and $p\in \Sigma$,
     $p\in \pi_1(h)$ iff $p\in \pi_2(h)$;
   \item the projections of $\stfr_{\Gamma}(\pi_1)$ and
     $\stfr_{\Gamma}(\pi_2)$ over $\Gamma$ coincide, i.e. for each
     $h\in\N$ and $p\in \Gamma$, $p\in \stfr_{\Gamma}(\pi_1)(h)$ iff
     $p\in \stfr_{\Gamma}(\pi_2)(h)$.
   \end{compactenum}
   \end{proposition}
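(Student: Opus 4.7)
My plan is to prove the two implications separately, with the forward direction being essentially bookkeeping and the backward direction relying on the already-established characterization of the $\Gamma$-stutter projection to decode the index sequence.

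For the $(\Rightarrow)$ direction, given a solution $i_1,\ldots,i_k$ of $\Instance$, I would set $\pi_1\DefinedAs\pi^{1}_{i_1,\ldots,i_k}$ and $\pi_2\DefinedAs\pi^{2}_{i_1,\ldots,i_k}$, which are well-formed by construction. Condition~1 is immediate, since the only $q$-proposition appearing in $\pi_\ell$ is $q_\ell$ (the flanking $\{\#\}$-positions carry neither $q_1$ nor $q_2$). Condition~2 follows because each position of $\pi^\ell_{i_1,\ldots,i_k}$ carries at most one $\Sigma$-symbol, with $\Sigma$-symbols occurring exactly in positions $1,\ldots,N$, where $N$ is the common length of the concatenations $u^1_{i_1}\cdots u^1_{i_k}=u^2_{i_1}\cdots u^2_{i_k}$; the PCP equation then ensures that the two $\Sigma$-projections agree letter by letter, and outside $[1,N]$ neither trace carries a $\Sigma$-symbol. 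Condition~3 is handed to us by the characterization derived in the excerpt: the $\Gamma$-projection of $\stfr_\Gamma(\pi^\ell_{i_1,\ldots,i_k})$ is the word $\{\#\}\cdot\{p_{i_1}\}\cdot\{p_{i_1},\#\}\cdots\{p_{i_k}\}\cdot\{p_{i_k},\#\}\cdot\{\#\}^\omega$, which does not depend on $\ell$.

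For the $(\Leftarrow)$ direction, suppose $\pi_1,\pi_2$ are well-formed and satisfy Conditions~1--3. By well-formedness, each $\pi_\ell$ has the form $\pi^{\ell'_\ell}_{j_1^\ell,\ldots,j_{k_\ell}^\ell}$ for some $\ell'_\ell\in\{1,2\}$ and some sequence in $[n]$. Condition~1 forces $\ell'_\ell=\ell$, since otherwise $\pi_\ell$ would contain $q_{3-\ell}$ in the interior blocks. Using the explicit characterization of the $\Gamma$-projection of $\stfr_\Gamma(\cdot)$ recalled above, Condition~3 equates two infinite words of the form $\{\#\}\cdot\{p_{j_1^\ell}\}\cdot\{p_{j_1^\ell},\#\}\cdots\{p_{j_{k_\ell}^\ell}\}\cdot\{p_{j_{k_\ell}^\ell},\#\}\cdot\{\#\}^\omega$ for $\ell=1,2$. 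Comparing these symbol by symbol and exploiting the strict alternation between $\{p_i\}$ and $\{p_i,\#\}$ entries, I would conclude $k_1=k_2=:k$ and $j_t^1=j_t^2=:i_t$ for every $t\in[1,k]$. Condition~2 then equates the $\Sigma$-projections of $\pi_1$ and $\pi_2$, which are precisely the concatenations $u^1_{i_1}\cdots u^1_{i_k}$ and $u^2_{i_1}\cdots u^2_{i_k}$ (placed in the same positions since they have equal length $N$ by Condition~2); hence $i_1,\ldots,i_k$ is a solution of $\Instance$.

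The delicate point is the unambiguous reconstruction of the index sequence from the $\Gamma$-stutter projection in the backward direction. This is exactly what the standing assumption $|u_i^\ell|\geq 2$ is for: it guarantees that within each block $[u_{i_t}^\ell,p_{i_t},q_\ell]$ the $\Gamma$-factorization produces at least one segment whose $\Gamma$-label is $\{p_{i_t}\}$ (without $\#$) before the boundary segment labelled $\{p_{i_t},\#\}$, so the $\Gamma$-stutter projection exhibits an unambiguous alternating pattern from which $i_t$ can be read off directly. Without this alternation, different index sequences could yield the same $\Gamma$-stutter projection and Condition~3 would not suffice to match the indices pairwise.
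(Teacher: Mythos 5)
Your proof is correct and follows exactly the route the paper intends: the paper itself offers no separate argument beyond the displayed observation that, thanks to the assumption $|u_i^{\ell}|\geq 2$, the $\Gamma$-projection of $\stfr_{\Gamma}(\pi^{\ell}_{i_1,\ldots,i_k})$ is $\{\#\}\cdot\{p_{i_1}\}\cdot\{p_{i_1},\#\}\cdots\{p_{i_k}\}\cdot\{p_{i_k},\#\}\cdot\{\#\}^{\omega}$, and your two directions (instantiating $\pi_\ell=\pi^{\ell}_{i_1,\ldots,i_k}$, then decoding the common index sequence from Condition~3 and the common concatenation from Condition~2) are precisely the bookkeeping that observation is meant to license. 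Your closing remark on why the length-$\geq 2$ assumption is needed for unambiguous decoding matches the paper's stated motivation for that assumption.
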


   By exploiting Proposition~\ref{prop:CharcaterizeSolutions}, we
   construct a finite Kripke structure $\Ku_\Instance$ and a $\SHLTL$
   sentence $\varphi_\Instance$ over $\AP$ whose quantifier
   alternation depth is $0$ and whose temporal modalities are
   parameterized either by the empty set or by
   $\Gamma= \{\#,p_1,\ldots,p_n\}$ such that $\Instance$ has a
   solution if and only if $\Ku_\Instance\models
   \varphi_\Instance$.
   Note that Theorem~\ref{theo:UndecidabilitySHLTL} then follows
   directly by the undecidability of PCP.

   First, we easily deduce the following result concerning the
   construction of the Kripke structure $\Ku_\Instance$.

\newcounter{prop-KripkeBuild}
\setcounter{prop-KripkeBuild}{\value{proposition}}
\newcounter{sec-KripkeBuild}
\setcounter{sec-KripkeBuild}{\value{section}}

\begin{proposition}\label{prop:KripkeBuild}
  One can build in time polynomial in the size of $\Instance$ a finite
  Kripke structure $\Ku_\Instance$ over $\AP$ satisfying the following
  conditions:
   \begin{compactitem}
   \item the set of traces of $\Ku_\Instance$ contains the set of
     well-formed traces;
   \item each trace of $\Ku_\Instance$ having a suffix where $\#$
     always holds is a well-formed trace.
   \end{compactitem}
   \end{proposition}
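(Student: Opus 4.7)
The plan is to directly realize the regular language of well-formed traces (followed by the $\{\#\}^\omega$ loop) as the trace language of a small Kripke structure. Concretely, $\Ku_\Instance$ consists of: (a) an initial state $s_{\init}$ with valuation $\{\#\}$; (b) for each $(\ell,i) \in \{1,2\}\times [n]$, a linear chain $C_{\ell,i}$ of $|u_i^\ell|$ states whose $j$-th state ($0 \le j < |u_i^\ell|$) carries the label $[u_i^\ell,p_i,q_\ell](j)$; (c) a sink state $s_{\halt}$ with valuation $\{\#\}$ and a self-loop. Transitions go from $s_{\init}$ to the first state of every $C_{\ell,i}$; along each chain in order; from the last state of $C_{\ell,i}$ to the first state of every $C_{\ell,j}$ for all $j\in[n]$ (crucially, only for the \emph{same} $\ell$), and also to $s_{\halt}$. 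Totality is immediate, and the number of states is $2 + \sum_{\ell,i} |u_i^\ell|$, which is polynomial in $|\Instance|$.

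For the first clause, a well-formed trace $\pi^\ell_{i_1,\ldots,i_k}$ is induced by the path starting at $s_{\init}$, traversing the chains $C_{\ell,i_1}, C_{\ell,i_2}, \ldots, C_{\ell,i_k}$ in order, and then looping in $s_{\halt}$; since all visited chains share the same index $\ell$, all the required transitions are present by construction. For the second clause, suppose a trace $\pi$ of $\Ku_\Instance$ has a suffix at every position of which $\#$ holds. Inside any chain $C_{\ell,i}$, only the last state contains $\#$ (because by the standing assumption $|u_i^\ell|\ge 2$, so the first $|u_i^\ell|-1$ labels omit $\#$). Hence from the start of this suffix, $\pi$ cannot traverse any further internal position of a chain, forcing it to eventually enter $s_{\halt}$ and loop there. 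The prefix before $s_{\halt}$ must start at $s_{\init}$ and consist of a finite concatenation of complete chains, all tagged by the same $\ell$ (since transitions between chains preserve $\ell$). Therefore $\pi = \pi^\ell_{i_1,\ldots,i_k}$ is well-formed.

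The construction and verification are essentially routine; the only subtle point is the role of the assumption that each $u_i^\ell$ has length at least $2$, which guarantees that every block contributes at least one non-$\#$ position and so precludes spurious traces that satisfy the suffix condition while bouncing indefinitely through block chains without ever reaching the sink.
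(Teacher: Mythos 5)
Your construction is exactly the one the paper gives in its appendix proof: an initial $\{\#\}$-state, one linear chain of states per word $u_i^{\ell}$ labelled by $[u_i^{\ell},p_i,q_\ell]$, inter-chain transitions that preserve the index $\ell$, and a $\{\#\}$-labelled sink, with the assumption $|u_i^{\ell}|\geq 2$ ensuring that every chain contributes a non-$\#$ position. The proposal is correct and in fact spells out the verification of the second clause in slightly more detail than the paper does.
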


   \noindent Finally, the $\SHLTL$ sentence $\varphi_\Instance$ is
   defined as follows, where $\Gamma= \{\#,p_1,\ldots,p_n\}$:
\[
\begin{array}{ll}
  \varphi_{\Instance} ::=  &  \exists x_1.\,\exists x_2.\,\, \Eventually  \Always (\#[x_1]\wedge \#[x_2])\, \wedge\, \vspace{0.1cm} \\
  & \Always  (\neg \Rel{q_2}{x_1} \wedge \neg \Rel{q_1}{x_2})\,\wedge\, \vspace{0.1cm}\\
  & \displaystyle{\bigwedge_{p\in \Sigma}}\Always  (\Rel{p}{x_1} \leftrightarrow \Rel{p}{x_2}]\,\wedge\,\displaystyle{\bigwedge_{p\in \Gamma}}\Always_{\Gamma} (\Rel{p}{x_1}] \leftrightarrow \Rel{p}{x_2})
\end{array}
\]
Assume that $\varphi_\Instance$ is interpreted over the Kripke
structure $\Ku_\Instance$ of Proposition~\ref{prop:KripkeBuild}.
Then, by Proposition~\ref{prop:KripkeBuild}, the first conjunct in the
body of $\varphi_\Instance$ ensures that the two traces $\pi_1$ and
$\pi_2$ of $\Ku_\Instance$ selected by the existential quantification
are well-formed traces.
Moreover, the other three conjuncts in the body of $\varphi_\Instance$
correspond to Conditions~(1)--(3) of
Proposition~\ref{prop:CharcaterizeSolutions} over the selected traces
$\pi_1$ and $\pi_2$.
Hence, $\Ku_\Instance \models \varphi_\Instance$ if and only if
there are two well-formed traces that satisfy Conditions~(1)--(3) of
Proposition~\ref{prop:CharcaterizeSolutions} if and only if
$\Instance$ admits a solution.
This concludes the proof of Theorem~\ref{theo:UndecidabilitySHLTL}.

\subsection{A Decidable Fragment of $\SHLTL$}
\label{sec:DecidableSHLTLFragments}

In the previous section, we have shown that the model checking problem
is undecidable for the $\SHLTL$ sentences whose relativized temporal
modalities exploit two distinct sets of \LTL\ formulas.
In this section, we establish that the use of a unique finite set
$\Gamma$ of $\LTL$ formulas as a subscript of the temporal modalities
in the given formula leads to a decidable model checking problem.
In particular, we consider the fragment of $\SHLTL$, we call
\emph{simple $\SHLTL$}, whose quantifier-free formulas $\psi$ satisfy
the following requirement: there exists a finite set $\Gamma$ of
$\LTL$ formulas (depending on $\psi$) such that $\psi$ is a Boolean
combination of quantifier-free formulas in $\SHLTL[\Gamma]$ and
\emph{one-variable} $\SHLTL$ quantifier-free formulas.

Simple $\SHLTL$ strictly subsumes $\HLTL$ and can express interesting
asynchronous security properties like asynchronous
noninterference~\cite{goguen1982security} and observational
determinism~\cite{ZdancewicM03}.
In particular, the $\SHLTL$ sentences at the end of
Subsection~\ref{sec:SyntaxSemanticsHLTLS} used for expressing
noninterference (but not generalized noninterference), observational
determinism, and the unbounded time procedural requirement are simple
$\SHLTL$ formulas.

We solve the (fair) model checking for simple $\SHLTL$ by a reduction
to $\HLTL$ model checking, which is known to be
decidable~\cite{ClarksonFKMRS14}.
Our reduction is exponential in the size of the given sentence.
As a preliminary step, we first show, by an adaptation of the standard
automata-theoretic approach for $\LTL$~\cite{VardiW94}, that the
problem for a simple $\SHLTL$ sentence $\varphi$ can be reduced in
exponential time to the fair model checking against a sentence in the
fragment $\SHLTL[\Gamma]$ for some set $\Gamma$ of \emph{atomic
  propositions} depending on $\varphi$.\vspace{0.1cm}

\noindent \textbf{Reduction to the Fragment $\SHLTL[\Gamma]$ with
  $\Gamma$ being Propositional.}
In order to prove the reduction from simple $\SHLTL$ to
$\SLTL[\Gamma]$ for propositional $\Gamma$ (formally expressed in
Theorem~\ref{theorem:FormSimpleToProposition} below), we need some
preliminary results.
Recall that a Nondeterministic B\"{u}chi Automaton ($\NBA$) is a tuple
$\Au =\tpl{\Sigma,Q,Q_0,\Delta,\Acc}$, where $\Sigma$ is a finite
alphabet, $Q$ is a finite set of states, $Q_0\subseteq Q$ is the set
of initial states, $\Delta\subseteq Q\times \Sigma \times Q$ is the
transition function, and $\Acc\subseteq Q$ is the set of
\emph{accepting} states. Given a infinite word $w$ over $\Sigma$, a
run of $\Au$ over $w$ is an infinite sequence of states
$q_0,q_1,\ldots$ such that $q_0\in Q_0$ and for all $i\geq 0$,
$(q_i,w(i),q_{i+1})\in \Delta$.
The run is accepting if for infinitely many $i$, $q_i\in\Acc$.  The
language $\Lang(\Au)$ accepted by $\Au$ consists of the infinite words
$w$ over $\Sigma$ such that there is an accepting run over $w$.

Fix a non-empty set $\Gamma$ of $\LTL$ formulas over $\AP$.
The closure $\cl(\Gamma)$ of $\Gamma$ is the set of $\LTL$ formulas
consisting of the sub-formulas of the formulas $\theta\in\Gamma$ and
their negations (we identify $\neg\neg\theta$ with $\theta$).
Note that $\Gamma\subseteq \cl(\Gamma)$.
Without loss of generality, we can assume that $\AP\subseteq \Gamma$.
Precisely, $\AP$ can be taken as the set of propositions occurring in
the given simple $\SHLTL$ sentence and $\cl(\Gamma)$ contains all the
propositions in $\AP$ and their negations.
For each formula $\theta\in \cl(\Gamma)\setminus \AP$, we introduce a
fresh atomic proposition not in $\AP$, denoted by
$at(\theta)$. Moreover, for allowing a uniform notation, for each
$p\in \AP$, we write $at(p)$ to mean $p$ itself.
Let $\AP_\Gamma$ be the set $\AP$ extended with these new
propositions.
By a straightforward adaptation of the well-known translation of
$\LTL$ formulas into equivalent $\NBA$~\cite{VardiW94}, we
obtain the following result, where for an infinite word $w$ over
$2^{\AP_\Gamma}$, $(w)_{\AP}$ denotes the projection of $w$ over $\AP$.

\newcounter{prop-NFAforLTL}
\setcounter{prop-NFAforLTL}{\value{proposition}}
\newcounter{sec-NFAforLTL}
\setcounter{sec-NFAforLTL}{\value{section}}

\begin{proposition}
  \label{prop:NFAforLTL}
  Given a finite set $\Gamma$ of $\LTL$ formulas over $\AP$, one can
  construct in single exponential time a $\NBA$ $\Au_\Gamma$
  over $2^{\AP_\Gamma}$ with $2^{O(|\AP_\Gamma|)}$ states satisfying
  the following:
  \begin{compactenum}
  \item let $w\in\Lang(\Au_\Gamma)$: then for all $i\geq 0$ and
    $\theta\in \cl(\Gamma)$, $at(\theta)\in w(i)$ if and only if
    $((w)_{\AP},i)\models \theta$.
  \item for each trace $\pi$ (i.e., infinite word over $2^{\AP}$),
    there exists $w\in \Lang(\Au_\Gamma)$ such that $\pi= (w)_{\AP}$.
\end{compactenum}
\end{proposition}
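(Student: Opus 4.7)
The plan is to adapt the standard Vardi--Wolper tableau construction~\cite{VardiW94} that translates an $\LTL$ formula into an equivalent generalized B\"{u}chi automaton, with the twist that the alphabet of the target $\NBA$ is $2^{\AP_\Gamma}$ rather than $2^{\AP}$, so the automaton not only checks $\LTL$-semantics along the projection over $\AP$ but also \emph{verifies} that the $at(\theta)$-annotations in the input symbol exactly reflect the truth value of $\theta$ at the current position.

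First, I would define the state space of $\Au_\Gamma$ as the set of \emph{atoms}: subsets $A\subseteq \cl(\Gamma)$ that are maximally propositionally consistent (i.e.\ for every $\theta\in\cl(\Gamma)$ exactly one of $\theta,\neg\theta$ belongs to $A$, and $A$ respects Boolean connectives). Since $|\cl(\Gamma)|=O(|\Gamma|)$ and $\AP\subseteq \Gamma$, the number of atoms is $2^{O(|\AP_\Gamma|)}$, as required. Intuitively, an atom $A$ represents the set of $\cl(\Gamma)$-formulas that are true at the current position. I would let $Q_0$ be the set of all atoms (any atom can be the initial one, but see the accepting condition below). The transition relation would be defined over letters $\sigma\in 2^{\AP_\Gamma}$: $(A,\sigma,A')\in\Delta$ iff (i)~for every $\theta\in\cl(\Gamma)$, $at(\theta)\in\sigma$ iff $\theta\in A$ (this synchronizes the input labelling with the atom); (ii)~for every $p\in\AP$, $p\in A$ iff $p\in\sigma$; (iii)~for every $\Next\theta\in\cl(\Gamma)$, $\Next\theta\in A$ iff $\theta\in A'$; and (iv)~for every $\theta_1\Until\theta_2\in\cl(\Gamma)$, the standard expansion law holds: $\theta_1\Until\theta_2\in A$ iff either $\theta_2\in A$, or both $\theta_1\in A$ and $\theta_1\Until\theta_2\in A'$.

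Next, to prevent an until formula $\theta_1\Until\theta_2$ from being declared true forever without its eventuality being discharged, I would impose a generalized B\"{u}chi condition: for each $\theta_1\Until\theta_2\in\cl(\Gamma)$, the set $F_{\theta_1\Until\theta_2}$ of atoms $A$ with either $\theta_1\Until\theta_2\notin A$ or $\theta_2\in A$ must be visited infinitely often. Converting this generalized B\"{u}chi condition to a single B\"{u}chi acceptance incurs only a polynomial blow-up linear in $|\cl(\Gamma)|$, so the overall size stays in $2^{O(|\AP_\Gamma|)}$ and the construction is in single exponential time. The correctness of this construction, proved by induction on $\theta\in\cl(\Gamma)$ exactly as in~\cite{VardiW94}, yields that a word $w\in(2^{\AP_\Gamma})^\omega$ is accepted by a run $A_0A_1\cdots$ iff for every $i\geq 0$ and every $\theta\in\cl(\Gamma)$, $\theta\in A_i$ iff $((w)_\AP,i)\models\theta$, and moreover $at(\theta)\in w(i)$ iff $\theta\in A_i$ by clause~(i). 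This gives condition~(1).

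Finally, for condition~(2), given an arbitrary trace $\pi$, I would define the canonical input word $w$ over $2^{\AP_\Gamma}$ by $w(i)\DefinedAs \pi(i)\cup\{at(\theta)\mid \theta\in\cl(\Gamma)\setminus\AP,\ (\pi,i)\models\theta\}$, and the canonical run by setting $A_i\DefinedAs\{\theta\in\cl(\Gamma)\mid (\pi,i)\models\theta\}$. The conditions on $\Delta$ are satisfied by the semantics of $\LTL$, and the generalized B\"{u}chi condition is satisfied because an until formula that is true at $i$ is eventually discharged by its right-hand side. Thus $w\in\Lang(\Au_\Gamma)$ and $(w)_\AP=\pi$. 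The only real care required is the bookkeeping of clauses (i)--(iv) and the generalized-to-B\"{u}chi conversion; once those are in place, both properties follow from the classical proof of the Vardi--Wolper construction.
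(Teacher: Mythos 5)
Your construction is correct and essentially identical to the one in the paper's appendix: states are the atoms of $\cl(\Gamma)$, the input letter is forced to agree with $at(A)$ for the current atom $A$, transitions implement the $\Next$- and $\Until$-expansion laws, a generalized B\"{u}chi condition (one component per until formula) discharges eventualities, and condition~(2) is witnessed by the canonical run $A_i=\{\theta\in\cl(\Gamma)\mid(\pi,i)\models\theta\}$. No gaps.
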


Let $\Ku=\tpl{S,S_0,E,V}$ be a finite Kripke structure over $\AP$ and $F\subseteq S$.
Next, we consider the synchronous product of the fair Kripke structure
$(\Ku , F)$ with the $\NBA$
$\Au_\Gamma = \tpl{2^{\AP_\Gamma},Q,Q_0,\Delta,\Acc}$ over
$2^{\AP_\Gamma}$ of Proposition~\ref{prop:NFAforLTL} associated with
$\Gamma$.
More specifically, we construct a Kripke structure $\Ku_\Gamma$ over
$\AP_\Gamma$ and a subset $F_\Gamma$ of $\Ku_\Gamma$-states such that
$\Lang(\Ku_\Gamma,F_\Gamma)$ is the set of words
$w\in\Lang(\Au_\Gamma)$ whose projections over $\AP$ are in
$\Lang(\Ku,F)$.
Formally, the \emph{$\Gamma$-extension of $(\Ku,F)$} is the fair
Kripke structure $(\Ku_\Gamma,F_\Gamma)$ where
$\Ku_\Gamma=\tpl{S_\Gamma,S_{0,\Gamma},E_\Gamma,V_\Gamma}$ and
$F_\Gamma$ are defined as follows where:
\begin{itemize}
\item $S_\Gamma$ is the set of tuples
  $(s,B,q,\ell)\in S\times 2^{\AP_\Gamma}\times Q\times \{1,2\}$ such
  that $V(s)= B\cap \AP$;
\item
  $S_{0,\Gamma}= S_\Gamma \cap (S_0\times 2^{\AP_\Gamma}\times
  Q_0\times \{1\})$;
\item $E_\Gamma$ consists of the following transitions:
  \begin{compactitem}
  \item $((s,B,q,1),(s',B',q',\ell))$ such that $(s,s')\in E$,
    $(q,B,q')\in \Delta$, and $\ell=2$ if $s\in F$ and $\ell=1$
    otherwise;
  \item $((s,B,q,2),(s',B',q',\ell))$ such that $(s,s')\in E$,
    $(q,B,q')\in \Delta$, and $\ell=1$ if $q\in \Acc$ and $\ell=2$
    otherwise.
  \end{compactitem}
\item for each $(s,B,q,\ell)\in S_\Gamma$, $V_\Gamma((s,B,q,\ell))=B$;
\item $F_\Gamma=\{(s,B,q,2)\in S_\Gamma\mid q\in \Acc\}$.
\end{itemize}
\noindent By construction and Proposition~\ref{prop:NFAforLTL}(2), we
easily obtain the following result.

\begin{proposition}
  \label{prop:SynchronousProduct}
  For each infinite word $w$ over $2^{\AP_\Gamma}$,
  $w\in \Lang(\Ku_\Gamma,F_\Gamma)$ if and only if
  $w\in\Lang(\Au_\Gamma)$ and $(w)_{\AP}\in \Lang(\Ku,F)$. Moreover,
  for each $\pi\in \Lang(\Ku,F)$, there exists
  $w\in\Lang(\Ku_\Gamma,F_\Gamma)$ such that $(w)_{\AP} =\pi$.
\end{proposition}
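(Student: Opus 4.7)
The statement is a standard correctness property for a synchronous product of a fair Kripke structure with a Büchi automaton, so the plan is to verify the two directions by projecting paths and runs, and to handle the conjunction of the two fairness conditions via the flag component $\ell\in\{1,2\}$.

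For the forward direction, assume $w\in\Lang(\Ku_\Gamma,F_\Gamma)$ and let $\nu=(s_0,B_0,q_0,\ell_0),(s_1,B_1,q_1,\ell_1),\ldots$ be an $F_\Gamma$-fair path of $\Ku_\Gamma$ with $V_\Gamma(\nu(i))=w(i)$ for every $i$, so in particular $B_i=w(i)$. By definition of $E_\Gamma$, projecting onto the first component gives a path $s_0,s_1,\ldots$ of $\Ku$, and projecting onto the third gives a sequence $q_0,q_1,\ldots$ with $(q_i,B_i,q_{i+1})\in\Delta$ and $q_0\in Q_0$; hence this is a run of $\Au_\Gamma$ on $w$. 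The constraint $V(s_i)=B_i\cap\AP$ that is built into $S_\Gamma$ gives $V(s_i)=w(i)\cap\AP=(w)_\AP(i)$, so $(w)_\AP$ is precisely the trace induced by $s_0,s_1,\ldots$. The key point is to extract both fairness conditions from the single Büchi condition $F_\Gamma$. The flag is a standard two-counter alternation: in phase $\ell=1$ one waits for a visit to $F$, which switches to $\ell=2$; in phase $\ell=2$ one waits for a visit to $\Acc$, which switches back to $\ell=1$. Since $F_\Gamma$ consists exactly of the states with $\ell=2$ and $q\in\Acc$, an $F_\Gamma$-fair path must cycle through both phases infinitely often, so the projected path visits $F$ infinitely often (so it is $F$-fair in $\Ku$) and the projected run visits $\Acc$ infinitely often (so it is accepting in $\Au_\Gamma$). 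Thus $(w)_\AP\in\Lang(\Ku,F)$ and $w\in\Lang(\Au_\Gamma)$.

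For the converse, suppose $w\in\Lang(\Au_\Gamma)$ witnessed by an accepting run $q_0,q_1,\ldots$, and $(w)_\AP\in\Lang(\Ku,F)$ witnessed by an $F$-fair path $s_0,s_1,\ldots$ of $\Ku$ with $V(s_i)=(w)_\AP(i)=w(i)\cap\AP$. I define a sequence of product states by taking $B_i\DefinedAs w(i)$ and inductively setting $\ell_i\in\{1,2\}$ according to the rule in the definition of $E_\Gamma$, starting with $\ell_0=1$. All conditions on $S_\Gamma$ and $S_{0,\Gamma}$ are satisfied ($V(s_i)=B_i\cap\AP$ and $s_0\in S_0$, $q_0\in Q_0$, $\ell_0=1$), and each consecutive pair is in $E_\Gamma$ by the chosen update of $\ell_i$. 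The resulting path $\nu$ has $V_\Gamma(\nu(i))=B_i=w(i)$, so it is a witness for $w\in\Lang(\Ku_\Gamma,F_\Gamma)$ provided $\nu$ is $F_\Gamma$-fair. Here I use that the projected path is $F$-fair and the projected run is $\Acc$-accepting: the flag update ensures that after every phase-$1$ segment the next visit to $F$ moves the flag to $2$, and after every phase-$2$ segment the next visit to $\Acc$ moves it back to $1$ while hitting $F_\Gamma$; since both $F$ and $\Acc$ are hit infinitely often, the flag alternates infinitely often and therefore $\nu$ visits $F_\Gamma$ infinitely often.

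Finally, the ``moreover'' part is immediate: given $\pi\in\Lang(\Ku,F)$, apply Proposition~\ref{prop:NFAforLTL}(2) to obtain $w\in\Lang(\Au_\Gamma)$ with $(w)_\AP=\pi$; the first part then yields $w\in\Lang(\Ku_\Gamma,F_\Gamma)$. The only subtle point in the whole argument is the correct handling of the conjunction of the two fairness conditions through the flag; the remaining steps are routine consistency checks between $V$, $V_\Gamma$, and the projections.
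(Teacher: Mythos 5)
Your proof is correct and follows exactly the route the paper intends: the paper dismisses this proposition as following ``by construction and Proposition~\ref{prop:NFAforLTL}(2)'', and what you have written is precisely the routine verification being elided --- projecting the product path onto its Kripke and automaton components, checking the labelling consistency via $V(s)=B\cap\AP$, and using the standard two-phase flag to recover (and, conversely, to enforce) the conjunction of the two B\"{u}chi conditions. No gaps; your handling of the flag alternation in both directions is exactly right.
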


For each $\Gamma'\subseteq \Gamma$, let $\Gamma'_{prop}$ be the set of
propositions in $\AP_\Gamma$ associated with the formulas in
$\Gamma'$, in other words
$\Gamma'_{prop}\DefinedAs\{at(\theta)\mid \theta\in \Gamma'\}$.
By Propositions~\ref{prop:NFAforLTL}--\ref{prop:SynchronousProduct},
we deduce the following result which allows to reduce the fair model
checking against a simple $\SHLTL$ sentence to the fair model checking
against a $\SHLTL$ sentence in the fragment $\SHLTL[\Gamma'_{prop}]$
for some set $\Gamma'_{prop}$ of atomic propositions.

\begin{lemma}
  \label{lemma:EquivalenceKripkeSTructure}
  The following holds:
  \begin{compactenum}
  \item For each $\theta\in \Gamma$ and
    $w\in \Lang(\Ku_\Gamma,F_\Gamma)$, $at(\theta)\in w(0)$ iff
    $(w)_{\AP}\models \theta$.
  \item For all $\Gamma'\subseteq \Gamma$ and
    $w\in \Lang(\Ku_\Gamma,F_\Gamma)$,
    $(\stfr_{\Gamma'_{prop}}(w))_{\AP}= \stfr_{\Gamma'}(\pi)$ where
    $\pi=(w)_{\AP}$.
\end{compactenum}
\end{lemma}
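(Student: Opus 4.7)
The plan is to derive both parts from Propositions~\ref{prop:NFAforLTL} and~\ref{prop:SynchronousProduct} by exploiting the fact that the fresh atom $at(\theta)$ in $\AP_\Gamma$ is a ``propositional witness'' for the $\LTL$ formula $\theta$ along every accepted word.

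For Part~(1), fix $\theta\in\Gamma$ and $w\in\Lang(\Ku_\Gamma,F_\Gamma)$. By Proposition~\ref{prop:SynchronousProduct}, $w\in\Lang(\Au_\Gamma)$. Since $\Gamma\subseteq\cl(\Gamma)$, Proposition~\ref{prop:NFAforLTL}(1) applied at position $i=0$ yields $at(\theta)\in w(0)$ iff $((w)_{\AP},0)\models\theta$, which is exactly $(w)_{\AP}\models\theta$.

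For Part~(2), fix $\Gamma'\subseteq\Gamma$ and $w\in\Lang(\Ku_\Gamma,F_\Gamma)$, and let $\pi=(w)_{\AP}$. The key observation is that Proposition~\ref{prop:NFAforLTL}(1) holds at \emph{every} position: for all $i\geq 0$ and all $\theta\in\Gamma'$, $at(\theta)\in w(i)$ iff $(\pi,i)\models\theta$. Consequently, for any two positions $h,k\in\N$ and any $\theta\in\Gamma'$, the truth value of $\theta$ at $(\pi,h)$ coincides with that at $(\pi,k)$ if and only if $w(h)$ and $w(k)$ agree on the corresponding atom $at(\theta)\in\Gamma'_{prop}$. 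Comparing the defining conditions of the $\Gamma'$-stutter factorization of $\pi$ and the $\Gamma'_{prop}$-stutter factorization of $w$, I conclude that these two factorizations coincide (same sequence $\{i_k\}_{k\in[0,m_\infty]}$ and the same flag for whether $m_\infty=\infty$).

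From this coincidence of factorizations, $\stfr_{\Gamma'_{prop}}(w)$ and $\stfr_{\Gamma'}(\pi)$ select the same indices of the underlying words: the first positions of each finite segment, together with the entire final infinite segment when one exists. Since projection onto $\AP$ commutes with position selection and $(w)_{\AP}=\pi$, applying the $\AP$-projection to $\stfr_{\Gamma'_{prop}}(w)$ yields precisely $\stfr_{\Gamma'}(\pi)$, establishing $(\stfr_{\Gamma'_{prop}}(w))_{\AP}=\stfr_{\Gamma'}(\pi)$.

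The only subtlety, and hence the main thing to be careful about, is the case analysis for the tail: if the $\Gamma'$-stutter factorization of $\pi$ ends with a finite segment $[i_{m_\infty},\infty]$ on which all formulas in $\Gamma'$ keep a constant truth value, then by the positionwise equivalence above the same holds for $\Gamma'_{prop}$ on $w$, and both definitions append $\pi^{i_{m_\infty}}$ (respectively $w^{i_{m_\infty}}$) verbatim. The $\AP$-projection of the latter is the former, so the equality still holds. This handles the otherwise-awkward boundary case and completes the argument.
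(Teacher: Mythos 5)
Your proof is correct and follows essentially the same route as the paper: both parts are derived from Proposition~\ref{prop:SynchronousProduct} (to get $w\in\Lang(\Au_\Gamma)$) and Proposition~\ref{prop:NFAforLTL}(1) (to transfer truth values of $\theta\in\Gamma'$ on $\pi$ to membership of $at(\theta)$ in $w(i)$), from which the coincidence of the two stutter factorizations and hence the claimed equality follow; the paper states this last step without elaboration, whereas you spell out the segment-by-segment argument and the tail case. The only quibble is a slip of wording where you call the final segment $[i_{m_\infty},\infty]$ ``finite''---you clearly mean the case $m_\infty\neq\infty$ where the factorization has finitely many segments and the last one is infinite.
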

\begin{proof}
  Property~1 directly follows from Proposition~\ref{prop:NFAforLTL}(1)
  and Proposition~\ref{prop:SynchronousProduct}.
  Now, let us consider Property~2.
  Let $\Gamma'\subseteq \Gamma$, $w\in \Lang(\Ku_\Gamma,F_\Gamma)$,
  and $\pi=(w)_{\AP}$.
  By Proposition~\ref{prop:SynchronousProduct},
  $w\in\Lang(\Au_\Gamma)$.
  Moreover, by Property~(1) of Proposition~\ref{prop:NFAforLTL}, for
  all $i\geq 0$ and $\theta\in\Gamma'$, $at(\theta)\in w(i)$ if and
  only if $(\pi,i)\models \theta$.
  Since $\Gamma'_{prop}\DefinedAs\{at(\theta)\mid \theta\in \Gamma'\}$, it
  follows that
  $(\stfr_{\Gamma'_{prop}}(w))_{\AP}=\stfr_{\Gamma'}(\pi)$, and the
  result follows.

\end{proof}

We can now prove the desired result.

\begin{theorem}
  \label{theorem:FormSimpleToProposition}
  Given a simple $\SHLTL$ sentence $\varphi$ and a fair finite Kripke
  structure $(\Ku,F)$ over $\AP$, one can construct in single
  exponential time in the size of $\varphi$, a $\SHLTL$ sentence
  $\varphi'$ having the same quantifier prefix as $\varphi$ and a fair
  finite Kripke structure $(\Ku',F')$ over an extension $\AP\,'$ of
  $\AP$ such that $|\varphi'|=O(|\varphi|)$, $\varphi'$ is in the
  fragment $\SHLTL[\AP\,'']$ for some $\AP\,''\subseteq \AP\,'$,
  $|\Ku'|=O(|\Ku|* 2^{O(|\varphi|)})$, and
  $\Lang(\Ku',F')\models \varphi'$ if and only if
  $\Lang(\Ku,F)\models \varphi$.
\end{theorem}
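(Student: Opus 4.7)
The plan is to combine Proposition~\ref{prop:NFAforLTL} and the $\hat\Gamma$-extension construction preceding Proposition~\ref{prop:SynchronousProduct} with a syntactic rewriting of $\varphi$ that replaces every non-atomic $\LTL$ subformula by a fresh proposition. Since $\varphi$ is simple, its quantifier-free body has the form $\psi = \mathcal{B}(\psi_1,\ldots,\psi_m,\chi_1,\ldots,\chi_\ell)$, where $\mathcal{B}$ is Boolean, each $\psi_i$ lies in $\SHLTL[\Gamma]$ for a common finite set $\Gamma$ of $\LTL$ formulas, and each $\chi_j(x_j)$ is a one-variable $\SHLTL$ quantifier-free formula. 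First I would apply Proposition~\ref{prop:FromSLTLtoLTL} to replace every $\chi_j(x_j)$ by an equivalent $\LTL$ formula $\theta_j$ over $\AP$, interpreted at position $0$ of the trace assigned to $x_j$.

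Next, let $\hat\Gamma = \Gamma \cup \{\theta_1,\ldots,\theta_\ell\}$ and apply Proposition~\ref{prop:NFAforLTL} to obtain the $\NBA$ $\Au_{\hat\Gamma}$ over $2^{\AP_{\hat\Gamma}}$, whose accepted words decorate each position of the underlying trace with exactly those atoms $at(\theta)$ whose truth at that position agrees with $\theta$. Setting $\AP\,' = \AP_{\hat\Gamma}$, I take $(\Ku',F') = (\Ku_{\hat\Gamma},F_{\hat\Gamma})$ to be the $\hat\Gamma$-extension of $(\Ku,F)$; its size is $O(|\Ku|\cdot 2^{O(|\varphi|)})$ thanks to the single-exponential state bound on $\Au_{\hat\Gamma}$. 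I then define $\varphi'$ by copying the quantifier prefix of $\varphi$ and rewriting its body as follows: every subscript $\Gamma$ appearing in a relativized modality $\Next_\Gamma$ or $\Until_\Gamma$ is replaced by $\Gamma_{prop} = \{at(\theta) \mid \theta \in \Gamma\}\subseteq \AP\,'$, and every top-level one-variable conjunct $\chi_j(x_j)$ is replaced by the relativized proposition $\Rel{at(\theta_j)}{x_j}$. The resulting $\varphi'$ has the same quantifier prefix as $\varphi$, size linear in $|\varphi|$, and lies in $\SHLTL[\AP\,'']$ with $\AP\,'' = \Gamma_{prop}\subseteq \AP\,'$.

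For correctness I would prove $\Lang(\Ku',F')\models\varphi' \Leftrightarrow \Lang(\Ku,F)\models\varphi$ by induction on the quantifier prefix, using Proposition~\ref{prop:SynchronousProduct} to translate between traces: each $\pi\in\Lang(\Ku,F)$ admits a lift $w\in\Lang(\Ku',F')$ with $(w)_\AP=\pi$, and every word in $\Lang(\Ku',F')$ projects into $\Lang(\Ku,F)$. It therefore suffices to show, for corresponding initial pointed trace assignments $\Pi$ and $\Pi'$, that $\Pi'\models\psi'$ iff $\Pi\models\psi$. For each $\SHLTL[\Gamma]$ conjunct $\psi_i$, this reduces to Lemma~\ref{lemma:EquivalenceKripkeSTructure}(2), which guarantees that the $\Gamma_{prop}$-stutter factorization of each lift projects exactly onto the $\Gamma$-stutter factorization of the underlying trace; for each one-variable conjunct $\chi_j(x_j)$, it reduces to Lemma~\ref{lemma:EquivalenceKripkeSTructure}(1) together with the equivalence of $\chi_j$ and $\theta_j$. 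The hard part will be making the atom/trace correspondence survive under repeated applications of $\SUCC_{\Gamma_{prop}}$ on lifts versus $\SUCC_\Gamma$ on originals; the pointwise guarantee of Proposition~\ref{prop:NFAforLTL}(1), not just its position-$0$ specialization, is precisely what enables Lemma~\ref{lemma:EquivalenceKripkeSTructure}(2) and closes the argument. The single-exponential blow-up is then confined to $|\Ku'|$, while $|\varphi'|$ remains $O(|\varphi|)$, yielding the stated bounds.
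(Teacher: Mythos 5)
Your proposal is correct and follows essentially the same route as the paper's proof: translate the one-variable conjuncts to $\LTL$ via Proposition~\ref{prop:FromSLTLtoLTL}, take the $\Gamma$-extension of $(\Ku,F)$ for $\Gamma$ the union of the common subscript set with these $\LTL$ formulas, rewrite the body by substituting $at(\cdot)$-propositions and $\Gamma_{prop}$-subscripts, and prove equivalence by induction on the quantifier prefix using Proposition~\ref{prop:SynchronousProduct} and Lemma~\ref{lemma:EquivalenceKripkeSTructure}. The only difference is notational (your $\hat\Gamma$ is the paper's $\LTL(\Upsilon_1)\cup\Gamma'$), and your closing observation about needing the pointwise guarantee of Proposition~\ref{prop:NFAforLTL}(1) for the stutter-factorization correspondence is exactly the content of Lemma~\ref{lemma:EquivalenceKripkeSTructure}(2).
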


\begin{proof}
  By hypothesis, there is a finite set $\Gamma'$ of $\LTL$ formulas
  such that $\varphi$ is of the form
  \[
   Q_n x_n.Q_{n-1} x_{n-1}.\ldots.Q_1 x_1.\,\psi
 \]
 where $n\geq 1$, $Q_i\in\{\exists,\forall\}$ for all $i\in [1,n]$,
 and $\psi$ is a Boolean combination of quantifier-free formulas in a
 set $\Upsilon_1\cup \Upsilon_{\Gamma'}$, where $\Upsilon_1$ consists
 of \emph{one-variable} $\SHLTL$ quantifier-free formulas and
 $\Upsilon_{\Gamma'}$ consists of quantifier-free formulas in
 $\SHLTL[\Gamma']$.
 By Proposition~\ref{prop:FromSLTLtoLTL}, we can assume without loss
 of generality that the formulas in $\Upsilon_1$ are
 \emph{one-variable} $\HLTL$ quantifier-free
 formulas.
 Let $\LTL(\Upsilon_1)$ be the set of $\LTL$ formulas corresponding to
 the formulas in $\Upsilon_1$ (i.e., for each $\theta\in \Upsilon_1$,
 we take the $\LTL$ formula obtained from $\theta$ by removing the
 unique variable occurring in $\theta$).
 We define:
 \begin{compactitem}
 \item $\Gamma\DefinedAs \LTL(\Upsilon_1)\cup \Gamma'$. We assume that
   $\Gamma$ is not empty; otherwise the result is obvious.
 \item $(\Ku',F')\DefinedAs (\Ku_\Gamma,F_\Gamma)$, where
   $(\Ku_\Gamma,F_\Gamma)$ is the $\Gamma$-extension of $(\Ku,F)$;
 \item $\varphi'\DefinedAs Q_n x_n.\ldots.Q_1 x_1.\,\psi'$, where $\psi'$ is
   defined as follows:
   by hypothesis, $\psi$ can be seen as a propositional formula
   $\psi_p$ over the set of atomic formulas
   $\Upsilon_1\cup \Upsilon_{\Gamma'}$.
   Then, $\psi'$ is obtained from $\psi_p$ by replacing (i) each
   formula $\xi\in \Upsilon_1$ with the $x$-relativized proposition in
   $\AP_\Gamma$ given by $at(\LTL(\xi))[x]$, where $\LTL(\xi)$ is the
   $\LTL$ formula associated with $\xi$ and $x$ is the unique variable
   occurring in $\xi$, and (ii) each formula
   $\xi\in \Upsilon_{\Gamma'}$ with the formula obtained from $\xi$ by
   replacing each relativized temporal modality in $\xi$ with its
   $\Gamma'_{prop}$-relativized
   version. 
 \end{compactitem}\vspace{0.1cm}

 \noindent We show that $\Lang(\Ku_\Gamma,F_\Gamma)\models \varphi'$
 if and only if $\Lang(\Ku,F)\models \varphi$.
 Hence, Theorem~\ref{theorem:FormSimpleToProposition} directly
 follows.
 For each $i\in [1,n]$, let
 $\varphi_i\DefinedAs Q_i x_i.\ldots.Q_1 x_1.\,\psi$ and
 $\varphi'_i\DefinedAs Q_i x_i.\ldots.Q_1 x_1.\,\psi'$.
 Moreover, we write $\varphi_0$ (resp., $\varphi'_0$) to mean formula
 $\psi$ (resp., $\psi'$).
 The result directly follows from the following claim.\vspace{0.2cm}

 \noindent \textbf{Claim.}
 Let $0\leq i\leq n$ and
 $w_1,\ldots,w_{n-i}\in \Lang(\Ku_\Gamma,F_\Gamma)$.
 Then,
 $(\Lang(\Ku_\Gamma,F_\Gamma),\{x_1 \mapsto (w_1,0),\ldots,x_{n-i}
 \mapsto (w_{n-i},0) \})\models \varphi'_i$ if and only if
 $(\Lang(\Ku,F),\{x_1 \mapsto ((w_1)_{\AP},0),$
 $\ldots,x_{n-i} \mapsto ((w_{n-i})_{\AP},0) \})\models
 \varphi_i$.\vspace{0.1cm}

 \noindent \textbf{Proof of the Claim.} For the base case ($i=0$), the
 result directly follows from construction and
 Lemma~\ref{lemma:EquivalenceKripkeSTructure}.
 For the induction step, the result directly follows from the
 induction hypothesis and the second part of
 Proposition~\ref{prop:SynchronousProduct}.
 \details{ By induction on $i=0$. For the base $i=0$, we need to show
   that
   $ \{x_1 \mapsto w_1,\ldots,x_{n} \mapsto w_{n} \})\models \psi'$ if
   and only if
   $(\{x_1 \mapsto (w_1)_{\AP},\ldots,x_{n} \mapsto (w_{n})_{\AP}
   \})\models \psi$. By construction, it suffices to show the
   following:
  \begin{compactitem}
  \item let $\xi\in\Upsilon_1$ and $x$ the unique variable occurring
    in $\xi$ (recall that $\LTL(\xi)\in\Gamma$). Then,
    $ \{x_1 \mapsto w_1,\ldots,x_{n} \mapsto w_{n} \})\models
    at(\LTL(\xi))[x]$ iff
    $(\{x_1 \mapsto (w_1)_{\AP},\ldots,x_{n} \mapsto (w_{n})_{\AP}
    \})\models \xi$.
  \item let $\xi\in\Upsilon_{\Gamma'}$ and $\xi'$ obtained from $\xi$
    by replacing each relativized temporal modality in $\xi$ with its
    $\Gamma'_{prop}$-relativized version.  Then,
    $ \{x_1 \mapsto w_1,\ldots,x_{n} \mapsto w_{n} \})\models \xi'$
    iff
    $(\{x_1 \mapsto (w_1)_{\AP},\ldots,x_{n} \mapsto (w_{n})_{\AP}
    \})\models \xi$.
 \end{compactitem}\vspace{0.1cm}

 Hence, the result directly follows from Lemma~\ref{lemma:EquivalenceKripkeSTructure}. Now, assume that
 $i>0$.
 We need to show that  $(\Lang(\Ku_\Gamma,F_\Gamma),\{x_1 \mapsto w_1,\ldots,x_{n-i} \mapsto w_{n-i} \})\models Q_i x_i.\ldots.Q_n x_n.\,\psi' $ if and only if
  $(\Lang(\Ku,F),\{x_1 \mapsto (w_1)_{\AP},\ldots,x_{n-i} \mapsto (w_{n-i})_{\AP} \})\models Q_i x_i.\ldots.Q_n x_n.\,\psi $. We assume that $Q_i=\exists$ (the case where $Q_i=\forall$ being similar). The right implication directly follows from the induction hypothesis, while the left implication directly follows from
  the second part of Proposition~\ref{prop:SynchronousProduct} and the induction hypothesis.
}
\end{proof}

\noindent \textbf{Fair Model Checking against $\SHLTL[\Gamma]$ with
  $\Gamma\subseteq \AP$.} By
Theorem~\ref{theorem:FormSimpleToProposition}, we can restrict to
consider the fair model checking against the fragments
$\SHLTL[\Gamma]$ where $\Gamma$ is a non-empty finite set of atomic
propositions.
We show that this problem can be reduced in polynomial time to a
variant of model checking against $\HLTL$.

\begin{definition}[\LTL-conditioned model checking]
  For a Kripke structure $\Ku$ and a $\LTL$ formula $\theta$, we
  denote by $\Lang(\Ku,\theta)$ the set of traces of $\Ku$ which
  satisfy $\theta$.
  The \emph{$\LTL$-conditioned model checking problem against $\HLTL$}
  is checking for a finite Kripke structure $\Ku$, a $\LTL$ formula
  $\theta$ and a $\HLTL$ sentence $\varphi$, whether
  $\Lang(\Ku,\theta) \models \varphi$.
\end{definition}

\LTL-conditioned model checking against $\HLTL$ can be easily reduced
in linear time to $\HLTL$ model checking (for details
see the appendix).

\newcounter{prop-conditionedLTL}
\setcounter{prop-conditionedLTL}{\value{proposition}}
\newcounter{sec-conditionedLTL}
\setcounter{sec-conditionedLTL}{\value{section}}

\begin{proposition}
  \label{prop:conditionedLTL}
  Given an $\LTL$ formula $\theta$ and a $\HLTL$ sentence $\varphi$,
  one can construct in linear time a $\HLTL$ sentence $\varphi_\theta$
  having the same quantifier prefix as $\varphi$ such that for each
  Kripke structure $\Ku$, $\Lang(\Ku,\theta) \models \varphi$
  \emph{iff} $\Lang(\Ku)\models \varphi_\theta$.
 \end{proposition}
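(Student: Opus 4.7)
The plan is to prove this by a standard relativization argument: we restrict the range of each trace quantifier in $\varphi$ to traces satisfying $\theta$. Given the $\LTL$ formula $\theta$ and a trace variable $x$, let $\theta[x]$ denote the $\HLTL$ quantifier-free formula obtained from $\theta$ by replacing each atomic proposition $p$ with its relativized version $\Rel{p}{x}$. By the semantics of $\HLTL$, for any pointed trace assignment $\Pi$ with $x\in\Dom(\Pi)$ and $\Pi(x) = (\pi,0)$, one has $\Pi\models \theta[x]$ iff $\pi\models \theta$.

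I would then define $\varphi_\theta$ by induction on the quantifier prefix of $\varphi$: replace each existential prefix $\exists x.\,\xi$ by $\exists x.\,(\theta[x] \wedge \xi_\theta)$, each universal prefix $\forall x.\,\xi$ by $\forall x.\,(\theta[x]\rightarrow \xi_\theta)$, and leave the quantifier-free body unchanged. The construction clearly preserves the quantifier prefix of $\varphi$, and it runs in linear time since each relativization $\theta[x]$ contributes $O(|\theta|)$ symbols and it is applied once per quantifier of $\varphi$ (using sharing of the relativized copies of $\theta$, the total size is $O(|\varphi|+|\theta|)$).

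For correctness, I would prove by induction on the length of the quantifier prefix the following general claim: for every suffix $\varphi'$ of $\varphi$ of the form $Q_k x_k.\ldots.Q_1 x_1.\,\psi$ and every initial pointed trace assignment $\Pi$ whose referenced traces lie in $\Lang(\Ku,\theta)$, $(\Lang(\Ku,\theta),\Pi)\models \varphi'$ iff $(\Lang(\Ku),\Pi)\models \varphi'_\theta$. The base case (quantifier-free $\psi$) is immediate because the body is unchanged and the satisfaction of a quantifier-free $\HLTL$ formula on $\Pi$ does not depend on the surrounding trace set. For the inductive step with outermost quantifier $\exists x$, we have $(\Lang(\Ku,\theta),\Pi)\models \exists x.\,\xi$ iff some $\pi\in \Lang(\Ku,\theta)$ makes $(\Lang(\Ku,\theta),\Pi[x\mapsto (\pi,0)])\models \xi$; by the observation on $\theta[x]$, membership $\pi\in \Lang(\Ku,\theta)$ translates to $\pi\in\Lang(\Ku)$ together with $\Pi[x\mapsto (\pi,0)]\models \theta[x]$, and then the induction hypothesis yields the right-hand side $\exists x.\,(\theta[x]\wedge \xi_\theta)$. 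The universal case is symmetric, with $\wedge$ replaced by $\rightarrow$.

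There is no real obstacle: the only subtlety is ensuring that the semantics of a quantifier-free $\HLTL$ formula on a fixed trace assignment $\Pi$ is independent of the ambient trace set, which makes the base case trivial and allows the inductive step to swap between $\Lang(\Ku,\theta)$ and $\Lang(\Ku)$ seamlessly.
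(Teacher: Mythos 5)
Your proposal is correct and follows essentially the same route as the paper: relativize $\theta$ to each quantified variable and guard existentials with $\wedge$ and universals with $\rightarrow$, then prove equivalence by induction on the quantifier prefix using the fact that quantifier-free satisfaction is independent of the ambient trace set. The only cosmetic difference is that you place the guards immediately after each quantifier, whereas the grammar of linear-time hyper expressions in Section~\ref{sec:HyperSpecifications} is prenex, so the guards must be accumulated inside the quantifier-free body (as the paper does with its nested $\psi_i$); this changes nothing in the argument or the linear size bound.
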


 Let $(\Ku,F)$ be a fair finite Kripke structure with
 $\Ku=\tpl{S,S_0,E,V}$ and $\varphi$ be a $\SHLTL[\Gamma]$ sentence
 with $\Gamma\subseteq \AP$ and $\Gamma\neq \emptyset$.
 Let $\acc$ be a fresh proposition not in $\AP$. Starting from $\Ku$,
 $F$, and $\Gamma$, we construct in polynomial time a finite Kripke
 structure $\widehat{\Ku}$ over $\widehat{\AP}=\AP\cup \{\acc\}$ and
 an $\LTL$ formula $\widehat{\theta}$ over $\widehat{\AP}$ such that
 the projections over $\AP$ of the traces of $\widehat{\Ku}$
 satisfying $\widehat{\theta}$ correspond to the traces in
 $\stfr_\Gamma(\Lang(\Ku,F))$.
 By Remark~\ref{remark:connection} and since $\varphi$ does not
 contain occurrences of the special proposition $\acc$, we obtain that
 $\Lang(\Ku,F)$ is a model of the $\SHLTL[\Gamma]$ sentence $\varphi$
 \emph{iff} $\Lang(\widehat{K},\widehat{\theta})$ is a model of the
 $\HLTL$ sentence $\HLTL(\varphi)$.

 Intuitively, the Kripke structure $\widehat{\Ku}$ is obtained from
 $\Ku$ by adding edges which keep track of the states associated with
 the starting positions of adjacent segments along the
 $\Gamma$-stutter factorizations of (the traces of) the $F$-fair paths
 of $\Ku$. Formally, let $R_\Gamma(\Ku)$ and $R_\Gamma(\Ku,F)$ be the
 sets of state pairs in $\Ku$ defined as follows:
 \begin{itemize}
 \item $R_\Gamma(\Ku)$ consists of the pairs $(q,q')\in S\times S$
   such that $V(q)\cap \Gamma\neq V(q')\cap \Gamma$ and there is a
   finite path of $\Ku$ of the form $q\cdot \rho \cdot q'$ such that
   $V(q)\cap \Gamma= V(\rho(i))\cap \Gamma$ for all $0\leq i<|\rho|$.
 \item $R_\Gamma(\Ku,F)$ is defined similarly but, additionally, we
   require that the finite path $q\cdot \rho \cdot q'$ visits some
   accepting state in $F$.
 \end{itemize}
 The finite sets $R_\Gamma(\Ku)$ and $R_\Gamma(\Ku,F)$ can be easily
 computed in polynomial time by standard closure algorithms. By
 exploiting the sets $R_\Gamma(\Ku)$ and $R_\Gamma(\Ku,F)$, we define
 the finite Kripke structure
 $\widehat{\Ku}=\tpl{\widehat{S},\widehat{S_0},\widehat{E},\widehat{V}}$
 over $\widehat{\AP}=\AP\cup \{\acc\}$ as follows:
\begin{itemize}
\item $\widehat{S}=S\times \{0,1\}$ and
  $\widehat{S_0}=S_0\times\{0\}$.
\item $\widehat{E}$ consists of the edges $((s,\ell),(s,\ell'))$ such
  that one of the following holds:
  \begin{compactitem}
  \item \emph{either} $(s,s')\in E \cup R_\Gamma(\Ku)$ and ($\ell'=1$
    iff $s'\in F$),
  \item \emph{or} $(s,s')\in R_\Gamma(\Ku,F)$ and $\ell'=1$.
   \end{compactitem}
 \item $\widehat{V}((s,1))=V(s)\cup\{\acc\}$ and
   $\widehat{V}((s,0))=V(s)$. 
 \end{itemize}

 \noindent Let $\widehat{\theta}$ be the $\LTL$ formula over
 $\widehat{\AP}$ defined as follows:
\[
\Always\Eventually\acc \wedge \Always\bigl(\displaystyle{\bigvee_{p\in \Gamma}} (p \leftrightarrow \neg \Next p)\, \vee\,
\displaystyle{\bigwedge_{p\in \Gamma}} \Always(p \leftrightarrow \Next p)   \bigr)
\]
The first conjunct in the definition of $\widehat{\theta}$ ensures
that proposition $\acc$ holds infinitely often while the second
conjunct captures the traces that are $\Gamma$-strutter free.
By construction, we easily obtain the following result.

\begin{proposition}
  \label{prop:DecidabilityPropositional}
  $\stfr_\Gamma(\Lang(\Ku,F))$ coincides with the set of projections
  over $\AP$ of the traces in $\Lang(\widehat{\Ku},\widehat{\theta})$.
\end{proposition}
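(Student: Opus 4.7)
The proof proceeds by establishing the two inclusions separately, relying on the observation that a trace over $\widehat{\AP}$ satisfies the second conjunct of $\widehat{\theta}$ if and only if its projection over $\AP$ is $\Gamma$-stutter free, and that it satisfies the first conjunct ($\Always\Eventually\acc$) iff the path visits the second-component-$1$ states infinitely often.

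For the inclusion $\stfr_\Gamma(\Lang(\Ku,F))\subseteq(\text{projections of }\Lang(\widehat{\Ku},\widehat{\theta}))$, I start with an $F$-fair path $\nu=t_0 t_1\ldots$ of $\Ku$ inducing a trace $\pi$, and consider its $\Gamma$-stutter factorization $\{i_k\}_{k\in[0,m_\infty]}$. I construct a path $\widehat{\nu}$ of $\widehat{\Ku}$ as follows. If $m_\infty=\infty$, I let $\widehat{\nu}$ traverse the states $(t_{i_0},\ell_0)(t_{i_1},\ell_1)\ldots$, using for each consecutive pair either an $R_\Gamma(\Ku)$-edge or an $R_\Gamma(\Ku,F)$-edge: by definition of the stutter factorization, the finite segment $t_{i_k}\cdot t_{i_k+1}\cdots t_{i_{k+1}}$ of $\nu$ witnesses membership in $R_\Gamma(\Ku)$, and witnesses membership in $R_\Gamma(\Ku,F)$ precisely when it visits $F$; I choose the latter whenever possible, and set $\ell_{k+1}=1$ in that case or when $t_{i_{k+1}}\in F$. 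By $F$-fairness of $\nu$, infinitely many such segments either contain an $F$-state or end at an $F$-state, so $\ell_k=1$ infinitely often. If $m_\infty\neq\infty$, I use the same jumping construction up to position $m_\infty$ and then continue with regular $E$-edges of $\widehat{\Ku}$ from $(t_{i_{m_\infty}},\cdot)$ onwards; $F$-fairness of the infinite suffix of $\nu$ guarantees the second component is set to $1$ infinitely often. In both cases, the resulting trace projects over $\AP$ to $\stfr_\Gamma(\pi)$, is $\Gamma$-stutter free by construction, and satisfies $\Always\Eventually\acc$.

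For the reverse inclusion, I take a path $\widehat{\nu}=(s_0,\ell_0)(s_1,\ell_1)\ldots$ of $\widehat{\Ku}$ inducing a trace $\widehat{\pi}$ that satisfies $\widehat{\theta}$, and expand it to a path $\nu$ of $\Ku$ by replacing each edge: regular $E$-edges are kept; each $R_\Gamma(\Ku)$- or $R_\Gamma(\Ku,F)$-edge from $s_i$ to $s_{i+1}$ is replaced by a witnessing finite path of $\Ku$ of the form $s_i\cdot\rho\cdot s_{i+1}$ where the $\Gamma$-valuation is constantly $V(s_i)\cap\Gamma$ on $\rho$ and, in the $R_\Gamma(\Ku,F)$ case, $\rho$ visits $F$. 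Since $\widehat{\theta}$ forces $(\widehat{\pi})_{\AP}$ to be $\Gamma$-stutter free, consecutive $\Gamma$-valuations along $\widehat{\nu}$ are distinct on the jumping prefix (and the $\Gamma$-valuation is constant on any infinite suffix that is not a jumping suffix), so the stitched path $\nu$ has induced trace whose $\Gamma$-stutter factorization places the change points exactly at the positions corresponding to $s_0,s_1,\ldots$; hence $\stfr_\Gamma((\nu\text{-induced trace}))=(\widehat{\pi})_{\AP}$. $F$-fairness of $\nu$ follows from $\Always\Eventually\acc$: whenever $\ell_i=1$, the edge entering $(s_i,1)$ either is a regular $E$-edge with $s_i\in F$ or an $R_\Gamma(\Ku,F)$-edge whose filling $\rho$ visits $F$, so $\nu$ visits $F$ infinitely often.

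The main delicate point is the bookkeeping of the $\{0,1\}$-component and the choice between $R_\Gamma(\Ku)$ and $R_\Gamma(\Ku,F)$ edges so that the $\Always\Eventually\acc$ condition in $\widehat{\Ku}$ exactly corresponds to $F$-fairness of the expanded path in $\Ku$. The infinite-segment case (where $\widehat{\nu}$ eventually stops using $R_\Gamma$-edges and proceeds by regular $E$-edges) must be handled separately in both directions, but in that regime the construction reduces to the identity between paths of $\Ku$ and the $\ell$-annotated copy, and fairness transfers directly.
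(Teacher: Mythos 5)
Your proof is correct and follows the same route as the paper's, which simply asserts both inclusions ``by construction'' of $\widehat{\Ku}$ and $\widehat{\theta}$; you supply exactly the details that argument elides (jumping along $R_\Gamma(\Ku)$/$R_\Gamma(\Ku,F)$ edges guided by the $\Gamma$-stutter factorization in one direction, expanding those edges via their witnessing finite paths in the other, and tracking the $\{0,1\}$-component to match $\Always\Eventually\acc$ with $F$-fairness). The only minor imprecision is that an edge entering a state $(s_i,1)$ may also be an $R_\Gamma(\Ku)$-edge with $s_i\in F$ rather than one of the two cases you list, but that case visits $F$ anyway, so the fairness argument is unaffected.
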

\begin{proof}
  Let $\pi\in\stfr_\Gamma(\Lang(\Ku,F))$.
  Hence, there is a $F$-fair path $\nu$ of $\Ku$ such that
  $\pi= \stfr_\Gamma(V(\nu))$ where $V(\nu)$ is the trace associated
  with $\nu$.
  By construction, there is a trace $\widehat{\pi}$ of $\widehat{\Ku}$
  such that $\acc\in \widehat{\pi}(i)$ for infinitely many $i$ and the
  projection of $\widehat{\pi}$ over $\AP$ coincides with $\pi$.
  By construction of the $\LTL$ formula $\widehat{\theta}$,
  $\widehat{\pi}$ satisfies $\widehat{\theta}$.
  Hence, $\widehat{\pi}\in \Lang(\widehat{\Ku},\widehat{\theta})$.
  The converse direction is similar.
 \end{proof}

 By Proposition~\ref{prop:DecidabilityPropositional} and
 Remark~\ref{remark:connection}, we obtain that for each
 $\SHLTL[\Gamma]$ sentence $\varphi$, $\Lang(\Ku,F)\models \varphi$
 iff $\Lang(\widehat{K},\widehat{\theta})\models \HLTL(\varphi)$.

 By~\cite{Rabe2016} model checking a finite Kripke structure $\Ku$
 against a $\HLTL$ sentence $\varphi$ of quantifier alternation depth
 $d$ can be done in nondeterministic space bounded by
 $O(\Tower_2(d,|\varphi|\log(|\Ku|)))$.
 Thus, since simple $\SHLTL$ subsumes $\HLTL$, by
 Theorem~\ref{theorem:FormSimpleToProposition} and
 Proposition~\ref{prop:conditionedLTL}, we obtain the main result of
 this section, where the lower bounds correspond to the known ones for
 $\HLTL$~\cite{Rabe2016}.

 \begin{theorem}
   For each $d\in \N$, (fair) model checking against simple $\SHLTL$
   sentences of quantifier alternation depth $d$ is
   $d$-\EXPSPACE-complete, and for a fixed formula, it is
   $(d-1)$-\EXPSPACE-complete for $d>0$ and \NLOGSPACE-complete
   otherwise.
\end{theorem}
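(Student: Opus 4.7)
The plan is to chain together the three reductions already developed in this section and then invoke the known complexity of $\HLTL$ model checking. Given a simple $\SHLTL$ sentence $\varphi$ of quantifier alternation depth $d$ and a fair finite Kripke structure $(\Ku,F)$, I first apply Theorem~\ref{theorem:FormSimpleToProposition} to obtain, in single exponential time, a $\SHLTL[\AP'']$ sentence $\varphi'$ with $\AP''$ a finite set of atomic propositions, having the same quantifier prefix as $\varphi$ and size $O(|\varphi|)$, together with an equivalent fair Kripke structure $(\Ku',F')$ of size $O(|\Ku| \cdot 2^{O(|\varphi|)})$. Next, since $\varphi'$ lies in $\SHLTL[\Gamma]$ for a propositional $\Gamma$, I apply the construction preceding Proposition~\ref{prop:DecidabilityPropositional} to build in polynomial time a Kripke structure $\widehat{\Ku'}$ and an $\LTL$ formula $\widehat{\theta}$ whose accepted traces project to $\stfr_\Gamma(\Lang(\Ku',F'))$. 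By Remark~\ref{remark:connection}, the original instance is then equivalent to the $\LTL$-conditioned $\HLTL$ question $\Lang(\widehat{\Ku'},\widehat{\theta}) \models \HLTL(\varphi')$, which Proposition~\ref{prop:conditionedLTL} reduces in linear time, and with the same quantifier prefix, to a plain $\HLTL$ model-checking instance.

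For the upper bound, I track the sizes through the chain. The resulting $\HLTL$ sentence still has size $O(|\varphi|)$, the resulting Kripke structure has size $O(|\Ku| \cdot 2^{O(|\varphi|)})$, and the alternation depth remains $d$. By the bound of~\cite{Rabe2016}, $\HLTL$ model checking is decidable in nondeterministic space $O(\Tower_2(d, |\varphi| \log |\widehat{\Ku'}|))$. Since $\log |\widehat{\Ku'}| = O(\log |\Ku| + |\varphi|)$, this evaluates to $O(\Tower_2(d, |\varphi|^2 + |\varphi|\log|\Ku|))$, which is within $d$-\EXPSPACE\ in the total input size. For a fixed formula, $|\varphi|$ is constant and the bound collapses to $O(\Tower_2(d, O(\log |\Ku|)))$; one telescoping of the tower yields $O(\Tower_2(d-1, |\Ku|^{O(1)}))$, hence $(d-1)$-\EXPSPACE\ for $d>0$, while for $d=0$ the bound is $O(\log |\Ku|)$, i.e.\ \NLOGSPACE.

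The matching lower bounds are inherited directly: $\HLTL$ coincides with the fragment $\SHLTL[\emptyset]$ of simple $\SHLTL$, so the $d$-\EXPSPACE-hardness for the general problem and the $(d-1)$-\EXPSPACE-hardness (respectively \NLOGSPACE-hardness) for a fixed formula proved for $\HLTL$ in~\cite{Rabe2016} transfer unchanged. The only delicate point of the whole argument is to ensure that each step of the chain preserves the quantifier alternation depth exactly: for the first step this is built into Theorem~\ref{theorem:FormSimpleToProposition}, for the second it is transparent because only the underlying Kripke structure is modified and a fresh propositional atom $\acc$ is added to the signature, and for the third it is guaranteed by Proposition~\ref{prop:conditionedLTL}. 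I do not foresee any substantial technical obstacle beyond this bookkeeping.
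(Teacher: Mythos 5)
Your proposal is correct and follows essentially the same route as the paper: chain Theorem~\ref{theorem:FormSimpleToProposition}, the $\widehat{\Ku}$/$\widehat{\theta}$ construction with Remark~\ref{remark:connection}, and Proposition~\ref{prop:conditionedLTL} to reach a plain $\HLTL$ instance, then invoke the $O(\Tower_2(d,|\varphi|\log|\Ku|))$ space bound of~\cite{Rabe2016} for the upper bounds and inherit the lower bounds from $\HLTL=\SHLTL[\emptyset]$. Your explicit size bookkeeping (in particular the telescoping of one level of the tower for fixed formulas) just spells out what the paper leaves implicit.
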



\section{Context $\HLTL$}\label{sec:ContextHyper}

In this section, we introduce an alternative logical framework for
specifying asynchronous linear-time hyperproperties.
The novel framework, we call \emph{context} $\HLTL$ ($\CHLTL$ for
short), extends $\HLTL$ by unary modalities $\tpl{C}$ parameterized by
a non-empty subset $C$ of trace variables---called the
\emph{context}---which restrict the evaluation of the temporal
modalities to the traces associated with the variables in $C$.
Formally, $\CHLTL$ formulas over the given finite set $\AP$ of atomic
propositions and finite set $\Var$ of trace variables are linear-time
hyper expressions over multi-trace specifications $\psi$, called
\emph{$\CHLTL$ quantifier-free formulas}, where $\psi$ is defined by
the following syntax:
\[
   \psi ::=  \top  \ | \   \Rel{p}{x}  \ | \ \neg \psi \ | \ \psi \wedge \psi \ | \ \Next\psi \ | \   \psi \Until \psi \ | \ \tpl{C} \psi
\]
where $p\in \AP$, $x\in \Var$, and $\tpl{C}$ is the context modality
with $\emptyset \neq C\subseteq \Var$.
A \emph{context} is a non-empty subset of trace variables in
$\Var$.
The size $|\xi|$ of a $\CHLTL$ (quantifier-free) formula $\xi$ is the
number of distinct sub-formulas of $\xi$.
A context $C$ is \emph{global for a formula $\xi$} if $C$ contains all
the trace variables occurring in $\xi$.  \vspace{0.2cm}

\noindent \textbf{Semantics of $\CHLTL$ quantifier-free formulas.} Let
$\Pi$ be a pointed trace assignment.
Given a context $C$ and an offset $i\geq 0$, we denote by $\Pi +_C i$
the pointed trace assignment with domain $\Dom(\Pi)$ defined as
follows:
\begin{compactitem}
\item for each $x\in \Dom(\Pi)\cap C$ with $\Pi(x)=(\pi,h)$,
  $[\Pi +_C i](x)=(\pi,h+i)$;
\item for each $x\in \Dom(\Pi)\setminus C$, $[\Pi +_C i](x)=\Pi(x)$.
\end{compactitem}
Intuitively, the positions of the pointed traces associated with the
variables in $C$ advance of the offset $i$, while the positions of the
other pointed traces remain unchanged.

Given a $\CHLTL$ quantifier-free formula $\psi$, a context $C$, and a
pointed trace assignment $\Pi$ such that $\Dom(\Pi)$ contains the
trace variables occurring in $\psi$, the satisfaction relation
$(\Pi,C)\models \psi$ is inductively defined as follows (we omit the
semantics of the Boolean connectives which is standard):
  \[ \begin{array}{ll}
 (\Pi,C) \models \Rel{p}{x} & \Leftrightarrow  \Pi(x)=(\pi,i) \text{ and }p\in \pi(i)\\
  (\Pi,C) \models  \Next\psi & \Leftrightarrow (\Pi +_C 1,C)\models  \psi\\
  (\Pi,C) \models  \psi_1\Until \psi_2 & \Leftrightarrow  \text{for some }i\geq 0:\,  (\Pi +_C i,C)\models  \psi_2 \\
     &\phantom{\Leftrightarrow} \text{and }(\Pi +_C k,C) \models  \psi_1 \text{ for all } k<i\\
 (\Pi,C) \models \tpl{C'}\psi&\Leftrightarrow (\Pi,C') \models  \psi
\end{array} \]
We write $\Pi\models \psi$ to mean that $(\Pi,\Var)\models\psi$.
\vspace{0.2cm}

\noindent \textbf{Examples of specifications.}
Context $\HLTL$ extends $\HLTL$ by allowing to specify complex
combinations of asynchronous and synchronous constraints.
As an example consider the following property~\cite{GutsfeldOO21}.
A a $\HLTL$ quantifier-free formula $\psi(x_1,\ldots,x_n)$ holds along
the traces bound by variables $x_1\ldots,x_n$ after an initialization
phase, which can take a different number of steps on each trace.
This can be expressed by an $\CHLTL$ quantifier-free formula as
follows, where proposition $in$ characterizes the initialization
phase:
  \[ \begin{array}{l}
\tpl{\{x_1\}}(in[x_1] \Until (\neg in[x_1] \wedge \tpl{\{x_2\}}(\ldots \vspace{0.1cm}\\
\hspace{1cm}\tpl{\{x_n\}}(in[x_n] \Until (\neg in[x_n]\wedge \tpl{\{x_1,\ldots,x_n\}}\psi  )) \ldots )))
\end{array} \]
As another example, illustrating the high expressiveness of $\CHLTL$,
we consider the following hyper-bounded-time response requirement:
``\emph{for every trace there is a bound $k$ such that each request
  $q$ is followed by a response $p$ within exactly $k$ steps.}''
This can be expressed in $\CHLTL$ as follows:
%
%
 \[ \begin{array}{l}
     \forall x.\,\exists y.\, \Eventually \Rel{q}{x} \longrightarrow  \displaystyle{\bigl[\bigwedge_{r\in\AP}\Always(r[x] \leftrightarrow r[y])}\,\wedge \vspace{0.1cm} \\
      \hspace{0.2cm}\tpl{\{y\}}\Eventually
      \begin{pmatrix}
        \begin{array}{l}
          \Rel{q}{y} \wedge     \tpl{\{x\}} \Always
          \begin{pmatrix}
            \begin{array}{l}
              \Rel{q}{x} \rightarrow \\
              \{x,y\} (\neg \Rel{p}{y}\Until \Rel{p}{x})              
            \end{array}
          \end{pmatrix}
        \end{array}
      \end{pmatrix}\bigr]
\end{array} \]

Note that $x$ and $y$ refer to the same trace and the context
modalities are exploited to synchronously compare distinct segments
along the same trace, that correspond to different request-response
intervals.
This ability is not supported by Stuttering $\HLTL$.
On the other hand, we conjecture that unlike $\SHLTL$, $\CHLTL$ cannot
express asynchronous variants of security properties such as
noninterference and observational determinism (see
Subsection~\ref{sec:SyntaxSemanticsHLTLS}).

\subsection{Undecidability of model checking against $\CHLTL$}\label{sec:UndeciablilityContextHLTL}

In this section, we establish that model checking against $\CHLTL$ is
in general undecidable.
Let $\F_0$ and $\F_1$ be the fragments of $\CHLTL$ consisting of the
formulas such that the number of trace variables is $2$, the nesting
depth of context modalities is $2$, and, additionally,
\begin{inparaenum}[(i)]
\item in $\F_0$ the quantifier alternation depth is $0$, and
\item in $\F_1$ the quantifier alternation depth is $1$ and each
  temporal modality in the scope of a non-global context is
  $\Eventually$.
\end{inparaenum}

\begin{theorem}\label{theo:UndecidabilityContextHyeprLTL}
  The model checking problem against $\CHLTL$ is undecidable even for
  the fragments $\F_0$ and $\F_1$.
\end{theorem}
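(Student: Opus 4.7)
The plan is to reduce the Post Correspondence Problem (PCP) to the model checking problem for both fragments $\F_0$ and $\F_1$, following the template of the proof of Theorem~\ref{theo:UndecidabilitySHLTL}. For a PCP instance $\Instance$, I would reuse the Kripke structure $\Ku_\Instance$ of Proposition~\ref{prop:KripkeBuild}, whose traces encode candidate concatenations $\pi^{\ell}_{i_1,\ldots,i_k}$ under morphism $\ell\in\{1,2\}$, marked by propositions $p_i$ (index), $q_\ell$ (mode), and $\#$ (boundary/terminator). The role that relativized temporal modalities play in the $\SHLTL$ proof is here taken over by context modalities $\tpl{\cdot}$, which let one trace be advanced while the other is held fixed.

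For the fragment $\F_0$, I would use a sentence $\varphi_\Instance^0 \DefinedAs \exists x_1.\,\exists x_2.\,\psi_0$, where $\psi_0$ has context nesting depth $2$ and asserts three things: (i) $x_1$ is in mode $q_1$, $x_2$ in mode $q_2$, and both eventually enter the $\#^{\omega}$ tail (well-formedness); (ii) under the global context $\{x_1,x_2\}$, the $\Sigma$-projections of $x_1$ and $x_2$ agree position-by-position (synchronous symbol matching); (iii) the index markers along the boundaries of $x_1$ and $x_2$ coincide in order. Condition~(iii) is where context modalities are essential, and I would express it via subformulas of the shape $\tpl{\{x_1,x_2\}}\Always(\alpha \rightarrow \tpl{\{x_2\}}\beta)$ that asynchronously pair the $k$-th boundary on $x_1$ with the $k$-th boundary on $x_2$ and check that they carry the same index. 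Correctness would then reduce to the analogue of Proposition~\ref{prop:CharcaterizeSolutions}: the Kripke structure $\Ku_\Instance$ admits such a pair of traces if and only if $\Instance$ admits a solution.

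For the fragment $\F_1$, I would use a $\forall x_1.\,\exists x_2.\,\psi_1$ sentence of context nesting depth~$2$, obeying the further restriction that every temporal modality inside a non-global context (i.e., inside $\tpl{\{x_1\}}$ or $\tpl{\{x_2\}}$) is $\Eventually$. Intuitively, the universal quantifier ranges over all well-formed traces of $\Ku_\Instance$ in mode $q_1$; for each such $x_1$, the body demands the existence of an $x_2$ in mode $q_2$ that matches it symbol-by-symbol and whose boundary indices align with those of $x_1$. Since the existence of the relevant matching boundary on $x_2$ is a pure liveness statement, it fits naturally under $\tpl{\{x_2\}}\Eventually(\cdot)$; the synchronous agreement on $\Sigma$-symbols is handled under the global context with the usual $\Next$/$\Until$ (no restriction there). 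Undecidability would then be inherited from PCP exactly as in the $\SHLTL$ case.

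The main obstacle I expect is the faithful expression of \emph{order-preserving} alignment of the two boundary sequences within only two nested context modalities, and, for $\F_1$, with just $\Eventually$ inside the inner non-global context. The pairing must be functional and monotone, so that the $k$-th boundary of $x_1$ is forced to match the $k$-th boundary of $x_2$ rather than some later one; this is the delicate point in designing $\psi_0$ and $\psi_1$. Once this design is in place, well-formedness of the $\Ku_\Instance$-traces (Proposition~\ref{prop:KripkeBuild}) together with the characterization of PCP solutions in Proposition~\ref{prop:CharcaterizeSolutions} (or its mild adaptation to the context-modality setting) immediately yields the equivalence between the existence of a PCP solution for $\Instance$ and $\Ku_\Instance \models \varphi_\Instance^j$ for $j\in\{0,1\}$, giving the undecidability of both fragments.
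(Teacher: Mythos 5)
There is a genuine gap, and it sits exactly where you flag ``the delicate point'': the order-preserving alignment of the $k$-th boundary of $x_1$ with the $k$-th boundary of $x_2$. In your encoding the two traces carry blocks of \emph{different} lengths ($|u^1_{i_j}|\neq|u^2_{i_j}|$ in general), so the absolute positions of corresponding boundaries drift apart unboundedly, and in a direction that depends on the instance and on the chosen indices and can even reverse along the trace. Your proposed shape $\tpl{\{x_1,x_2\}}\Always(\alpha\rightarrow\tpl{\{x_2\}}\beta)$ launches, from each boundary of $x_1$ at absolute position $i$, a search along $x_2$ starting from that \emph{same} absolute position $i$; the first boundary of $x_2$ found at or after $i$ is the $m$-th boundary of $x_2$ for some $m$ determined by the accumulated length difference, not $m=k$, and if the $u^2$-blocks are shorter the $k$-th boundary of $x_2$ lies strictly \emph{before} $i$ and is unreachable, since context modalities only move pointers forward. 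The alternative of maintaining two asynchronous pointers that are advanced boundary-by-boundary in alternation requires re-entering the same alignment subformula unboundedly many times, which $\CHLTL$ (having no fixpoints) cannot do: the only iteration device is an outer $\Always$/$\Until$, and that resets both pointers to a common absolute position. So the gadget on which your whole reduction rests is not constructed, and the natural candidates fail. The paper avoids this entirely by reducing from Minsky $2$-counter machines and binding $x_1$ and $x_2$ to the \emph{same} trace (forced by $\Always\bigwedge_p(p[x_1]\leftrightarrow p[x_2])$); there the only asynchrony needed is ``from the start of the current $c$-code to the start of the next $\Beg_c$-block'', a local successor relation for which the pattern $\tpl{\{x_2\}}\Next(\neg\Beg_c[x_2]\Until(\Beg_c[x_2]\wedge\cdots))$ is exactly correct, because the outer $\Always$ under the global context legitimately re-synchronizes both pointers at every position.

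Your treatment of $\F_1$ has a second, independent problem. A sentence of the form $\forall x_1.\exists x_2$ asserting ``every well-formed mode-$q_1$ trace has a matching mode-$q_2$ trace'' does not characterize PCP solvability, which asks for the existence of \emph{one} index sequence on which the two concatenations agree, not a match for every index sequence. The paper's $\F_1$ proof uses the opposite order $\exists x_1.\forall x_2$ together with a marking trick: extra propositions $\#,\#'$ single out one pair of adjacent $c$-codes, $x_2$ ranges over all marked variants of the trace chosen for $x_1$, and the universal quantification over markings is what lets every adjacent pair be checked even though each individual $x_2$ certifies only one pair via a single $\Eventually$ inside the non-global context. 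Nothing in your sketch plays this role, so even granting the alignment gadget, the restriction of $\F_1$ to $\Eventually$ under non-global contexts is not addressed.
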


Theorem~\ref{theo:UndecidabilityContextHyeprLTL} is proved by a
polynomial-time reduction from the \emph{halting problem for Minsky
  $2$-counter machines}~\cite{Minsky67}.
Such a machine is a tuple $M = \tpl{Q,q_\init,q_\halt,$ $\Delta}$,
where $Q$ is a finite set of (control) locations, $q_\init\in Q$ is
the initial location, $q_\halt\in Q$ is the halting location, and
$\Delta \subseteq Q\times \Inst \times Q$ is a transition relation
over the instruction set $\Inst= \{\inc,\dec,\zero\}\times \{1,2\}$.
We adopt the following notational conventions.
For an instruction $\instr=(\_\,,c)\in \Inst$, let
$c(\instr)\DefinedAs c\in\{1,2\}$ be the \emph{counter} associated with
$\instr$.
For a transition $\delta\in \Delta$ of the form
$\delta=(q,\instr,q')$, we define $\From(\delta)\DefinedAs q$,
$\instr(\delta)\DefinedAs\instr$, $c(\delta)\DefinedAs c(\instr)$, and
$\To(\delta)\DefinedAs q'$.
Without loss of generality, we assume that for each transition
$\delta\in \Delta$, $\From(\delta)\neq q_\halt$.

An $M$-configuration is a pair $(q,\nu)$ consisting of a location
$q\in Q$ and a counter valuation $\nu: \{1,2\}\to \N$.
A computation of $M$ is a non-empty \emph{finite} sequence
$(q_1,\nu_1),\ldots ,(q_k,\nu_k)$ of configurations such that for each
$1\leq i<k$, $(q_i,\instr,q_{i+1})\in \Delta$ for some instruction
$\instr\in \Inst$ (depending on $i$) and the following holds, where
$c\in \{1,2\}$ is the counter associated with the instruction
$\instr$:
\begin{inparaenum}[(i)]
\item $\nu_{i+1}(c')= \nu_i(c')$ if $c'\neq c$;
\item $\nu_{i+1}(c)= \nu_i(c) +1$ if $\instr=(\inc,c)$;
\item $\nu_{i+1}(c)= \nu_i(c) -1$ if $\instr=(\dec,c)$ (in particular,
  it has to be $\nu_i(c)>0$); and
\item $\nu_{i+1}(c)= \nu_i(c)=0$ if $\instr=(\zero,c)$.
\end{inparaenum}
$M$ \emph{halts} if there is a computation starting at the
\emph{initial} configuration $(q_\init, \nu_\init)$, where
$\nu_\init(1) = \nu_\init(2) = 0$, and leading to some halting
configuration $(q_{\halt}, \nu)$.
The halting problem is to decide whether a given machine $M$ halts,
and it is undecidable~\cite{Minsky67}.
We prove the following result, from which
Theorem~\ref{theo:UndecidabilityContextHyeprLTL} for the fragment
$\F_0$ directly follows.

\begin{proposition}\label{prop:undecidabilityContextHLTL}
  One can build 
  a finite Kripke Structure $\Ku_M$ and a $\CHLTL$ sentence
  $\varphi_M$ in the fragment $\F_0$ such that $M$ halts \emph{iff}
  $\Ku_M\models \varphi_M$.
\end{proposition}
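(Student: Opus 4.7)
The plan is to reduce from the halting problem for Minsky $2$-counter machines, encoding computations as traces and enforcing the transition semantics using the two levels of context modalities allowed in $\F_0$.

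First I will design the Kripke structure $\Ku_M$ over an alphabet that includes $Q$, $\Delta$, a tick symbol, and separators $\#$ and $\$$. A trace of $\Ku_M$ encodes an infinite sequence of configurations, each of the form $\#\,q\,\delta\,1^{\nu(1)}\,\$\,1^{\nu(2)}$, followed by an infinite padding region once the halting location is reached. The Kripke structure ensures only syntactic well-formedness and local location/transition consistency (namely $\From(\delta_i)=q_i$ and $\To(\delta_i)=q_{i+1}$), starting from the initial configuration $(q_\init,0,0)$; it does \emph{not} verify the counter update semantics. This is a standard polynomial-time construction.

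Then I will construct $\varphi_M \DefinedAs \exists x_1.\exists x_2.\,\psi$, which has quantifier alternation depth $0$ and uses two variables. The body $\psi$ is a conjunction of three parts: (i) $\Always\bigwedge_{p\in\AP}(\Rel{p}{x_1}\leftrightarrow\Rel{p}{x_2})$, forcing $x_1$ and $x_2$ to refer to the same trace; (ii) $\Eventually\Rel{q_\halt}{x_1}$, forcing the trace to reach the halting location; and (iii) an $\Always$-quantified implication triggered whenever $x_1$ is at a configuration boundary (marked by $\#$) of a non-halting configuration, which enforces the correct counter update for the chosen transition $\delta_i$. Since (i) and (ii) use no context modality, the context-nesting depth of $\psi$ is determined by (iii).

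The consequent of the implication in (iii) uses exactly two nested context modalities. Under the global context (coming from the outer $\Always$), the heads of $x_1$ and $x_2$ are at the same configuration start; I first apply $\tpl{\{x_2\}}$ to advance $x_2$ alone to the \emph{next} configuration boundary via $\Eventually\Rel{\#}{x_2}$. Inside, I apply $\tpl{\{x_1,x_2\}}$ to scan the two counter tapes in lockstep: depending on the instruction carried by $\delta_i$ I check (a) preservation of the unaffected counter (its tick segments end simultaneously on both heads), (b) increment or decrement of the affected counter (the two tick segments end one position out of sync in the appropriate direction), and (c) the zero test (the first symbol of the counter-$c$ segment on $x_1$ is the separator $\$$). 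Each of these is a Boolean combination of $\Next$, $\Until$, and $\Eventually$ subformulas evaluated under $\{x_1,x_2\}$, so the overall context-nesting depth is exactly $2$ and $\psi$ lies in $\F_0$.

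For correctness: if $M$ halts, then the unique trace encoding its halting computation (with padding) lies in $\Lang(\Ku_M)$ and, with $x_1=x_2$ set to this trace, satisfies every conjunct of $\psi$ by construction. Conversely, any pair of traces witnessing $\Ku_M\models\varphi_M$ must, by (i) and (ii), be a single trace that reaches $q_\halt$, and by (iii) its successive configurations respect the full semantics of the selected instruction at each step, so it decodes a halting computation of $M$. The main obstacle, and the reason the construction is nontrivial, is realising the four update checks (preservation, increment, decrement, zero test) within only two levels of context nesting; this forces the order ``advance $x_2$ first under $\tpl{\{x_2\}}$, then compare in lockstep under $\tpl{\{x_1,x_2\}}$'', since any attempt to realign the heads inside a lockstep scan would require further context switches and exceed depth $2$. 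Undecidability of the halting problem~\cite{Minsky67} then yields undecidability of model checking for $\F_0$.
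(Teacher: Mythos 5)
Your overall strategy coincides with the paper's: reduce from Minsky-machine halting, let the Kripke structure enforce only syntactic well-formedness and control-flow consistency (not the counter semantics), bind two existentially quantified variables to the same trace, and verify the counter updates with the depth-$2$ pattern ``advance $x_2$ alone under $\tpl{\{x_2\}}$ to the corresponding point of the next configuration, then compare in lockstep under $\tpl{\{x_1,x_2\}}$''. You correctly identify that pattern as the crux of staying inside $\F_0$.

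However, the update check as you describe it has a genuine flaw. You fire the implication once per configuration, at the $\#$ marker, advance $x_2$ to the next $\#$, and then scan \emph{both} counter segments in a single lockstep pass. If the instruction operates on counter $1$ (say an increment), the counter-$1$ segment of the successor configuration is one tick longer, so the counter-$2$ segments of the two configurations sit at positions offset by one relative to the two heads. Your preservation check ``the tick segments end simultaneously on both heads'' then enforces $\nu_{i+1}(2)=\nu_i(2)-1$ rather than $\nu_{i+1}(2)=\nu_i(2)$, and the trace encoding a genuine computation is rejected (symmetrically for a decrement). The paper sidesteps this by introducing a begin-marker $\Beg_c$ for each counter $c$ and firing a \emph{separate} implication at the start of each $c$-code, so that every comparison starts from aligned heads; alternatively, you could compensate for the statically known $\pm 1$ offset by inserting an extra $\Next$ on one side when checking counter $2$ after an operation on counter $1$. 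Either repair keeps the context-nesting depth at $2$ and the alternation depth at $0$, so the proposition still goes through, but the construction as written does not.
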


\begin{proof}
  First, we define a suitable encoding of a computation of $M$ as a
  trace where the finite set $\AP$ of atomic propositions is given by
  $ \AP\DefinedAs \Delta \cup \{1,2,\Beg_1,\Beg_2\} $.

  Intuitively, in the encoding of an $M$-computation, we keep track of
  the transition used in the current step of the
  computation.
  Moreover, for each $c\in \{1,2\}$, the propositions in
  $\{c,\Beg_c\}$ are used for encoding the current value of counter
  $c$.
  In particular, for $c\in \{1,2\}$, a \emph{$c$-code for the
    $M$-transition $\delta\in\Delta$} is a finite word $w_c$ over
  $2^{\AP}$ of the form $\{\delta,\Beg_c\}\cdot \{\delta,c\}^{h}$ for
  some $h\geq 0$ such that $h=0$ if $\instr(\delta)= (\zero,c)$.
  The $c$-code $w_c$ encodes the value for counter $c$ given by $h$
  (or equivalently $|w_c|-1$).
  {Note that only the occurrences of the symbols $\{\delta,c\}$ encode
    units in the value of counter $c$, while the symbol
    $\{\delta,\Beg_c\}$ is only used as left marker in the encoding.}
  A \emph{configuration-code $w$ for the $M$-transition
    $\delta\in\Delta$} is a finite word over $2^{\AP}$ of the form
  $w= \{\delta\} \cdot w_1 \cdot w_2 $ such that for each counter
  $c\in \{1,2\}$, $w_c$ is a $c$-code for transition $\delta$.
  The configuration-code $w$ encodes the $M$-configuration
  $(\From(\delta),\nu)$, where $\nu(c)=|w_c|-1$ for all
  $c\in \{1,2\}$. Note that if $\instr(\delta)=(\zero,c)$, then
  $\nu(c)=0$.

  A \emph{computation}-code is a trace of the form
  $\pi= w_{ \delta_1} \cdots w_{ \delta_k} \cdot \emptyset^{\omega}$,
  where $k\geq 1$ and for all $1\leq i\leq k$, $w_{ \delta_i}$ is a
  configuration-code for transition $ \delta_i$, and whenever $i<k$,
  it holds that $\To(\delta_i)=\From(\delta_{i+1})$.
  Note that by our assumptions $\To(\delta_i)\neq q_\halt$ for all
  $1\leq i<k$.
  The computation-code $\pi$ is \emph{initial} if the first
  configuration-code $w_{\delta_1}$ encodes the initial configuration,
  and it is \emph{halting} if for the last configuration-code
  $w_{\delta_k}$ in $\pi$, it holds that $\To(\delta_k)=q_\halt$.
  For all $1\leq i\leq k$, let $(q_i,\nu_i)$ be the $M$-configuration
  encoded by the configuration-code $w_{\delta_i}$ and
  $c_i= c(\delta_i)$.
  The computation-code $\pi$ is \emph{good} if, additionally, for all
  $1\leq j< k$, the following holds:
  \begin{inparaenum}[(i)]
  \item $\nu_{j+1}(c)=\nu_j(c)$ if either $c \neq c_j$ or $\instr(\delta_j)= (\zero,c_j)$  (\emph{equality requirement});
  \item $\nu_{j+1}(c_j)= \nu_j(c_j)+1$ if $\instr(\delta_j)= (\inc,c_j)$ (\emph{increment requirement});
  \item $\nu_{j+1}(c_j)= \nu_j(c_j)-1$ if $\instr(\delta_j)= (\dec,c_j)$ (\emph{decrement requirement}).
\end{inparaenum}

\noindent Clearly, $M$ halts \emph{iff} there exists an initial and
halting good computation-code.
By construction, it is a trivial task to define a Kripke structure
$\Ku_M$ satisfying the following.\vspace{0.2cm}

\noindent \textbf{Claim.} One can construct in polynomial time a
finite Kripke structure $\Ku_M$ over $\AP$ such that the set of traces
of $\Ku_M$ which visit some empty position (\ie, a position with
label the empty set of propositions) corresponds to the set of initial
and halting computation-codes.\vspace{0.2cm}

We now define a $\CHLTL$ sentence $\varphi_M$ in the fragment $\F_0$
that, when interpreted on the Kripke structure $\Ku_M$, captures the
traces $\pi$ of $\Ku_M$ which visit some empty position (hence, by the
previous claim, $\pi$ is an initial and halting computation-code) and
satisfy the goodness requirement.
\[
  \varphi_M\DefinedAs \exists x_1.\,\exists x_2.\, \Always
  \displaystyle{\bigwedge_{p\in\AP}}(\Rel{p}{x_1}\leftrightarrow
  \Rel{p}{x_2})\wedge \Eventually\displaystyle{\bigwedge_{p\in\AP}}\neg
  \Rel{p}{x_1}\wedge \psi_{good}
\]
where the $\CHLTL$ quantifier-free sub-formula $\psi_{good}$ is
defined in the following.
Intuitively, when interpreted on the Kripke structure $\Ku_M$ of the
previous claim, formula $\varphi_M$ asserts the existence of two
traces $\pi_1$ and $\pi_2$ bounded to the trace variables $x_1$ and
$x_2$, respectively, such that
\begin{inparaenum}[(i)]
\item $\pi_1$ and $\pi_2$ coincide (this is ensured by the first
  conjunct);
\item $\pi_1$ is an initial and halting computation-code (this is
  ensured by the previous claim and the second conjunct);
\item $\pi_1$ satisfies the goodness requirement by means of the
  conjunct $\psi_{good}$.
 \end{inparaenum}

 We now define the quantifier-free formula $\psi_{good}$.
 Let $\Delta_\halt\DefinedAs\{\delta\in\Delta\mid \To(\delta)=q_\halt\}$ be
 the set of transitions having as a target location the halting
 location.
 In the definition of $\psi_{good}$, we crucially exploit the context
 modalities.
 Essentially, for each position $i\geq 0$ along $\pi_1$ and $\pi_2$
 corresponding to the initial position of a $c$-code for a transition
 $\delta\notin \Delta_\halt$ within a configuration code $w_\delta$,
 we exploit:
 \begin{compactitem}
 \item temporal modalities in the scope of the context modality
   $\tpl{\{x_2\}}$ for moving the current position along trace $\pi_2$
   (the trace bounded by $x_2$) to the beginning of the $c$-code of
   the configuration code $w'$ following $w_\delta$,
 \item and then we use the temporal modalities in the scope of the
   global context $\{x_1,x_2\}$ for synchronously ensuring that for
   the $c$-codes associated to the consecutive configuration codes
   $w_{\delta}$ and $w'$, the equality, increment, and decrement
   requirements are fulfilled.
 \end{compactitem}\vspace{0.1cm}

 \noindent Formally, the $\CHLTL$ quantifier-free formula
 $\psi_{good}$ is defined as follows:
\[
 \begin{array}{l}
\psi_{good}\DefinedAs   \Always\displaystyle{\bigwedge_{\delta\in\Delta\setminus \Delta_\halt}\bigwedge_{c\in\{1,2\}}}\Bigl[(\delta[x_1]\wedge \Beg_c[x_1])\longrightarrow \\
\hspace{1cm}\tpl{\{x_2\}}\Next\Bigl(\neg\Beg_c[x_2] \Until \bigl(\Beg_c[x_2] \wedge \\
\hspace{1cm} \tpl{\{x_1,x_2\}}(\psi_{=}(\delta,c)\wedge \psi_{inc}(\delta,c)\wedge \psi_{dec}(\delta,c))\bigr) \Bigr)\Bigr]
\end{array}
\]
where the sub-formulas $\psi_{=}(\delta,c)$, $\psi_{inc}(\delta,c)$,
and $\psi_{dec}(\delta,c)$ capture the equality, increment, and
decrement requirement, respectively, and are defined as follows.
\[
\begin{array}{r@{\;}c@{\;}l}
\psi_{=}(\delta,c)&\DefinedAs & [c\neq c(\delta)\vee  \instr(\delta)= (\zero,c)]\longrightarrow  \\[0.5em]
&& \hspace{-1cm} \Next [(c[x_1] \wedge c[x_2])\Until (\neg c[x_1]\wedge \neg c[x_2])]\vspace{0.1cm}\\
\psi_{inc}(\delta,c)&\DefinedAs & \instr(\delta)= (\inc,c)\longrightarrow  \\[0.5em]
&& \hspace{-1cm}\Next [(c[x_1] \wedge c[x_2])\Until (\neg c[x_1]\wedge c[x_2] \wedge \Next \neg c[x_2])]\vspace{0.1cm}\\
\psi_{dec}(\delta,c)&\DefinedAs & \instr(\delta)= (\dec,c)\longrightarrow \\[0.5em]
&& \hspace{-1cm} \Next [(c[x_1] \wedge c[x_2])\Until (c[x_1]\wedge \neg c[x_2] \wedge \Next \neg c[x_1])]
\end{array}
\]
This finishes the proof.
\end{proof}

\subsection{Fragment of $\CHLTL$ with decidable model checking}

By Theorem~\ref{theo:UndecidabilityContextHyeprLTL}, model checking 
$\CHLTL$ is undecidable even for formulas where $\Eventually$ is the
unique temporal modality occurring in the scope of a non-global
context operator.
This justifies the investigation of the fragment, we call
\emph{bounded $\CHLTL$}, consisting of the $\CHLTL$ formulas where the
unique temporal modality occurring in a non-global context is the next
modality $\Next$.
For instance, for each $k\geq 0$, the formula
$\tpl{\{x_1\}}\Next^{k}(\tpl{\{x_1,x_2\}}\Always(\Rel{p}{x_1}\leftrightarrow
\Rel{p}{x_2}))$ is bounded while the formula
$\tpl{\{x_1\}}\Eventually(\tpl{\{x_1,x_2\}}\Always(\Rel{p}{x_1}\leftrightarrow
\Rel{p}{x_2}))$ is not.
Note that bounded $\CHLTL$ subsumes $\HLTL$ and is able to express a
restricted form of asynchronicity by allowing to compare traces at
different timestamps whose distances are bounded (a bound is given by
the nesting depth of next modalities in the formula).
As an example, the after-initialization synchronization requirement
described after the definition of $\CHLTL$ can be expressed by
assuming that the lengths of the initialization phases differ at most
a given integer $k$.
We conjecture that bounded $\CHLTL$ is not more expressive than
$\HLTL$.
However, as a consequence of Theorem~\ref{theo:ComplexityBoundedCHLTL}
below, for a fixed quantifier alternation depth, bounded $\CHLTL$ is
at least singly exponentially more succinct than $\HLTL$.

We show that model checking against bounded $\CHLTL$ is decidable by a
polynomial-time translation of bounded $\CHLTL$ quantifier-free
formulas $\psi$ into equivalent $(|\psi|+1)$-synchronous B\"{u}chi
$\AAWA$.

\newcounter{prop-FromContextHyperToAAWA}
\setcounter{prop-FromContextHyperToAAWA}{\value{proposition}}
\newcounter{sec-FromContextHyperToAAWA}
\setcounter{sec-FromContextHyperToAAWA}{\value{section}}

\begin{proposition}
  \label{prop:FromContextHyperToAAWA}
  Given a $\CHLTL$ quantifier-free formula $\psi$ with trace variables
  $x_1,\ldots,x_n$, one can build in polynomial time a B\"{u}chi
  $\NAAWA$ $\Au_\psi$ such that $\Lang(\Au_\psi)$ is the set of
  $n$-tuples $(\pi_1,\ldots,\pi_n)$ of traces so that
  $(\{x_1\mapsto (\pi_1,0),\ldots,x_1\mapsto
  (\pi_n,0)\},\{x_1,\ldots,x_n\})\models \psi$. Moreover, $\Au_\psi$
  is $(|\psi|+1)$-synchronous if $\psi$ is in the bounded fragment of
  $\CHLTL$.
\end{proposition}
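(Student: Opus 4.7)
The plan is to adapt the classical Vardi--Wolper translation of linear temporal formulas into alternating B\"uchi automata to the asynchronous multi-tape setting with explicit contexts. First I would put $\psi$ into negation normal form, using the dual modality $\Release$ for $\Until$ so that negation is pushed down to the relativized atoms $\Rel{p}{x}$. Then I would statically annotate every subformula of $\psi$ with its \emph{effective context} --- the set given by the innermost enclosing $\tpl{C'}$ operator, defaulting to the ``global'' context $\{x_1,\ldots,x_n\}$ at the root --- absorbing the $\tpl{C'}$ modalities into these annotations. The control states of $\Au_\psi$ are these annotated subformulas together with auxiliary ``move'' states described below; the initial state is $\psi$ annotated with $\{x_1,\ldots,x_n\}$. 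The target invariant, to be established by induction on the annotated syntax tree, is that $\Au_\psi$ has an accepting run from state $(\varphi',C)$ at position vector $\wp$ iff $(\Pi_{\wp},C)\models \varphi'$, with $\Pi_\wp$ the pointed assignment determined by the input multi-trace and $\wp$.

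The transition function is defined by cases on the annotated subformula. For $\Rel{p}{x_i}$ (resp.\ $\neg\Rel{p}{x_i}$), $\rho$ evaluates to $\true$ when $p\in a_i$ (resp.\ $p\notin a_i$) and to $\false$ otherwise, with no successor required since the empty set of children satisfies $\true$. Boolean connectives translate homomorphically via $\wedge$ and $\vee$ in $\B^{+}$. For $\Next\psi'$ with effective context $C=\{x_{i_1},\ldots,x_{i_k}\}$, since the $\NAAWA$ model advances exactly one tape per transition, I introduce intermediate states $(\Next\psi',C,j)$ for $j\in[1,k]$ whose transitions deterministically advance one step along direction $i_j$ and continue to the next intermediate state, reaching finally the state for $\psi'$ annotated with $C$. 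The $\Until$ and $\Release$ cases use the standard fixpoint unrolling $\psi_1\Until\psi_2\equiv \psi_2\vee(\psi_1\wedge \Next(\psi_1\Until\psi_2))$ (dually for $\Release$), re-using the $\Next$ gadget in the current effective context. Acceptance is the usual alternating B\"uchi condition: $F$ contains every state except those annotated with an until subformula, so that any infinite path forever unrolling an until is rejected. The number of control states is $O(n\cdot|\psi|)$ and the transition table is polynomial, yielding the claimed polynomial-time construction; correctness follows by induction on the structure of annotated subformulas.

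For the $(|\psi|+1)$-synchronous claim in the bounded fragment, the key observation is that bounded $\CHLTL$ forbids $\Until$ and $\Release$ under a non-global context, so no fixpoint unrolling ever fires while the context is non-global; consequently a run can drift out of lockstep only while traversing a syntactically fixed, finite region of $\psi$. Along any path of the run tree, the position of tape $\ell$ equals the number of $\Next$ operators traversed whose effective context contains $x_\ell$, and globally-contextualised $\Next$s (including those produced by $\Until$-unrollings) advance every tape equally and hence cancel from any pairwise difference $|i_\ell-i_{\ell'}|$. Only $\Next$ subformulas of $\psi$ whose effective context separates $x_\ell$ from $x_{\ell'}$ contribute, and there are at most $|\psi|$ such subformulas in $\psi$, giving $|i_\ell-i_{\ell'}|\leq |\psi|$ once a possible transient $+1$ from a half-completed $(\Next\psi',C,j)$ sequence is accounted for. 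The hardest part, I expect, is to pin down this drift bound uniformly across every node of the run tree --- verifying that the auxiliary move states do not introduce additional asymptotic drift --- and to check that the B\"uchi condition interacts cleanly with those auxiliary states so that accepting runs of $\Au_\psi$ correspond exactly to pointed assignments satisfying $\psi$.
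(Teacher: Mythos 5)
Your construction is essentially the paper's: states pair an NNF subformula with its effective context plus a round-robin counter over that context's variables, $\Next$ is realized as one single-tape advance per context variable, $\Until$/$\Release$ are unrolled as fixpoints reusing that gadget, and the $(|\psi|+1)$-synchrony bound is obtained by the same observation that along any run path each non-global $\Next$ subformula is traversed at most once while advances in a global context cancel from pairwise offsets. The one detail to tighten is the acceptance set: the intermediate move states generated by an until's self-loop must be non-accepting (the paper ensures this by declaring accepting only the states of the form $(\theta_1\Release\theta_2,C,\last(C))$), since otherwise an infinite until-unrolling could visit an accepting auxiliary state infinitely often and be wrongly accepted.
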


\begin{proof} By exploiting the release modality $\Release$ (the dual
  of the until modality), we can assume without loss of generality
  that $\psi$ is in negation normal form, so negation is applied only
  to relativized atomic propositions.
  The construction of the B\"{u}chi $\NAAWA$ $\Au_\psi$ is a
  generalization of the standard translation of $\LTL$ formulas into
  equivalent standard B\"{u}chi alternating word automata.
  In particular, the automaton $\Au_\psi$ keeps track in its state of
  the sub-formula of $\psi$ currently processed, of the current
  context $C$, and of a counter modulo the cardinality $|C|$ of
  $C$.
  This counter is used for recording the directions associated to the
  variables in $C$ for which a move of one position to the right has
  already been done in the current phase of $|C|$-steps.
  \details{The transition function reflects the `local'
    characterization of the semantics of the Boolean connectives and
    the temporal and context modalities, while the B\"{u}chi
    acceptance condition is used for ensuring the fulfillment of the
    liveness requirements $\theta_2$ in the until sub-formulas
    $\theta_1\Until \theta_2$ of $\psi$.
  }
  By construction, whenever the automaton is in a state associated
  with a sub-formula $\theta$ of $\psi$, then $\Au_\psi$ can move only
  to states associated with $\theta$ or with strict sub-formulas of
  $\theta$.
  In particular, each path in a run of $\Au_\psi$ can be factorized
  into a finite number $\nu_1,\ldots,\nu_k$ of contiguous segments
  (with $\nu_k$ possibly infinite) such that for each $i\in [1,k]$,
  segment $\nu_i$ is associated with a sub-formula $\theta_i$ of
  $\psi$ and a context $C_i$ occurring in $\psi$, and the following
  holds, where the \emph{offset} of a position vector
  $\wp=(j_1,\ldots,j_n)$ in $\N^{n}$ is the maximum over the
  differences between pairs of components, i.e.
  $\max(\{j_\ell-j_{\ell'}\mid \ell,\ell'\in [1,n]\})$:
\begin{itemize}
\item there is some occurrence of $\theta_i$ in $\psi$ which is in the
  scope of the context modality $\tpl{C_i}$;
\item if $i<k$, then $\theta_{i+1}$ is a strict sub-formula of $\theta_i$;
\item if either $C_i$ is global or the root modality of $\theta_i$ is
  not in $\{\Until,\Release\}$, then the offset at each node along the
  segment $\nu_i$ and at the first node of $\nu_{i+1}$ if $i<k$ is at
  most the offset at the beginning of $\nu_i$ plus one.
\end{itemize}
\noindent Hence, if $\psi$ is in the bounded fragment of $\CHLTL$, the
offset at each node of a run is at most $|\psi|+1$, i.e. $\Au_\psi$ is
$(|\psi|+1)$-synchronous and the result follows.
\end{proof}

By exploiting Propositions~\ref{prop:KsynchronousAWA}
and~\ref{prop:FromContextHyperToAAWA}, we deduce that
for a fixed quantifier alternation depth $d$, model checking against
bounded $\CHLTL$ is $(d+1)$-\EXPSPACE-complete, hence singly
exponentially harder than model checking against $\HLTL$.
However, for a fixed formula, the complexity of the problem is the
same as for $\HLTL$.

\begin{theorem}\label{theo:ComplexityBoundedCHLTL}
  Let $d\in \N$.
  The (fair) model checking problem against bounded $\CHLTL$ sentences
  of quantifier alternation depth $d$ is $(d+1)$-\EXPSPACE-complete,
  and for a fixed formula, it is $(d-1)$-\EXPSPACE-complete for $d>0$
  and \NLOGSPACE-complete otherwise.
  
\end{theorem}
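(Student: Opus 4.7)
The plan is to derive the upper bounds by a polynomial-time reduction from bounded $\CHLTL$ to linear-time hyper sentences over $k$-synchronous B\"uchi $\NAAWA$, and to obtain the lower bounds by subsumption of $\HLTL$ together with a succinctness-based reduction for the general case. For the upper bounds, given a bounded $\CHLTL$ sentence $\varphi = Q_n x_n\ldots Q_1 x_1.\psi$ of alternation depth $d$, I apply Proposition~\ref{prop:FromContextHyperToAAWA} to convert the quantifier-free body $\psi$ in polynomial time into an equivalent $k$-synchronous B\"uchi $\NAAWA$ $\Au_\psi$ with $k=|\psi|+1$. Attaching the original quantifier prefix yields a linear-time hyper sentence $\xi$ over $k$-synchronous B\"uchi $\NAAWA$ of the same alternation depth $d$, with $\Ku\models\varphi$ iff $\Ku\models\xi$ in both the plain and the fair setting. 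Proposition~\ref{prop:KsynchronousAWA} then furnishes the claimed $(d+1)$-\EXPSPACE\ algorithm in the general case; when $\varphi$ is fixed, $\Au_\psi$ and hence $k$ are fixed, so the fixed-formula clause of Proposition~\ref{prop:KsynchronousAWA} yields the $(d-1)$-\EXPSPACE\ bound for $d>0$ and the \NLOGSPACE\ bound for $d=0$.

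The matching lower bounds for fixed formulas come essentially for free: every $\HLTL$ sentence is a bounded $\CHLTL$ sentence (its only context, if any, is the global one over all trace variables), so the known $(d-1)$-\EXPSPACE-hardness (for $d>0$) and \NLOGSPACE-hardness (for $d=0$) for fixed $\HLTL$ model checking~\cite{Rabe2016} transfer verbatim. In particular, a hard $\HLTL$ instance is already a hard bounded $\CHLTL$ instance, with no change to Kripke structure or formula.

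The step I expect to be the main obstacle is the matching $(d+1)$-\EXPSPACE\ hardness in the non-fixed-formula case, which is one exponential above the $\HLTL$ lower bound. I would prove it by adapting the standard tower-tiling reduction used for $\HLTL$ hardness, leveraging the exponential succinctness of bounded $\CHLTL$ over $\HLTL$. Specifically, inside a non-global single-variable context $\tpl{\{x\}}$, a chain of $\Next$ modalities advances only the trace bound to $x$ while leaving the other traces frozen, which lets one encode position offsets of size exponential in the formula using polynomially many atomic propositions. Plugging this succinct offset gadget into the alternating-quantifier counter encoding that yields $d$-\EXPSPACE\ hardness for $\HLTL$ raises by one the height of the counter tower, lifting the hardness to $(d+1)$-\EXPSPACE. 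The delicate point is to keep every next modality strictly inside single-variable contexts so that the resulting formula stays in the bounded fragment, while still arranging that outer global-context wrappers permit the synchronous comparison of traces at the desired staggered positions; I expect this bookkeeping to be the principal source of technical friction in the argument.
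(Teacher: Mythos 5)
Your upper bounds and your fixed-formula lower bounds follow exactly the paper's route: translate the quantifier-free body via Proposition~\ref{prop:FromContextHyperToAAWA} into a $(|\psi|+1)$-synchronous B\"uchi $\NAAWA$, invoke Proposition~\ref{prop:KsynchronousAWA}, and inherit the fixed-formula hardness from $\HLTL$ by subsumption. The paper also agrees with you that the only real work is the $(d+1)$-\EXPSPACE\ hardness for non-fixed formulas, obtained by adapting the reduction of~\cite{Rabe2016}: it suffices to build, in time polynomial in $n$, an existential bounded $\CHLTL$ formula $\psi(x,y)$ forcing $\Rel{p}{y}$ to occur exactly $n\cdot 2^{n}\cdot 2^{2^{n}}$ positions after $\Rel{p}{x}$, one exponential beyond the $n\cdot 2^{n}$ yardstick achievable in $\HLTL$ at alternation depth $0$.

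Here, however, your sketch has a genuine gap: the mechanism you propose for the succinct yardstick does not work. A chain of $m$ occurrences of $\Next$ inside a context $\tpl{\{x\}}$ advances the trace bound to $x$ by exactly $m$ positions, so a polynomial-size formula realizes only polynomially bounded offsets this way; and since the formula must stay in the bounded fragment, you cannot place $\Until$ or $\Eventually$ under a non-global context to bridge larger distances. The extra exponential does not come from long shifts at all. The paper's construction encodes the segment between $\Rel{p}{x}$ and $\Rel{p}{y}$ as $2^{2^{n}}$ consecutive ``$2$-blocks'', each of length $n\cdot 2^{n}$, carrying successive values of a $2^{n}$-bit counter whose bits are themselves addressed by an $n$-bit counter; the hard part is verifying the increment of the $2^{n}$-bit counter between adjacent $2$-blocks, which requires matching sub-blocks with equal $n$-bit addresses across a distance of $n\cdot 2^{n}$. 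This is done by introducing $n$ extra existentially quantified trace variables $x_1,\ldots,x_n$ and using a \emph{single} $\Next$ under each context $\tpl{\{x_i\}}$ as a persistent one-bit register: $x_i$ is conditionally shifted by one position iff the $i$-th address bit of the current sub-block is $0$, and the resulting misalignment of $\Rel{p}{x_i}$ relative to the block boundary is read off later, under the global context, at the candidate matching sub-block. Without this register idea (or an equivalent one), the ``succinct offset gadget'' you rely on has no construction, so the $(d+1)$-\EXPSPACE\ lower bound remains unproved in your argument.
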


\begin{proof}
  The upper bounds follow from Propositions~\ref{prop:KsynchronousAWA}
  and~\ref{prop:FromContextHyperToAAWA}, while since bounded $\CHLTL$
  subsumes $\HLTL$, the lower bound for a fixed formula of alternation
  depth $d$ is inherited from the known one for
  $\HLTL$~\cite{Rabe2016}.
  Finally, for $(d+1)$-\EXPSPACE-hardness, we adapt the reduction
  given in~\cite{Rabe2016} for showing that for all integer constants
  $c>1$ and $c'\geq 1$, model checking against $\HLTL$ sentences
  $\varphi$ with quantifier alternation depth $d$ requires space at
  least $\Omega(\Tower_c(d,|\varphi|^{c'}))$.
  Here, for simplicity, we assume that $c=2$ and $c'=1$.
  The reduction in~\cite{Rabe2016} for model checking $\HLTL$ is based
  on building, for each $n>1$, an $\HLTL$ formula of size polynomial
  in $n$, with quantifier alternation depth $d$ over a singleton set
  $\AP=\{p\}$ of atomic propositions.
  This formula is of the form $\psi_d(x,y)$ for two free trace
  variables $x$ and $y$ such that for all traces $\pi_x$ and $\pi_y$
  (over $\AP$),
  $\{x \mapsto (\pi_x,0),y \mapsto (\pi_y,0)\}\models \psi_d(x,y)$
  \emph{if and only if} $p$ occurs exactly once in $\pi_x$ (resp.,
  $\pi_y$) and $p$ occurs on $\pi_y$ exactly $g(d+1,n)$ positions
  after $p$ occurs on $\pi_x$, where
  \begin{compactitem}
  \item  $g(0,n)= \Tower_2(0,n)=n$;
  \item $g(d+1,n)= g(d,n)*\Tower_2(d+1,n)$.
  \end{compactitem}\vspace{0.1cm}

  \noindent The construction is given by induction on $d$, and the
  formula $\psi_0(x,y)$ for the base case $d=0$ and a fixed $n>1$ do
  not use universal quantifiers (note that $\psi_0(x,y)$ requires that
  $\Rel{p}{y}$ occurs exactly $n * 2^{n}$ positions after $p[x]$
  occurs).
  Thus, since bounded $\CHLTL$ subsumes $\HLTL$ and
  $g(2,n)= n * 2^{n}* 2^{2^{n}}$, in order to show that model checking
  against bounded $\CHLTL$ formulas $\varphi$ with quantifier
  alternation depth $d$ requires space at least
  $\Omega(\Tower_2(d+1,|\varphi|))$, it suffices to show the following
  result.

  \noindent \textbf{Claim.} Let $\AP=\{p\}$ and $n>1$.
  One construct in time polynomial in $n$ a bounded $\CHLTL$ formula
  $\psi(x,y)$ with two free variables $x$ and $y$ and not containing
  universal quantifiers (hence, the quantifier alternation depth is
  $0$) such that for all traces $\pi_x$ and $\pi_y$,
  $\{x \mapsto (\pi_x,0),y \mapsto (\pi_y,0)\}\models \psi(x,y)$ iff
\begin{compactitem}
\item $p$ occurs exactly once on $\pi_x$ (resp., $\pi_y$);
\item for each $i\geq$, $p\in \pi_x(i)$ iff
  $p\in \pi_y(i+n*2^{n}* 2^{2^{n}})$.
\end{compactitem}
\end{proof}


\section{Conclusions}

We have introduced in this paper two extensions of HyperLTL to express
asynchronous hyperproperties: $\SHLTL$ and $\CHLTL$.
Even though the model-checking problems of these logics are in general
undecidable we have presented one decidable fragment of each logic
that allows to express asynchronous properties of interest.

We plan to extend our work in many directions.
First, we intend to settle the question concerning the comparison of
the expressive power of $\SHLTL$ and $\CHLTL$.
Second, we aim to understand the decidability border of model checking
syntactical fragments of the framework resulting by combining $\SHLTL$
and $\CHLTL$.
In particular, the decidability status of model checking against the
fragment obtained by merging simple $\SHLTL$ and bounded $\SHLTL$ is
open.
Finally, other goals regard the extensions of the considered logic to
the branching-time setting and the investigation of first-order and
monadic second-order logics for the specification of asynchronous
hyperproperties in the linear-time and branching-time settings.

\bibliographystyle{IEEEtran}
\bibliography{IEEEabrv,biblio}

\newpage
\onecolumn

\newenvironment{changemargin}{%
  \begin{list}{}{%
    \setlength{\leftmargin}{40bp}%
    \setlength{\rightmargin}{40bp}%
    \setlength{\textheight}{610bp}%
   \setlength{\topmargin}{-30bp}
  }%
  \item[]}{\end{list}}

\makeatletter
\renewcommand{\normalsize}{%
\@setfontsize\normalsize {11pt}{14pt}
   \normalbaselineskip=13pt
   }

\makeatother

\makeatletter\makeatother

\normalsize

\begin{changemargin}

\newcounter{aux}
\newcounter{auxSec}

\begin{center}
\begin{LARGE}
  \noindent\textbf{Appendix}
\end{LARGE}
\end{center}

\section{Proofs from Section~\ref{sec:StutteringHLTL}}

\subsection{\textbf{Proof of Proposition~\ref{prop:FromSLTLtoLTL}}}\label{APP:FromSLTLtoLTL}

\setcounter{aux}{\value{proposition}}
\setcounter{auxSec}{\value{section}}
\setcounter{section}{\value{sec-FromSLTLtoLTL}}
\setcounter{proposition}{\value{prop-FromSLTLtoLTL}}

\begin{proposition} Given a $\SLTL$ formula, one can construct in polynomial time an equivalent $\LTL$ formula.
\end{proposition}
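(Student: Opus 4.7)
The plan is to translate each $\SLTL$ formula into an equivalent $\LTL$ formula by structural induction, replacing the stutter-relativized next $\Next_\Gamma$ and until $\Until_\Gamma$ by polynomial-size $\LTL$ gadgets that explicitly encode ``the next $\Gamma$-segment boundary'' and ``walk along $\Gamma$-stutter-visited positions''. For each finite set $\Gamma$ of $\LTL$ formulas appearing in the input, I first introduce three auxiliary $\LTL$ formulas: $\chi_\Gamma \DefinedAs \bigvee_{\theta \in \Gamma}(\theta \leftrightarrow \neg \Next \theta)$, which holds at position $k$ iff some $\theta \in \Gamma$ flips its truth value between $k$ and $k+1$ (equivalently, $k$ is the last position of a finite segment in the $\Gamma$-stutter factorisation); $\sigma_\Gamma \DefinedAs \Always \neg \chi_\Gamma$, which holds at $k$ iff $k$ lies in the (unique) infinite final segment, if any; and $V_\Gamma \DefinedAs \chi_\Gamma \lor \Next \sigma_\Gamma$, a look-ahead predicate whose value at $k$ characterises whether position $k{+}1$ is visited by the $\Gamma$-stutter trace starting at any position $\leq k$.

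The translation $tr(\cdot)$ is then defined by structural induction: it is the identity on atomic propositions $p\in\AP$, commutes with Boolean connectives, and commutes recursively with the standard $\Next$ and $\Until$. For the relativized modalities I set
\[
  tr(\Next_\Gamma \varphi) \DefinedAs \bigl((\neg \chi_\Gamma) \Until (\chi_\Gamma \land \Next tr(\varphi))\bigr) \lor \bigl(\sigma_\Gamma \land \Next tr(\varphi)\bigr),
\]
\[
  tr(\varphi_1 \Until_\Gamma \varphi_2) \DefinedAs tr(\varphi_2) \lor \Bigl(tr(\varphi_1) \land \bigl((V_\Gamma \to \Next tr(\varphi_1)) \Until (V_\Gamma \land \Next tr(\varphi_2))\bigr)\Bigr).
\]
Correctness is established by induction on the formula structure. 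For $\Next_\Gamma$, I split on whether the current position $i$ lies in the infinite final segment (so $\SUCC_\Gamma(\pi,i)=(\pi,i{+}1)$ and the right disjunct $\sigma_\Gamma \land \Next tr(\varphi)$ applies) or lies in some finite block $[i_j,i_{j+1})$ (so $\SUCC_\Gamma(\pi,i)=(\pi,i_{j+1})$ and the until in the left disjunct pinpoints $i_{j+1}$ as the smallest $\ell>i$ with $\chi_\Gamma$ at $\ell{-}1$); the two disjuncts are mutually exclusive. For $\Until_\Gamma$, the key observation is that the positions $m>i$ visited by the $\Gamma$-stutter trace from $i$ are exactly those satisfying $\chi_\Gamma$ at $m{-}1$ or $\sigma_\Gamma$ at $m$ (a condition that does not depend on $i$), which after a one-step shift becomes $V_\Gamma$ at $m{-}1$; the inner until then walks over precisely the visited positions $>i$, imposing $tr(\varphi_1)$ at every intermediate visited position and $tr(\varphi_2)$ at a final visited one, and a direct check shows that in the stable case this collapses to the standard $tr(\varphi_1)\Until tr(\varphi_2)$ at $i$, as required.

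For the polynomial bound, observe that $|\chi_\Gamma|$, $|\sigma_\Gamma|$ and $|V_\Gamma|$ are each $O(\sum_{\theta \in \Gamma} |\theta|)$, hence linear in the size of the input, and each relativized modality introduces only $O(|\Gamma|)$ new sub-formulas on top of the recursive translations of its operands. Using a DAG representation that shares $\chi_\Gamma$, $\sigma_\Gamma$ and $V_\Gamma$ across all occurrences of the same subscript, the number of distinct sub-formulas of the translation is polynomial in the size of the input, and the construction is clearly performable in polynomial time. The main obstacle is the $\Until_\Gamma$ case: the natural predicate for ``$m$ is $\Gamma$-visited'' refers to $\chi_\Gamma$ at the \emph{previous} position, a past reference not directly expressible in future-only $\LTL$; the shift from this past-referring predicate to $V_\Gamma = \chi_\Gamma \lor \Next \sigma_\Gamma$, together with the $\Next$-guarded operands inside the inner until, is precisely what aligns the indices so that a single standard $\LTL$ until interpreted on the original trace faithfully simulates the $\Gamma$-stutter until.
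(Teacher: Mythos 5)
Your proof is correct and follows essentially the same route as the paper's: a structural induction that detects $\Gamma$-segment boundaries with a fixed-size $\LTL$ formula ($\chi_\Gamma$ is just the negation of the paper's $\theta_\Gamma$, and $\sigma_\Gamma$ is its $\Always\theta_\Gamma$), locates the $\Gamma$-successor for $\Next_\Gamma$ via an until to the next boundary, and simulates $\Until_\Gamma$ with a guarded until that only inspects visited positions. The sole (cosmetic) difference is that your uniform visitation predicate $V_\Gamma=\chi_\Gamma\vee\Next\sigma_\Gamma$ absorbs the entry into the infinite final segment, whereas the paper handles that case by an explicit disjunct $\Always\theta_\Gamma\wedge f(\psi_1)\Until f(\psi_2)$ nested inside the eventuality of its inner until.
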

\setcounter{proposition}{\value{aux}}
\setcounter{section}{\value{auxSec}}
\begin{proof}
Given an $\SLTL$ formula $\psi$, we construct in polynomial time an $\LTL$ formula $f(\psi)$ such that
for each pointed trace $(\pi,i)$, it holds that $(\pi,i)\models \psi$ iff $(\pi,i)\models f(\psi)$.
The proof is by induction on structure of $\psi$.
\begin{itemize}
  \item $\psi = p$ for some $p\in \AP$: we set $f(p)\DefinedAs p$ and the result trivially follows.\vspace{0.1cm}

  \item $\psi = \neg \psi'$ (resp., $\psi = \psi_1\wedge \psi_2$): we set $f(\psi)\DefinedAs \neg f(\psi')$ (resp., $f(\psi)\DefinedAs f(\psi_1)\wedge f(\psi_2)$) and the result directly follows from the induction hypothesis.\vspace{0.1cm}

  \item $\psi=\Next_\Gamma \psi'$: if $\Gamma=\emptyset$, we set $f(\psi)= \Next f(\psi')$ and the result directly follows from the induction hypothesis.
  Now, assume that $\Gamma\neq \emptyset$. We exploit the auxiliary $\LTL$ formula $\theta_\Gamma$ defined as: $\theta_\Gamma \DefinedAs \displaystyle{\bigwedge_{\xi\in\Gamma}}(\xi \leftrightarrow \Next \xi)$. Formula $\theta_\Gamma$ asserts that the given position and the next one along the given trace agree on the evaluation of all the $\LTL$ formulas in $\Gamma$. Hence, the $\LTL$ formula $\Always \theta_\Gamma$ requires that
  for the  given position $i$ of the given trace $\pi$, the number of segments in the $\Gamma$-stutter factorization of $\pi$ is finite and $i$ belongs to the last segment. Thus, the $\LTL$ formula $f(\Next_\Gamma \psi')$ is defined as follows:
  \[
     f(\Next_\Gamma \psi')\DefinedAs [\Always\theta_\Gamma \rightarrow \Next f(\psi')] \wedge [ \neg\Always\theta_\Gamma \rightarrow  \theta_\Gamma \Until (\neg \theta_\Gamma \wedge \Next f(\psi'))]
  \]
  Given a pointed trace $(\pi,i)$, $f(\Next_\Gamma \psi')$ requires that either (i) position $i$ belongs to the last segment of the
   $\Gamma$-stutter factorization of $\pi$ and $f(\psi')$ holds at the next position $i+1$, or (ii) the $i$-segment in the
    $\Gamma$-stutter factorization of $\pi$ is not the last one and $f(\psi')$ holds at the first position of the first segment following the $i$-segment. Hence, by the induction hypothesis, correctness of the construction follows.\vspace{0.1cm}

    \item $\psi=\psi_1\Until_\Gamma \psi_2$: if $\Gamma=\emptyset$, we set $f(\psi)=f(\psi_1) \Until f(\psi_2)$ and the result directly follows from the induction hypothesis.
  Now, assume that $\Gamma\neq \emptyset$. In this case, the $\LTL$ formula $f(\psi_1\Until_\Gamma \psi_2)$ is defined as follows:
  \[ \begin{array}{ll}
 f(\psi_1\Until_\Gamma \psi_2) \DefinedAs  &    [\Always\theta_\Gamma \rightarrow  f(\psi_1)\Until f(\psi_2)] \wedge\\
   &  [ \neg\Always\theta_\Gamma \rightarrow [f(\psi_2) \vee (f(\psi_1)\wedge (\neg\theta_\Gamma \rightarrow \Next f(\psi_1)) \,\Until\, g(\psi_1\Until_\Gamma \psi_2))]]\\
 g(\psi_1\Until_\Gamma \psi_2)\DefinedAs & \neg \theta_\Gamma \wedge \Next [ f(\psi_2) \vee (\Always\theta_\Gamma \wedge f(\psi_1)\Until f(\psi_2))]
\end{array} \]
  Given a pointed trace $(\pi,i)$, $f(\psi_1\Until_\Gamma \psi_2)$ requires that either (i) position $i$ belongs to the last segment of the
   $\Gamma$-stutter factorization of $\pi$ and $f(\psi_1)\Next f(\psi_2)$ holds at  position $i$, or (ii) the $i$-segment in the
    $\Gamma$-stutter factorization of $\pi$ is not the last one and
    \begin{compactitem}
      \item \emph{either} $f(\psi_2)$ holds at position $i$,
      \item \emph{or} $f(\psi_1)$ holds at position $i$ and there is a segment $\nu$ following the $i$-segment  such that $f(\psi_2)$ holds at the first position of $\nu$ and $f(\psi_1)$ holds at the first position of each segment preceding $\nu$ end following the $i$-segment.
      \item \emph{or} $f(\psi_1)$ holds at position $i$,  the last  segment, say $\nu$, is defined,   $f(\psi_2)\Until f(\psi_2)$ holds at the first position of $\nu$, and $f(\psi_1)$ holds at the first position of each segment preceding $\nu$ end following the $i$-segment.
    \end{compactitem}
\end{itemize}
\noindent Note that the size of $f(\psi)$ (i.e., the number of distinct sub-formulas of $f(\psi)$) is polynomial in the size of $\psi$.
\end{proof}

\bigskip

\subsection{\textbf{Detailed proof of Proposition~\ref{prop:FromStutteringHLTLtoAAWA}}}\label{APP:FromStutteringHLTLtoAAWA}

\setcounter{aux}{\value{proposition}}
\setcounter{auxSec}{\value{section}}
\setcounter{section}{\value{sec-FromStutteringHLTLtoAAWA}}
\setcounter{proposition}{\value{prop-FromStutteringHLTLtoAAWA}}

\begin{proposition}   Given a  $\SHLTL$ quantifier-free formula $\psi$ with trace variables $x_1,\ldots,x_n$,
one can build in polynomial time a B\"{u}chi  $\NAAWA$ $\Au_\psi$ such that $\Lang(\Au_\psi)$ is the set of $n$-tuples $(\pi_1,\ldots,\pi_n)$ of traces so that $(\{x_1\mapsto (\pi_1,0),\ldots,x_1\mapsto (\pi_n,0)\})\models \psi$.
\end{proposition}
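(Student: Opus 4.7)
The plan is to extend the standard translation from $\LTL$ into alternating B\"{u}chi word automata to the asynchronous multi-trace setting. The main difficulty compared to $\HLTL$ is that each relativized temporal step advances the $n$ input traces by their individual $\SUCC_\Gamma$ positions, which can lie arbitrarily far ahead. Since an $\NAAWA$ can move along a single direction at a time, the idea is to serialize each relativized step as a sequence of single-direction bursts, using nondeterministic guesses to decide when the $\Gamma$-successor has been reached.

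First I would put $\psi$ into negation normal form by introducing a relativized release $\Release_\Gamma$ as the dual of $\Until_\Gamma$, so that negations only reach relativized atoms $\Rel{p}{x}$. Next, for every subscript $\Gamma$ occurring in $\psi$ and every direction $i\in[1,n]$, I would build two small auxiliary B\"{u}chi $\NAAWA$ reading only component $i$: one, $\Au_{\Gamma,i}$, accepts tuples whose $i$-th trace has its first $\Gamma$-segment either equal to the whole infinite tail or of length $1$ (equivalently, $\SUCC_\Gamma$ at position $0$ of that trace is just position $1$); the other, $\overline{\Au}_{\Gamma,i}$, accepts the complementary condition. This property is expressible by a fixed $\LTL$ formula $\xi_\Gamma$ over $\AP$, so both automata can be built of size polynomial in the number of distinct subformulas of the formulas in $\Gamma$.

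The main automaton $\Au_\psi$ keeps in its control state the subformula $\theta$ of $\psi$ currently being processed. For Boolean combinations and relativized atoms $\Rel{p}{x_i}$, the transition function reads the current $i$-th input letter and branches locally, as in the standard construction. For each subformula of the form $\Next_\Gamma\theta$, $\theta_1\Until_\Gamma\theta_2$ or $\theta_1\Release_\Gamma\theta_2$, simulating one $\Gamma$-step consists of $n$ consecutive advancing phases, one per direction. During the phase for direction $i$, $\Au_\psi$ sits in a state $(\theta,i,\Gamma)$ and at each step nondeterministically guesses whether the current $\pi_i$-position is \emph{inside} a non-terminal segment of length $>1$: if yes, it spawns a parallel copy of $\overline{\Au}_{\Gamma,i}$ to certify the guess, moves one position right along direction $i$, and stays in $(\theta,i,\Gamma)$; if no, it spawns $\Au_{\Gamma,i}$, moves one position right along $i$, and switches to $(\theta,i+1,\Gamma)$ (or re-enters plain state $\theta$ if $i=n$). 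Once direction $n$ finishes, processing of $\theta$ resumes at the new position vector, which by construction is exactly $\SUCC_\Gamma$ applied componentwise. Until and release obligations are enforced by the standard B\"{u}chi placement: accepting states are attached to every occurrence of a release subformula, and to the $\theta_2$ side of every until.

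The main obstacles I expect are correctness of the burst phases and size control. Correctness reduces to the fact that the $\Gamma$-stutter factorization of $\pi_i$ admits exactly one sequence of guesses that is simultaneously satisfiable by all spawned auxiliary copies (the guess ``inside a long non-terminal segment'' must hold at every position strictly before the segment boundary and fail exactly at the boundary), which follows from the definition of $\xi_\Gamma$ and uniqueness of the factorization. For size, each state of $\Au_\psi$ is a pair (subformula, optional direction+$\Gamma$), yielding polynomially many states, and the auxiliary automata $\Au_{\Gamma,i}$, $\overline{\Au}_{\Gamma,i}$ together contribute linearly in $|\psi|$, so the whole construction remains polynomial as required.
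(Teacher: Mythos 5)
Your proposal is correct and follows essentially the same construction as the paper: negation normal form via a relativized release, auxiliary automata $\Au_{\Gamma,i}$ and $\overline{\Au}_{\Gamma,i}$ checking whether the first $\Gamma$-segment of the $i$-th trace is infinite or of length $1$, and a main automaton that serializes each $\Gamma$-step into $n$ single-direction phases with states $(\theta,i,\Gamma)$, guessing segment boundaries and certifying the guesses with spawned auxiliary copies. The acceptance condition and the polynomial size analysis also match the paper's argument.
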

\setcounter{proposition}{\value{aux}}
\setcounter{section}{\value{auxSec}}
\begin{proof}
Without loss of generality, we assume that  $\psi$ is in \emph{negative normal form} ($\textit{NNF}$), i.e. negation is applied only to relativized  atomic propositions.  Indeed, by exploiting the dual $\Release_\Gamma$ (\emph{relativized release}) of the until modality $\Until_\Gamma$, both conjunction and disjunction, and the De Morgan's laws,   we can convert in linear time a given $\SHLTL$ quantifier-free  formula $\theta$ into an equivalent
$\SHLTL$ quantifier-free  formula $\theta'$ in $\textit{NNF}$.

\noindent For each finite set  $\Gamma$ of $\LTL$ formulas, let $\xi_{\Gamma}$ be the $\LTL$ formula given by
\[
\xi_\Gamma = \displaystyle{\bigwedge_{\xi\in \Gamma} \Always(\xi \leftrightarrow \Next \xi) \vee \bigvee_{\xi\in\Gamma} (\xi \leftrightarrow \neg \Next\xi)}
\]
\noindent The $\LTL$ formula $\xi_\Gamma$ has as models the traces $\pi$ such that the first segment in the factorization of $\pi$ is either infinite or has length $1$. For each $i\in [1,n]$, we can easily construct in linear time (in the number of distinct sub-formulas in $\Gamma$) a
B\"{u}chi  $\NAAWA$ $\Au_{\Gamma,i}= \tpl{2^{\AP},Q_{\Gamma,i},q_{\Gamma,i},\rho_{\Gamma,i},F_{\Gamma,i}}$ (resp., $\overline{\Au}_{\Gamma,i}= \tpl{2^{\AP},\overline{Q}_{\Gamma,i},\overline{q}_{\Gamma,i},\overline{\rho}_{\Gamma,i},\overline{F}_{\Gamma,i}}$) accepting the
$n$-tuples $(\pi_1,\ldots,\pi_n)$ of traces so that the $i^{th}$ component $\pi_i$  is a model (resp., is not a model) of $\xi_\Gamma$.

Let $\Upsilon$ be the set of subscripts $\Gamma$ occurring in the temporal modalities of $\psi$.
\details{Then by exploiting  the automata
$\Au_{\Gamma,i}$ and $\overline{\Au}_{\Gamma,i}$ where $\Gamma\in\Upsilon$ and $i\in [1,n]$, we construct a B\"{u}chi $\NAAWA$
$\Au_\psi$ satisfying Proposition~\ref{prop:FromStutteringHLTLtoAAWA} as follows. Given an input multi-trace
 $(\pi_1,\ldots,\pi_n)$, the behaviour of the automaton $\Au_\psi$ is subdivided in phases. At the beginning of each phase with current position vector
 $\wp=(j_1,\ldots,j_n)$, $\Au_\psi$ keeps track in its state of the currently processed sub-formula $\theta$. By the transition function,
 $\theta$ is processed in accordance with   the `local' characterization of the semantics of the Boolean connectives and  the relativized temporal  modalities.
 Whenever $\theta$ is of the form $\theta_1\Until_\Gamma\theta_2$ or $\theta_1\Release_\Gamma\theta_2$, or $\theta$ is argument of a sub-formula of the form
 $\Next_\Gamma\theta$, and $\Au_\psi$ has to check that $\theta$ holds at the position vector $(\SUCC_\Gamma(\pi_1,j_1),\ldots,\SUCC_\Gamma(\pi_1,j_1))$, $\Au_\psi$ moves along the directions $1,\ldots,n$ in turns. During the movement along direction $i\in [1,n]$, the automaton is in state  $(\theta,i,\Gamma)$ and guesses that either (i) the next input position is in the current segment of the $\Gamma$-factorization of $\pi_i$ and this segment is not the last one, or (ii) the previous condition  does not hold, hence, the next input position corresponds to $\SUCC_\Gamma(\pi_1,j_1)$. In the first (resp., second case) case, it activates in parallel a copy of the auxiliary automaton $\overline{\Au}_{\Gamma,i}$ (resp., $\Au_{\Gamma,i}$) for checking that the guess is correct and moves one position to the right along $\pi_i$. Moreover, in the first case, $\Au_\psi$ remains in state $(\theta,i,\Gamma)$, while in the second case, the automaton changes direction by moving to the state $(\theta,i+1,\Gamma)$ if $i<n$, and starts a new phase by moving at state $\theta$ otherwise.}
 The B\"{u}chi  $\NAAWA$ $\Au_\psi=\tpl{2^{\AP},Q=Q_s\cup Q_m,q_0,\rho,F}$ satisfying Proposition~\ref{prop:FromStutteringHLTLtoAAWA}
is defined as follows, where we assume that the various automata $\Au_{\Gamma,i}$ and $\overline{\Au}_{\Gamma,i}$ have no state in common. The set $Q_s$  of states  is the  union of the states of the auxiliary automata $\Au_{\Gamma,i}$ and $\overline{\Au}_{\Gamma,i}$ where $\Gamma\in\Upsilon$ and $i\in [1,n]$, while the states in  $Q_m$ consist  of the sub-formulas of $\psi$ and the triples of the form $(\theta,i,\Gamma)$ where $\theta$ is a sub-formula of $\psi$,
$i\in [1,n]$ and $\Gamma\in\Upsilon$. The initial state is $\psi$ and the set of accepting states consists of the states of the form $\theta_1\Release \theta_2$
and the accepting states of the automata $\Au_{\Gamma,i}$ and $\overline{\Au}_{\Gamma,i}$ where $\Gamma\in\Upsilon$ and $i\in [1,n]$.

Finally, we define the transition function $\rho$. From the states in $Q_s$ the transition function is inherited from
the respective $\NAAWA$ $\Au_{\Gamma,i}$ and $\overline{\Au}_{\Gamma,i}$. For the states in $Q_m$, $\rho$ is defined by induction on the structure of the sub-formulas $\theta$ of $\psi$ as follows, where for each $\vect{\sigma}\in (2^{\AP})^{n}$ and $i\in [1,n]$, $\vect{\sigma}[i]$ denotes the $i^{th}$ component of $\vect{\sigma}$:
\begin{itemize}
  \item for each $i\in [1,n-1]$,  $\rho((\theta,i,\Gamma),\vect{\sigma})  = \bigl(\rho_{\Gamma,i}(q_{\Gamma,i},\vect{\sigma})\wedge ((\theta,i+1,\Gamma),i)\bigr)  \,\vee $

  \hspace{6.35cm} $\bigl(\overline{\rho}_{\Gamma,i}(\overline{q}_{\Gamma,i},\vect{\sigma})\wedge ((\theta,i,\Gamma),i)\bigr)$ \vspace{0.2cm}

 \item  $\rho((\theta,n,\Gamma),\vect{\sigma})  = \bigl(\rho_{\Gamma,n}(q_{\Gamma,n},\vect{\sigma})\wedge (\theta,n)\bigr)  \vee \bigl(\overline{\rho}_{\Gamma,n}(\overline{q}_{\Gamma,n},\vect{\sigma})\wedge ((\theta,n,\Gamma),n)\bigr)$ \vspace{0.2cm}

\item
  $\rho(\Rel{p}{x_i},\vect{\sigma})    =  \left\{
  \begin{array}{ll}
 \true
             &    \text{ if }  p\in\vect{\sigma}[i]
             \\
 \false            &    \text{ otherwise }
\end{array}
             \right.
$ \vspace{0.2cm}

\item
  $
  \rho(\neg \Rel{p}{x_i},\vect{\sigma})    =  \left\{
  \begin{array}{ll}
 \true
             &      \text{ if }  p\notin\vect{\sigma}[i]
             \\
 \false            &    \text{ otherwise }
\end{array}
             \right.
$ \vspace{0.2cm}

\item
  $\rho(\theta_1\vee \theta_2,\vect{\sigma})    =
\rho(\theta_1,\vect{\sigma})\vee \rho(\theta_2,\vect{\sigma})$ \vspace{0.2cm}

\item
  $\rho(\theta_1\wedge \theta_2,\vect{\sigma})    =
\rho(\theta_1,\vect{\sigma})\vee \rho(\theta_2,\vect{\sigma})$ \vspace{0.2cm}

\item
  $
  \rho(\Next_\Gamma \theta,\vect{\sigma})    = \rho((\theta,1,\Gamma),\vect{\sigma})
$ \vspace{0.2cm}

\item
  $
  \rho(\theta_1\Until_{\Gamma} \theta_2\vect{\sigma})    =
  \rho(\theta_2,\vect{\sigma})  \vee  \bigl(\rho(\theta_1,\vect{\sigma}) \wedge \rho((\theta_1\Until_{\Gamma} \theta_2,1,\Gamma),\vect{\sigma})\bigr)
$ \vspace{0.2cm}

\item
 $
  \rho(\theta_1\Release_{\Gamma} \theta_2\vect{\sigma})    =
  \rho(\theta_2,\vect{\sigma})  \wedge  \bigl(\rho(\theta_1,\vect{\sigma}) \vee \rho((\theta_1\Release_{\Gamma} \theta_2,1,\Gamma),\vect{\sigma})\bigr)
$
\end{itemize}

For a pointed trace assignment $\Pi= \{x_1 \mapsto (\pi_1,j_1),\ldots,x_n \mapsto (\pi_n,j_n)\}$ over
$\{x_1,\ldots,x_n\}$, we denote by $\vect{\pi}(\Pi)$ the $n$-tuple of traces $(\pi_1,\ldots,\pi_n)$ and by
$\wp(\Pi)$ the position vector $(j_1,\ldots,j_n)$.

Given an $n$-tuple $\vect{\pi}=(\pi_1,\ldots,\pi_n)$ of traces, a position vector $\wp=(j_1,\ldots,j_n)$, and a state $q$ of $\Au_\psi$,
a \emph{$(q,\wp)$-run  of $\Au_\psi$ over the input $\vect{\pi}$} is defined as a run of $\Au_\psi$ over
$\vect{\pi}$ but we require that the root of the run is labeled by the pair $(q,\wp)$ (i.e., initially, the automaton is in state $q$ and for each $i\in [1,n]$ reads the $(j_i+1)^{th}$ symbol of the trace $\pi_i$).
By construction and a straightforward induction on the structure of the sub-formulas of $\psi$, we obtain the following result, hence,
Proposition~\ref{prop:FromStutteringHLTLtoAAWA} directly follows.\vspace{0.1cm}

\noindent \textbf{Claim.} Let $\Pi$ be a pointed trace assignment over $\{x_1,\ldots,x_n\}$ and $\theta$ be a sub-formula of $\psi$. Then, there is an
accepting $(\theta,\wp(\Pi))$-run of $\Au_\psi$ over $\vect{\pi}(\Pi)$ \emph{if and only if} $\Pi\models \theta$.
\end{proof}

\bigskip

\subsection{\textbf{Proof of Proposition~\ref{prop:KripkeBuild}}}\label{APP:KripkeBuild}

\setcounter{aux}{\value{proposition}}
\setcounter{auxSec}{\value{section}}
\setcounter{section}{\value{sec-KripkeBuild}}
\setcounter{proposition}{\value{prop-KripkeBuild}}

   \begin{proposition}  One can build in time polynomial in the size of $\Instance$ a finite Kripke structure
 $\Ku_\Instance$ over $\AP$ satisfying the following conditions:
   \begin{compactitem}
     \item the set of  traces of $\Ku_\Instance$ contains the set of well-formed traces;
     \item each  trace of $\Ku_\Instance$ having a suffix where $\#$ always holds is a well-formed trace.
   \end{compactitem}
   \end{proposition}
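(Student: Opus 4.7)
The plan is to construct $\Ku_\Instance$ as a gadget that concatenates encodings of words $u_i^\ell$ along paths, while preserving two key invariants: the encoding starts with the marker $\{\#\}$, and the choice of index $\ell \in \{1,2\}$ is fixed along any path that will eventually end in a $\#$-suffix. Concretely, I would take the state set to consist of an initial state $s_0$ labeled $\{\#\}$, a sink state $s_\#$ labeled $\{\#\}$ with a self-loop, and for each $i \in [n]$, $\ell \in \{1,2\}$, a chain of $|u_i^\ell|$ states $t^\ell_{i,0},\ldots,t^\ell_{i,|u_i^\ell|-1}$ with labels matching $[u_i^\ell, p_i, q_\ell]$ position by position (so the last state of each chain additionally carries $\#$). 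The size is clearly polynomial in $|\Instance|$.

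For transitions, I would put edges from $s_0$ to the first state $t^\ell_{i,0}$ of every block (for all $i$ and both $\ell$), deterministic edges within each chain from $t^\ell_{i,h}$ to $t^\ell_{i,h+1}$, and from the last state $t^\ell_{i,|u_i^\ell|-1}$ of a block in mode $\ell$ edges to $t^\ell_{j,0}$ for every $j \in [n]$ \emph{in the same mode $\ell$} as well as an edge to $s_\#$. The self-loop on $s_\#$ guarantees totality everywhere. Then every well-formed trace $\pi^\ell_{i_1,\ldots,i_k}$ arises from the obvious path $s_0, t^\ell_{i_1,0},\ldots,t^\ell_{i_1,|u^\ell_{i_1}|-1}, t^\ell_{i_2,0},\ldots, t^\ell_{i_k,|u^\ell_{i_k}|-1}, s_\#, s_\#, \ldots$, establishing the first bullet.

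For the second bullet, I would observe that $\#$ occurs in the label of a state only at $s_0$, at $s_\#$, and at the last state of each block. Since every outgoing transition from $s_0$ and from every block-terminal state $t^\ell_{i,|u^\ell_i|-1}$ (except the one into $s_\#$) leads to a state whose label does not contain $\#$, the only way for $\#$ to hold at every position of some suffix of a trace is for the underlying path to eventually enter $s_\#$ and remain there. The prefix of such a path is then necessarily of the form $s_0, t^\ell_{i_1,0},\ldots,t^\ell_{i_k,|u^\ell_{i_k}|-1}$ with $k \ge 1$, and the block-to-block transitions force a single common value of $\ell$; hence the induced trace is exactly $\pi^\ell_{i_1,\ldots,i_k}$.

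The construction is routine; the only real design point is the mode consistency, which I anticipate as the main place where careless wiring would break the second property. I avoid this by keeping the mode-$1$ chains and mode-$2$ chains in disjoint subautomata that share only $s_0$ and $s_\#$, so that once a path leaves $s_0$ its mode $\ell$ is fixed until it reaches $s_\#$.
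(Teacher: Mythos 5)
Your construction is essentially identical to the paper's: the same initial state, the same disjoint mode-$\ell$ chains labeled by $[u_i^\ell,p_i,q_\ell]$, the same block-to-block edges preserving $\ell$, and the same $\{\#\}$-labeled sink, with your second-bullet argument (correctly relying on the standing assumption that each $u_i^\ell$ has length at least $2$, so block-initial states never carry $\#$) making explicit what the paper leaves implicit. No issues.
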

   \setcounter{proposition}{\value{aux}}
\setcounter{section}{\value{auxSec}}
   \begin{proof} Essentially, at the unique initial state, $\Ku_\Instance$ non-deterministically chooses to generate symbol by symbol  either a well-formed
   trace associated to the first tuple of $\Instance$ ($\ell=1$) or a well-formed trace associated
   to the second tuple of $\Instance$ ($\ell=2$). On generating (symbol by symbol) an infix $[u_{i}^{\ell},p_{i},q_\ell]$ with $i\in [n]$ and $\ell=1,2$, $\Ku_\Instance$ keeps track in its state of the word $u_{i}^{\ell}$, of $\ell=1,2$, and of the position $0\leq h< |u_{i}^{\ell}|$ of the symbol in $2^{\AP}$ currently generated along the infix. On generating the last symbol of this infix, $\Ku_\Instance$ non-deterministically chooses either  to generate symbol by symbol another infix of the form $[u_{j}^{\ell},p_{j},q_\ell]$ for some $j\in [n]$ or to move to a trap state $s_f$ from which  the suffix $\{\#\}^{\omega}$ is generated.\vspace{0.1cm}

  \noindent Formally, $\Ku_\Instance= \tpl{S,\{s_0\},E,V}$ where
   \begin{compactitem}
     \item $S\DefinedAs \{s_0,s_f\}\cup \{(u_i^{\ell},\ell,h)\mid i\in [n],\,\ell=1,2,\,0\leq h< |u_i^{\ell}| \}$;
     \item $E$ consist of the following transitions:
     \begin{compactitem}
     \item $(s_0,(u_i^{\ell},\ell,0))$ for all $i\in [n]$ and $\ell=1,2$;
     \item $((u_i^{\ell},\ell,h),(u_i^{\ell},\ell,h+1))$ for all $i\in [n]$, $\ell=1,2$, and $0\leq h<|u_i^{\ell}|-1$;
     \item $((u_i^{\ell},\ell,|u_i^{\ell}|-1),(u_j^{\ell},\ell,0))$ for all $i,j\in [n]$ and $\ell=1,2$;
    \item $((u_i^{\ell},\ell,|u_i^{\ell}|-1),s_f)$ for all $i\in [n]$ and $\ell=1,2$;
    \item $(s_f,s_f)$.
   \end{compactitem}
   \item The $\AP$-valuation $V$ is defined as follows:
      \begin{compactitem}
     \item $V(s_0)=V(s_f)=\{\#\}$;
     \item $V((u_i^{\ell},\ell,h))=\{u_i^{\ell}(h),p_i,q_\ell\}$ for all $i\in [n]$, $\ell=1,2$, and $0\leq h<|u_i^{\ell}|-1$;
     \item $V((u_i^{\ell},\ell,h)=\{u_i^{\ell}(h),p_i,q_\ell,\#\}$ for all $i\in [n]$, $\ell=1,2$, and $h= |u_i^{\ell}|-1$.
   \end{compactitem}
   \end{compactitem}
   \end{proof}

\bigskip

\subsection{\textbf{Proof of Proposition~\ref{prop:NFAforLTL}}}\label{APP:NFAforLTL}

\setcounter{aux}{\value{proposition}}
\setcounter{auxSec}{\value{section}}
\setcounter{section}{\value{sec-NFAforLTL}}
\setcounter{proposition}{\value{prop-NFAforLTL}}

\begin{proposition}  Given a finite set $\Gamma$ of $\LTL$ formulas over $\AP$, one can construct in singly exponential time a
$\NBA$ $\Au_\Gamma$ over $2^{\AP_\Gamma}$ with $2^{O(|\AP_\Gamma|)}$ states satisfying the following:
\begin{compactenum}
  \item let $w\in\Lang(\Au_\Gamma)$: then for all $i\geq 0$ and $\theta\in \cl(\Gamma)$, $at(\theta)\in w(i)$ if and only if $((w)_{\AP},i)\models \theta$.
  \item for each trace $\pi$ (i.e., infinite word over $2^{\AP}$), there exists $w\in \Lang(\Au_\Gamma)$ such that $\pi= (w)_{\AP}$.
\end{compactenum}
\end{proposition}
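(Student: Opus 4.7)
The plan is to adapt the classical Vardi--Wolper translation of $\LTL$ formulas into equivalent nondeterministic B\"{u}chi automata, tailored so that the $at(\cdot)$-propositions of the alphabet record the local truth values. As is standard, I will take the state space to be the set of \emph{atoms} of $\cl(\Gamma)$, that is, maximal subsets $A \subseteq \cl(\Gamma)$ that are propositionally consistent (for every $\theta \in \cl(\Gamma)$, exactly one of $\theta$ and $\neg\theta$ lies in $A$, and Boolean connectives are respected). Since $|\cl(\Gamma)| = O(|\AP_\Gamma|)$, the number of atoms is $2^{O(|\AP_\Gamma|)}$ as required.

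Next I would define the transition relation. From an atom $A$, on reading a symbol $B \in 2^{\AP_\Gamma}$, the automaton moves to atom $A'$ provided three \emph{local} constraints hold:
(i) the symbol is consistent with the current atom, namely for every $\theta \in \cl(\Gamma)$, $at(\theta) \in B$ iff $\theta \in A$ (and in particular, on $\AP$, $B \cap \AP$ matches the atomic propositions in $A$);
(ii) the next-formulas of $A$ fire correctly, i.e.\ $\Next \theta \in A$ iff $\theta \in A'$;
(iii) until-formulas are locally expanded: $\theta_1 \Until \theta_2 \in A$ iff $\theta_2 \in A$, or ($\theta_1 \in A$ and $\theta_1 \Until \theta_2 \in A'$). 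The set $Q_0$ of initial states consists of all atoms.

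To enforce the liveness semantics of the until operator (so that $\theta_1 \Until \theta_2 \in A_i$ genuinely implies $\theta_2$ eventually becomes true), I would use a generalized B\"{u}chi acceptance condition with one accepting set per until-formula $\theta_1 \Until \theta_2 \in \cl(\Gamma)$, containing atoms $A$ with either $\theta_1 \Until \theta_2 \notin A$ or $\theta_2 \in A$. Converting this generalized condition into a standard B\"{u}chi condition incurs only a linear blow-up in the number of until subformulas, preserving the $2^{O(|\AP_\Gamma|)}$ state bound, and the whole construction is clearly carried out in single exponential time.

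Finally, correctness. For property~(1), I would prove by structural induction on $\theta \in \cl(\Gamma)$ the invariant that, along any accepting run $A_0 A_1 \ldots$ on an input $w$, one has $\theta \in A_i$ iff $((w)_{\AP}, i) \models \theta$; the Boolean and $\Next$ cases follow from atom-consistency and the transition rules, while the $\Until$ case uses the local expansion together with the generalized B\"{u}chi condition to rule out runs that indefinitely postpone the required $\theta_2$. Combining this invariant with constraint~(i) of the transition relation yields $at(\theta) \in w(i)$ iff $((w)_{\AP}, i) \models \theta$. For property~(2), given any trace $\pi$, I would define $A_i \DefinedAs \{\theta \in \cl(\Gamma) \mid (\pi,i) \models \theta\}$ and $w(i) \DefinedAs \pi(i) \cup \{at(\theta) \mid \theta \in A_i\setminus \AP\}$, then verify that $A_0 A_1 \ldots$ satisfies the local constraints and the acceptance condition (the latter because in $\pi$ every satisfied until is eventually discharged). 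The main obstacle is the standard subtlety of correctly encoding the liveness of until via the generalized B\"{u}chi sets and verifying the inductive invariant in that case, but this is entirely routine and follows the Vardi--Wolper argument mutatis mutandis.
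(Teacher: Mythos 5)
Your proposal is correct and follows essentially the same route as the paper: states are the atoms of $\cl(\Gamma)$, transitions enforce that the input symbol equals $at(A)$ together with the local expansion rules for $\Next$ and $\Until$, a generalized B\"{u}chi condition (one set per until-formula) handles liveness, and property~(2) is witnessed by the run $A_i=\{\theta\in\cl(\Gamma)\mid(\pi,i)\models\theta\}$. The only difference is presentational: you spell out the correctness induction that the paper delegates to the standard Vardi--Wolper argument.
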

\setcounter{proposition}{\value{aux}}
\setcounter{section}{\value{auxSec}}
\begin{proof} Here, we construct a \emph{generalized} $\NBA$
  $\Au_\Gamma$ satisfying Properties~(1) and~(2) of
  Proposition~\ref{prop:NFAforLTL} which can be converted in linear
  time into an equivalent $\NBA$. Recall that a generalized
  $\NBA$ is defined as a $\NBA$ but the acceptance condition
  is given by a family $\mathcal{F}=\{\Acc_1,\ldots,\Acc_k\}$ of sets
  of accepting states. In this case, a run is accepting if for each
  accepting component $\Acc_i\in \mathcal{F}$, the run visits
  infinitely often states in $\Acc_i$.

The generalized $\NBA$ $\Au_\Gamma = \tpl{2^{\AP_\Gamma},Q,Q_0,\Delta,\mathcal{F}}$  is defined as follows.
$Q$ is the set of \emph{atoms} of $\Gamma$ consisting of the maximal  propositionally consistent subsets $A$ of $\cl(\Gamma)$.
Formally, an atom $A$ of $\Gamma$ is a subset of $\cl(\Gamma)$  satisfying the following:
 \begin{compactitem}
  \item for each $\theta\in\cl(\Gamma)$, $\theta\in A$ iff $\neg\theta\notin A$;
    \item for each $\theta_1\wedge \theta_2\in \cl(\Gamma)$, $\theta_1\wedge\theta_2\in A$ iff $\theta_1,\theta_2\in A$.
\end{compactitem}
Each state is initial, i.e. $Q_0= Q$.
For an atom $A$, $at(A)$ denotes the subset of propositions in $\AP_\Gamma$ associated with the formulas in $A$, i.e.
$at(A)\DefinedAs\{at(\theta)\mid \theta\in A\}$.

\noindent The transition relation $\Delta$ captures the semantics of the next modality, and the local fixpoint characterization of the until  modality.
Formally, $\Delta$ consists of the transitions of the form $(A,at(A),A')$ such that:
 \begin{compactitem}
 \item for each $\Next\theta\in\cl(\Gamma)$, $\Next\theta\in A$ iff $\theta\in A'$;
  \item for each $\theta_1\Until\theta_2\in\cl(\Gamma)$, $\theta_1\Until\theta_2\in A$ iff either $\theta_2\in A$, or $\theta_1\in A$  and $\theta_1\Until\theta_2\in A'$.
\end{compactitem}
Finally, the generalized B\"{u}chi acceptance condition is used for ensuring the fulfillment of the liveness requirements $\theta_2$ in the until sub-formulas $\theta_1\Until\theta_2$ in $\Gamma$. Formally, for each $\theta_1\Until\theta_2\in\cl(\Gamma)$, $\mathcal{F}$ has a component
consisting of the atoms $A$ such that either $\neg(\theta_1\Until\theta_2)\in A$ or $\theta_2\in A$.

Let $w\in\Lang(\Au_\Gamma)$. By construction, there is an accepting infinite sequence of atoms $\rho=A_0 A_1\ldots$  such that for all $i\geq 0$, $w(i)=at(A_i)$. Let $\pi$ be the projection of $w$ over $\AP$ (note that $A_i\cap\AP = \pi(i)$ for all $i\geq 0$).
By standard arguments (see~\cite{VardiW94}), the following holds: for all $i\geq 0$ and $\theta\in \cl(\Gamma)$, $\theta\in A_i$ (hence, $at(\theta)\in w(i)$) if and only if $(\pi,i)\models \theta$.
Hence, Property~(1) of Proposition~\ref{prop:NFAforLTL} follows.

For Property~(2), let $\pi$ be a trace (i.e., an infinite word over $2^{\AP}$) and let $\rho= A_0 A_1\ldots$ be the infinite sequence of atoms defined
as follows for all $i\geq 0$: $A_i=\{\theta\in \cl(\Gamma)\mid (\pi,i)\models \theta\}$. By construction and the semantics of $\LTL$,
$\rho$ is an accepting run of $\Au_\Gamma$ over the word $w= at(A_0) at(A_1)\ldots$. Moreover, $\pi$ coincides with the projection of $w$ over $\AP$. Hence, the result follows.
\end{proof}

\bigskip

\subsection{\textbf{Proof of Proposition~\ref{prop:conditionedLTL}}}\label{APP:conditionedLTL}

\setcounter{aux}{\value{proposition}}
\setcounter{auxSec}{\value{section}}
\setcounter{section}{\value{sec-conditionedLTL}}
\setcounter{proposition}{\value{prop-conditionedLTL}}

 \begin{proposition}  Given an $\LTL$ formula $\theta$ and a $\HLTL$ sentence $\varphi$, one can construct
 in linear time a  $\HLTL$ sentence $\varphi_\theta$ having the same quantifier prefix as $\varphi$ such that for each Kripke structure $\Ku$,
 $\Lang(\Ku,\theta) \models \varphi$ \emph{if and only if} $\Lang(\Ku)\models \varphi_\theta$.
 \end{proposition}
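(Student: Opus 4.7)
The plan is to relativize each hyper-quantifier in $\varphi$ so that it ranges only over traces of $\Ku$ that satisfy $\theta$, by converting $\theta$ into a one-variable $\HLTL$ quantifier-free formula that checks $\theta$ on the trace bound to a given variable. Concretely, for a trace variable $x$, define $\theta\langle x\rangle$ to be the $\HLTL$ quantifier-free formula obtained from $\theta$ by replacing each occurrence of an atomic proposition $p$ with the relativized atom $\Rel{p}{x}$ (and leaving the Boolean connectives and $\LTL$ temporal modalities $\Next,\Until$ untouched). A direct induction on $\theta$ then shows that for any pointed trace assignment $\Pi$ with $\Pi(x)=(\pi,0)$, $\Pi\models \theta\langle x\rangle$ iff $\pi\models \theta$.

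Next, I would build $\varphi_\theta$ by structural recursion on the quantifier prefix of $\varphi$: writing $\varphi$ as $Q_n x_n.\ldots Q_1 x_1.\psi$ with $\psi$ quantifier-free, let $\varphi_\theta \DefinedAs Q_n x_n.\widetilde{\varphi}_{n-1}$ where $\widetilde{\varphi}_i$ is defined inductively by $\widetilde{\varphi}_0 \DefinedAs \psi$ and, for $i\geq 1$,
\[
  \widetilde{\varphi}_i \DefinedAs
  \begin{cases}
    \exists x_i.(\theta\langle x_i\rangle \wedge \widetilde{\varphi}_{i-1}) & \text{if } Q_i = \exists,\\
    \forall x_i.(\theta\langle x_i\rangle \rightarrow \widetilde{\varphi}_{i-1}) & \text{if } Q_i = \forall.
  \end{cases}
\]
This preserves the quantifier prefix of $\varphi$ and is clearly computable in linear time (each quantifier insertion costs $O(|\theta|)$, and the Boolean glue and $\theta\langle x_i\rangle$ are built in one pass over $\varphi$ and $\theta$).

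Correctness is established by induction on $i$, proving the following invariant: for every $\HLTL$-compatible pointed trace assignment $\Pi$ whose assigned traces lie in $\Lang(\Ku,\theta)$,
\[
  (\Lang(\Ku),\Pi)\models \widetilde{\varphi}_i
  \iff (\Lang(\Ku,\theta),\Pi)\models Q_i x_i.\ldots Q_1 x_1.\psi.
\]
The base case $i=0$ is immediate since $\widetilde{\varphi}_0=\psi$ is a multi-trace specification whose semantics depends only on $\Pi$, not on the ambient trace set. For the inductive step with $Q_i=\exists$, the left side holds iff some $\pi\in\Lang(\Ku)$ satisfies $\theta\langle x_i\rangle$ under $\Pi[x_i\mapsto(\pi,0)]$ and $(\Lang(\Ku),\Pi[x_i\mapsto(\pi,0)])\models\widetilde{\varphi}_{i-1}$; by the property of $\theta\langle x_i\rangle$, this is equivalent to quantifying over $\pi\in\Lang(\Ku,\theta)$, and then the induction hypothesis applies. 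The universal case is symmetric. Taking $i=n$ and $\Pi=\Pi_\emptyset$ (vacuously satisfying the invariant) yields $\Lang(\Ku)\models \varphi_\theta$ iff $\Lang(\Ku,\theta)\models\varphi$. There is no real obstacle in this proof; the only subtlety is to make the invariant strong enough to carry the induction through alternating quantifiers, which is handled by quantifying over assignments whose traces already lie in $\Lang(\Ku,\theta)$.
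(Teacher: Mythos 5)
Your construction is essentially the paper's own proof: the same relativization $\theta\langle x\rangle$ of $\theta$ to a trace variable, the same guard polarity ($\wedge$ after $\exists$, $\rightarrow$ after $\forall$), and the same induction on the quantifier prefix with the invariant restricted to assignments whose traces lie in $\Lang(\Ku,\theta)$. Two small slips are worth fixing. First, your top-level definition $\varphi_\theta \DefinedAs Q_n x_n.\widetilde{\varphi}_{n-1}$ drops the guard $\theta\langle x_n\rangle$ on the outermost variable (already for $\varphi=\forall x_1.\psi$ this quantifies over all of $\Lang(\Ku)$ instead of $\Lang(\Ku,\theta)$); your own concluding step ``take $i=n$'' shows you mean $\varphi_\theta \DefinedAs \widetilde{\varphi}_n$. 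Second, your $\widetilde{\varphi}_i$ interleave guards between quantifiers, so they are not linear-time hyper expressions in the paper's prenex syntax and $\varphi_\theta$ does not literally ``have the same quantifier prefix as $\varphi$''; the paper instead keeps the prefix intact and nests all guards into a single quantifier-free body $\psi_n$ (with $\psi_i = \theta(x_i)\wedge\psi_{i-1}$ or $\theta(x_i)\rightarrow\psi_{i-1}$), which is semantically the same formula after pulling the guards inward but syntactically what the statement requires.
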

\setcounter{proposition}{\value{aux}}
\setcounter{section}{\value{auxSec}}
 \begin{proof}
 Let $\theta$ be a $\LTL$ formula and $\varphi$ be a  $\HLTL$ sentence of the form
 \[
 Q_n x_n.\,\ldots Q_1 x_1.\, \psi
 \]
 where $n\geq 1$,  $Q_i\in \{\exists,\forall\}$ for all $i\in [1,n]$, and $\psi$ is quantifier-free.
 Since $\varphi$ is a sentence, we can assume that $\Var=\{x_1,\ldots,x_n\}$ where variables $x_1,\ldots,x_n$ are pairwise distinct.
 For each $i\in [1,n]$, we denote by $\theta(x_i)$ the $\HLTL$ formula obtained from $\theta$ by replacing each occurrence of an atomic proposition
 $p$ in $\theta$ with its $\HLTL$ version $\Rel{p}{x_i}$ (referring to the trace associated with variable $x_i$).

 Let $\psi_{0},\, \psi_1\,\ldots\,\psi_n$ be the $\HLTL$ quantifier-free formulas inductively defined as follows:
 $\psi_{0}\DefinedAs\psi$ and for all $i\in [1,n]$:
  \[
  \begin{array}{l}
    \psi_i  =  \left\{
      \begin{array}{ll}
         \theta(x_i)\wedge \psi_{i-1}
        &    \text{ if }  Q_i=\exists
        \\
       \theta(x_i) \rightarrow \psi_{i-1}
        &    \text{ otherwise }
      \end{array}
    \right.
  \end{array}
  \]
 Define $\varphi_\theta\DefinedAs   Q_n x_n.\,\ldots Q_1 x_1.\, \psi_n$.
 We show that for each Kripke structure $\Ku$,  $\Lang(\Ku,\theta) \models \varphi$ iff $\Lang(\Ku)\models \varphi_\theta$. Hence, the result follows.
 Since $\varphi_\theta$ is a   $\HLTL$ sentence, it suffices to show the following for a given Kripke structure $\Ku$: \vspace{0.1cm}

 \noindent \emph{Claim.}  Let $\Pi$ be a pointed trace assignment such that \emph{either} $\Lang(\Ku,\theta)=\emptyset$ and $\Pi$ is empty,
 \emph{or} $\Lang(\Ku,\theta)\neq\emptyset$ and
$\Pi$ associates to each variable in $\Var$ a pointed trace $(\pi,0)$ such that $\pi\in\Lang(\Ku,\theta)$. Then,  for each $i\in [1,n]$, $(\Lang(\Ku,\theta),\Pi)\models  Q_i x_i.\,\ldots Q_1 x_1.\, \psi$ iff
 $(\Lang(\Ku),\Pi)\models  Q_i x_i.\,\ldots Q_1 x_1.\, \psi_i$. \vspace{0.1cm}

\noindent \emph{Proof of the claim.} The proof is by induction on $i\in [1,n]$. We focus on the induction step (the base case where $i=1$ being similar).
Let $i\in [2,n]$. We assume that $Q_i=\forall$ and $\Lang(\Ku,\theta)\neq\emptyset$ (the case where $Q_i=\exists$ or $\Lang(\Ku,\theta)=\emptyset$ being similar). Then,
  \[
  \begin{array}{l}
    (\Lang(\Ku,\theta),\Pi)\models  Q_i x_i.\,\ldots Q_1 x_1.\, \psi  \Leftrightarrow \\
    \text{for each } \pi\in\Lang(\Ku,\theta),\,
 (\Lang(\Ku,\theta),\Pi[x_i \mapsto (\pi,0)])\models  Q_{i-1} x_{i-1}.\,\ldots Q_1 x_1.\, \psi\Leftrightarrow  \\
 \text{for each } \pi\in\Lang(\Ku,\theta),\,
  (\Lang(\Ku),\Pi[x_i \mapsto (\pi,0)])\models  Q_{i-1} x_{i-1}.\,\ldots Q_1 x_1.\, \psi_{i-1}\Leftrightarrow  \\
 \text{for each } \pi\in\Lang(\Ku),\,
  (\Lang(\Ku),\Pi[x_i\mapsto (\pi,0)])\models  Q_{i-1} x_{i-1}.\,\ldots Q_1 x_1.\, \psi_{i}\Leftrightarrow  \\
  (\Lang(\Ku),\Pi)\models  Q_{i} x_{i}.\,\ldots Q_1 x_1.\, \psi_{i}
  \end{array}
  \]
 The second equivalence directly follows from the induction hypothesis, while the third equivalence follows from the semantics of $\HLTL$ and the facts
  that $x_1,\ldots,x_n$ are pairwise distinct and $\psi_i=\theta(x_i) \rightarrow \psi_{i-1}$ when $Q_i=\forall$.
 \end{proof}

\newpage

\section{Proofs from Section~\ref{sec:ContextHyper}}

\subsection{\textbf{Proof of Theorem~\ref{theo:UndecidabilityContextHyeprLTL} for the fragment $\F_1$}}\label{APP:UndecidabilityContextHyeprLTL}

We show that model checking against the fragment $\F_1$ of $\CHLTL$ is undecidable. For the fixed Minsky $2$-counter machine $M$
(see Subsection~\ref{sec:UndeciablilityContextHLTL}), the result directly follows from the following proposition.

\begin{proposition}
One can build 
a finite Kripke Structure $\Ku_M$ and a $\CHLTL$ sentence $\varphi_M$ in the fragment $\F_1$
such that $M$ halts \emph{iff} $\Ku_M\models \varphi_M$.
\end{proposition}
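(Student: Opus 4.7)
The plan is to reduce from the halting problem for Minsky $2$-counter machines $M$ following the same scheme as the proof for $\F_0$: reuse the encoding of $M$-configurations as traces over $\AP = \Delta \cup \{1,2,\Beg_1,\Beg_2\}$ and a Kripke structure that produces the initial-and-halting computation-codes. The new issue is that in $\F_1$ only $\Eventually$ is allowed inside a non-global context, so the $\tpl{\{x_2\}}\Next(\neg\Beg_c[x_2]\Until(\Beg_c[x_2]\wedge\ldots))$ navigation of the $\F_0$ proof is no longer legal and must be simulated. I exploit the single allowed quantifier alternation by writing the reduction sentence as
\[\varphi_M \DefinedAs \exists x_1.\,\forall x_2.\,\Phi(x_1,x_2),\]
where $x_1$ plays the role of the candidate halting computation-code and $\forall x_2$ is used to single out, one at a time, the pairs of consecutive $c$-code boundaries along $x_1$ at which goodness must be verified.

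Concretely, I would enrich $\Ku_M$ with fresh propositions $\mathsf{aux}$ and $\mathsf{start}$ and arrange that, in addition to the initial-and-halting computation-codes, $\Ku_M$ generates for each counter $c \in \{1,2\}$ and each pair of consecutive $\Beg_c$-positions $(i,i')$ of a computation-code an \emph{auxiliary marker trace}: a trace identical to the computation-code on every proposition in $\Delta \cup \{1,2,\Beg_{3-c}\}$, but with $\Beg_c$ erased everywhere except at $i'$, with $\mathsf{start}$ set exactly at position $i$, and with $\mathsf{aux}$ set at every position. The body $\Phi(x_1,x_2)$ then asserts that $x_1$ is an initial-and-halting computation-code (verified via the same global-context machinery from the $\F_0$ proof, which uses $\Next$ and $\Until$ in the global context and is therefore allowed in $\F_1$), and that either $\mathsf{aux}[x_2]$ is globally false, or $x_1$ and $x_2$ agree on every proposition outside $\{\Beg_c,\mathsf{start},\mathsf{aux}\}$ (checked by a global $\Always$), in which case the local goodness check fires:
\[
\Always\Bigl[\bigwedge_{\delta,c}\bigl((\mathsf{start}[x_2]\wedge\Beg_c[x_1]\wedge\delta[x_1])\to\tpl{\{x_2\}}\Eventually(\Beg_c[x_2]\wedge\tpl{\{x_1,x_2\}}(\psi_{=}(\delta,c)\wedge\psi_{inc}(\delta,c)\wedge\psi_{dec}(\delta,c)))\bigr)\Bigr]
\]
with $\psi_{=},\psi_{inc},\psi_{dec}$ defined verbatim as in the $\F_0$ proof. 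Because $x_2$ has by construction exactly one $\Beg_c$-position strictly after $i$, namely $i'$, the inner $\tpl{\{x_2\}}\Eventually\,\Beg_c[x_2]$ is forced to land on $i'$, faithfully simulating the $\Next$/$\Until$ navigation of the $\F_0$ proof. The resulting sentence has context-nesting depth exactly $2$, quantifier alternation depth $1$, and the only temporal modality appearing inside the non-global context $\tpl{\{x_2\}}$ is $\Eventually$, so $\varphi_M \in \F_1$.

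The main obstacle will be engineering $\Ku_M$ so that its auxiliary marker traces correspond one-to-one to the internal $(i,i')$-boundaries of computation-codes, and designing the global-context compatibility check carefully enough that no adversarial $x_2$-trace in $\Ku_M$ can spuriously force $\Phi(x_1,x_2)$ to fail; using the gating proposition $\mathsf{aux}$ together with a global $\Always$ demanding bit-for-bit equality of $x_1$ and $x_2$ on the unerased propositions handles this cleanly. A secondary point is that one has to accommodate both counters $c \in \{1,2\}$ in the auxiliary-trace family without confusion, which is routinely dealt with by using two distinct flags $\mathsf{aux}_1,\mathsf{aux}_2$. Once these points are taken care of, the equivalence $M$ halts iff $\Ku_M\models\varphi_M$ follows by the same argument as in the $\F_0$ case: in the forward direction, any halting $M$-computation yields a computation-code that witnesses $x_1$ and the auxiliary traces enumerated by $\Ku_M$ cover all consecutive-boundary pairs, making $\Phi$ hold universally in $x_2$; conversely, any $x_1$ satisfying $\forall x_2.\Phi(x_1,x_2)$ is an initial-and-halting computation-code whose local goodness is individually witnessed by the auxiliary marker traces, hence a genuine halting $M$-computation.
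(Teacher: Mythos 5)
Your proposal is correct and follows essentially the same route as the paper's proof: the paper also writes $\varphi_M$ as $\exists x_1.\forall x_2$ with $x_2$ ranging over marked copies of the computation-code in which a single source/target pair of adjacent $c$-code boundaries is distinguished (via fresh markers $\#$ and $\#'$ rather than your $\mathsf{start}$/$\Beg_c$-erasure scheme), so that the unique marked target forces $\tpl{\{x_2\}}\Eventually$ to land at the right position and the universal quantification sweeps over all boundary pairs. The only differences are cosmetic choices of marker propositions and of the compatibility guard between $x_1$ and $x_2$.
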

\begin{proof}
The proof is an adaptation of the proof of Proposition~\ref{prop:undecidabilityContextHLTL}. We consider the set
  of atomic propositions exploited in the proof of Proposition~\ref{prop:undecidabilityContextHLTL} augmented with two additional propositions $\#$ and $\#'$ used to mark $c$-codes for $c\in\{1,2\}$. Hence,
\[
\AP\DefinedAs \Delta \cup \{1,2,\Beg_1,\Beg_2,\#,\#'\}
\]
The notion of \emph{unmarked computation}-code corresponds to the notion of com\-pu\-ta\-tion-code in the proof of Proposition~\ref{prop:undecidabilityContextHLTL},
while a \emph{marked computation}-code $\pi$ corresponds to a computation-code  but we, additionally, require that there are exactly two $c$-codes $\pi_c$ and $\pi'_c$ along $\pi$ for some $c\in\{1,2\}$ marked by $\#$ and $\#'$, respectively, such that $\pi_c$ and $\pi'_c$ belongs to two adjacent configuration codes along $\pi$ with $\pi_c$ preceding $\pi'_c$.

 By construction, it is a trivial task to define a Kripke structure $\Ku_M$  satisfying the following.\vspace{0.2cm}

\noindent \textbf{Claim.} One can construct in polynomial time a finite Kripke structure $\Ku_M$ over $\AP$ such that the set of   traces of $\Ku_M$
which visit some empty position (i.e., a position with label the empty set of propositions) corresponds to the set of initial and halting (marked and unmarked) computation-codes.\vspace{0.2cm}

The $\CHLTL$ formula $\varphi_M$ in the fragment $\F_1$ is defined as follows:
\[
\begin{array}{l}
\varphi_M\DefinedAs \exists x_1.\,\forall x_2.\, \Always \neg \#[x_1]\wedge \Eventually\displaystyle{\bigwedge_{p\in\AP}} \neg \Rel{p}{x_1} \,\wedge \\
 \Bigl( \Eventually \#[x_2]\wedge \Eventually\displaystyle{\bigwedge_{p\in\AP}} \neg \Rel{p}{x_2} \wedge \Always    \displaystyle{\bigwedge_{p\in\AP\setminus\{\#,\#'\}}}(\Rel{p}{x_1}\leftrightarrow \Rel{p}{x_2})\Bigr) \longrightarrow \psi'_{good}
\end{array}
\]
 where the $\CHLTL$  quantifier-free  sub-formula $\psi'_{good}$ is defined in the following. Intuitively, when interpreted on the Kripke structure
 $\Ku_M$ of the previous claim, formula $\varphi_M$ asserts the existence of a  trace $\pi_1$ (bounded to the trace variable $x_1$) such that
 $\pi_1$ is an unmarked initial and halting computation-code and for each  trace $\pi_2$ (bounded to the trace variable $x_2$) so that
 $\pi_2$ is some marked version of $\pi_1$, $\pi_2$ satisfies (by means of the conjunct $\psi'_{good}$) the
 goodness requirement applied to the marked $c$-codes of $\pi_2$ for some $c\in\{1,2\}$. Since variable $x_2$ is universally quantified, this ensures that
 $\pi_1$ is also good.

 The quantifier-free $\CHLTL$ formula $\psi'_{good}$ is defined as follows:
\[
\begin{array}{l}
\psi'_{good}\DefinedAs   \Always\displaystyle{\bigwedge_{\delta\in\Delta\setminus \Delta_\halt}\bigwedge_{c\in\{1,2\}}}\Bigl[(\delta[x_1]\wedge \Beg_c[x_1]\wedge \#[x_2])\longrightarrow
 \\
 \hspace{1cm} \tpl{\{x_2\}}  \Eventually \bigl(\Beg_c[x_2] \wedge \#'[x_2] \wedge \tpl{\{x_1,x_2\}}(\psi_{=}(\delta,c)\wedge \psi_{inc}(\delta,c)\wedge \psi_{dec}(\delta,c))\bigr)\Bigr]
\end{array}
\]
where the sub-formulas  $\psi_{=}(\delta,c)$, $\psi_{inc}(\delta,c)$, and $\psi_{dec}(\delta,c)$ are defined as in the proof of Proposition~\ref{prop:undecidabilityContextHLTL}.
\end{proof}

\bigskip

\subsection{\textbf{Detailed proof of Proposition~\ref{prop:FromContextHyperToAAWA}}}\label{APP:FromContextHyperToAAWA}

\setcounter{aux}{\value{proposition}}
\setcounter{auxSec}{\value{section}}
\setcounter{section}{\value{sec-FromContextHyperToAAWA}}
\setcounter{proposition}{\value{prop-FromContextHyperToAAWA}}

\begin{proposition}  Given a  $\CHLTL$ quantifier-free formula $\psi$ with trace variables $x_1,\ldots,x_n$,
one can build in polynomial time a B\"{u}chi  $\NAAWA$ $\Au_\psi$ such that $\Lang(\Au_\psi)$ is the set of $n$-tuples $(\pi_1,\ldots,\pi_n)$ of traces so that $(\{x_1\mapsto (\pi_1,0),\ldots,x_1\mapsto (\pi_n,0)\},\{x_1,\ldots,x_n\})\models \psi$. Moreover, $\Au_\psi$ is
$(|\psi|+1)$-synchronous if $\psi$ is in the bounded fragment of $\CHLTL$.
\end{proposition}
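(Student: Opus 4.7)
The plan is to extend the classical alternating-automaton construction for $\LTL$ to the multi-trace, multi-context setting of $\CHLTL$. As a preprocessing step, I would use the release dual $\Release$ of $\Until$, together with De~Morgan's laws, to rewrite $\psi$ in negation normal form so that negation is only applied to relativized atoms. I would then take as states of $\Au_\psi$ triples of the form $(\theta, C, j)$, where $\theta$ is a sub-formula of $\psi$, $C$ is either the global context $\{x_1,\ldots,x_n\}$ or a context occurring in a sub-formula $\tpl{C}(\ldots)$ of $\psi$, and $j$ is a counter modulo $|C|$ recording how many of the variables of $C$ have already been advanced in the current lockstepwise round. The initial state is $(\psi,\{x_1,\ldots,x_n\},0)$, and the accepting states are those in which $\theta$ has a release root $\theta_1\Release\theta_2$, so that no infinite run can stall on the unfolding of an until sub-formula. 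Since the number of sub-formulas, the number of contexts and the bound on $j$ are each $O(|\psi|)$, the whole construction runs in polynomial time.

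The transition function would mirror the local semantic clauses of $\CHLTL$. Relativized atoms $\Rel{p}{x_i}$ and their negations resolve to $\true$ or $\false$ by inspecting the $i$-th component of the current input letter. Boolean connectives propagate to sub-state pairs $(\theta_\ell,C,j)$ using $\wedge$ or $\vee$ without consuming any direction. A context modality $\tpl{C'}\theta$ causes an $\varepsilon$-like transition to $(\theta,C',0)$. For each of $\Next\theta$, $\theta_1\Until\theta_2$ and $\theta_1\Release\theta_2$, the automaton lists the variables of $C$ in a fixed order and, at counter value $j$, advances the head one position along the $(j+1)$-th direction of $C$ while incrementing the counter modulo $|C|$; after $|C|$ such moves each variable of $C$ has advanced exactly once. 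The usual one-step fixpoint unfolding of $\Until$ and $\Release$ is applied only when the counter wraps back to $0$, which corresponds to evaluating the formula on the new pointed trace assignment $\Pi +_C 1$.

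Correctness would follow by a routine structural induction on $\theta$, generalising the $\LTL$-to-alternating-Büchi correctness argument: for every sub-formula $\theta$, every context $C$, every tuple of traces, and every position vector $\wp$, there is an accepting run of $\Au_\psi$ starting from state $(\theta,C,0)$ with root position vector $\wp$ if and only if the corresponding pointed trace assignment at $\wp$ satisfies $\theta$ under context $C$. The Büchi condition on release states then gives the standard equivalence between accepting alternating runs and the fixpoint semantics of until and release.

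The main obstacle, and where boundedness really earns its keep, is the $(|\psi|+1)$-synchronous claim. I would bound the offset $\max\{|j_\ell-j_{\ell'}|\mid \ell,\ell'\in[1,n]\}$ of every position vector reachable in a run by factoring each path into maximal contiguous segments $\nu_1,\ldots,\nu_k$ (with $\nu_k$ possibly infinite), each labelled by a single sub-formula $\theta_i$ and a single context $C_i$; the construction guarantees that $\theta_{i+1}$ is a strict sub-formula of $\theta_i$, so $k\leq|\psi|$. The key invariant is that within any segment whose context is global, or whose root modality is not $\Until$ or $\Release$, the offset grows by at most one during the whole segment, because all advancements fit inside a single lockstepwise round over the current context. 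The pathological case is a long segment whose root is $\Until$ or $\Release$ under a non-global context: iterating such a fixpoint repeatedly advances only a strict subset of variables and can push the offset arbitrarily high. The bounded fragment of $\CHLTL$ forbids exactly this combination --- $\Next$ is the only temporal modality allowed under a non-global context --- so every segment contributes at most one unit of offset, and summing over the $k\leq|\psi|$ segments yields the bound $|\psi|+1$.
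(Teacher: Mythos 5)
Your proposal is correct and follows essentially the same route as the paper's proof: the paper also converts $\psi$ to negation normal form via $\Release$, takes states of the form $(\theta,C,x)$ with the third component playing exactly the role of your counter modulo $|C|$ (it records which directions of $C$ have already advanced in the current round), accepts on release-rooted states, handles $\tpl{C'}$ as a context switch, and proves the $(|\psi|+1)$-synchronicity bound by the very same segment factorization along run paths with the invariant that global-context or non-fixpoint segments increase the offset by at most one. The only differences are notational (an index $j$ versus the variable name $\SUCC(C,x)$), so there is nothing substantive to add.
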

\setcounter{proposition}{\value{aux}}
\setcounter{section}{\value{auxSec}}
\begin{proof}
By exploiting the release modality $\Release$ (the dual of the until modality), both conjunction and disjunction,  and the fact that a formula $\neg \tpl{C}\theta$ corresponds to $\tpl{C}\neg\theta$, we can assume without loss of generality that
 $\psi$ is in  negative normal form, i.e. negation is applied only to relativized  atomic propositions.

\noindent For each $i\in [1,n]$, we write $\dir(x_i)$ to mean the direction $i$. Moreover, for each context $C\subseteq \{x_1,\ldots,x_n\}$, we denote by
$\first(C)$ (resp., $\last(C)$) the smallest (resp., greatest) variable occurring in $C$ with respect to the fixed ordering $x_1<\ldots <x_n$. Additionally, for each $x\in C\setminus \{\last(C)\}$, we denote by $\SUCC(C,x)$  the smallest variable in $C$ which is greater than $x$.

\noindent The B\"{u}chi  $\NAAWA$ $\Au_\psi=\tpl{2^{\AP},Q,q_0,\rho,F}$ is defined as follows. The set $Q$ of states is the set of triples
$(\theta,C,x)$ such that $\theta$ is a sub-formula of $\psi$, $C$ is a context such that some occurrence of $\theta$ in $\psi$ is in the scope of the context
modality $\tpl{C}$, and $x\in C$. The initial state $q_0$ is given by $(\psi,\{x_1,\ldots,x_n\},x_n)$, while the set $F$ of accepting states consists of the states of the
form $(\theta_1\Release \theta_2, C,last(C))$.

Finally, the transition function $\rho$ is defined by induction on the structure of the sub-formulas $\theta$ of $\psi$ as follows, where for each $\vect{\sigma}\in (2^{\AP})^{n}$ and $i\in [1,n]$, $\vect{\sigma}[i]$ denotes the $i^{th}$ component of $\vect{\sigma}$:
\begin{itemize}
  \item for $x\neq last(C)$, $\rho((\theta,C,x),\vect{\sigma})  = ((\theta,C,\SUCC(C,x)),\dir(\SUCC(C,x))$\vspace{0.2cm}

\item
  $\rho((\Rel{p}{x},C,\last(C)),\vect{\sigma})    =  \left\{
  \begin{array}{ll}
 \true
             &    \text{ if }  p\in\vect{\sigma}[\dir(x)]
             \\
 \false            &    \text{ otherwise }
\end{array}
             \right.
$ \vspace{0.2cm}

\item
  $
  \rho((\neg \Rel{p}{x},C,\last(C)),\vect{\sigma})    =  \left\{
  \begin{array}{ll}
 \true
             &      \text{ if }  p\notin\vect{\sigma}[\dir(x)]
             \\
 \false            &    \text{ otherwise }
\end{array}
             \right.
$ \vspace{0.2cm}

\item
  $\rho((\theta_1\vee \theta_2,C,\last(C)),\vect{\sigma})    =
\rho((\theta_1,C,\last(C)),\vect{\sigma})\vee \rho((\theta_2,C,\last(C)),\vect{\sigma})$ \vspace{0.2cm}

\item
  $\rho((\theta_1\wedge \theta_2,C,\last(C)),\vect{\sigma})    =
\rho((\theta_1,C,\last(C)),\vect{\sigma})\wedge \rho((\theta_2,C,\last(C)),\vect{\sigma})$\vspace{0.2cm}

\item
  $
  \rho((\Next \theta,C,\last(C)),\vect{\sigma})    = ((\theta,C,\first(C)),\dir(\first(C)))
$ \vspace{0.2cm}

\item
  $
  \rho(( \theta_1\Until \theta_2,C,\last(C)),\vect{\sigma})    =
  \rho((\theta_2,C,\last(C)),\vect{\sigma}) \,\,\vee $

  \hspace{3cm} $[\rho((\theta_1,C,\last(C)),\vect{\sigma}) \wedge ((\theta_1\Until \theta_2,C,\first(C)),\dir(\first(C)))]
$ \vspace{0.2cm}

\item
  $
  \rho(( \theta_1\Release \theta_2,C,\last(C)),\vect{\sigma})    =
  \rho((\theta_2,C,\last(C)),\vect{\sigma}) \,\,\wedge $

  \hspace{3cm} $[\rho((\theta_1,C,\last(C)),\vect{\sigma}) \vee ((\theta_1\Release \theta_2,C,\first(C)),\dir(\first(C)))]
$ \vspace{0.2cm}

\item
  $\rho((\tpl{C'}\theta,C,\last(C)),\vect{\sigma})    = \rho( (\theta,C',\last(C')),\vect{\sigma})$
\end{itemize}

We first show that if the given quantifier-free formula $\psi$ in $\textit{NNF}$ is in the bounded fragment of $\CHLTL$, then $\Au_\psi$ is a
$(|\psi|+1)$-synchronous $\NAAWA$. By construction, whenever the automaton is in a state associated with a sub-formula $\theta$ of $\psi$, then
$\Au_\psi$ can move only to states associated with $\theta$ or with strict sub-formulas of $\theta$. In particular, each path in a
run of $\Au_\psi$ can be factorized into a finite number $\nu_1,\ldots,\nu_k$ of contiguous segments (with $\nu_k$ possibly infinite) such that for each $i\in [1,k]$, segment $\nu_i$ is associated with a sub-formula $\theta_i$ of $\psi$ and a context $C_i$ occurring in $\psi$, and the following holds, where the \emph{offset} of a position vector $\wp=(j_1,\ldots,j_n)$ in $\N^{n}$ is the maximum over the differences between pairs of components, i.e. $\max(\{j_\ell-j_{\ell'}\mid \ell,\ell'\in [1,n]\})$:
\begin{itemize}
 \item there is some occurrence of $\theta_i$ in $\psi$ which is in the scope of the context modality $\tpl{C_i}$;
  \item if $i<k$, then $\theta_{i+1}$ is a strict sub-formula of $\theta_i$;
  \item if either $C_i$ is global or the root modality of $\theta_i$ is not in $\{\Until,\Release\}$, then  the offset at each node along the segment $\nu_i$ and at the first node of $\nu_{i+1}$ if $i<k$ is at most the offset at the beginning of $\nu_i$ plus one.
\end{itemize}
\noindent Hence, if $\psi$ is in the bounded fragment of $\CHLTL$,  the offset at each node of a run is at most $|\psi|+1$ and the result directly follows.\vspace{0.1cm}

It remains to prove the correctness of the construction. For this, we need additional notation. For a position vector
$\wp=(j_1,\ldots,j_n)$ and a context $C\subseteq \{x_1,\ldots,x_n\}$, we denote by $\wp +_C  1$ the position vector $(j'_1,\ldots,j'_n)$ defined as follows
for all $i\in [1,n]$: $j'_i=j_i+1$ if $x_i\in C$ and $j'_i=j_i$  otherwise.

For a pointed trace assignment $\Pi= \{x_1 \mapsto (\pi_1,j_1),\ldots,x_n \mapsto (\pi_n,j_n)\}$ over
$\{x_1,\ldots,x_n\}$, we denote by $\vect{\pi}(\Pi)$ the $n$-tuple of traces $(\pi_1,\ldots,\pi_n)$ and by
$\wp(\Pi)$ the position vector $(j_1,\ldots,j_n)$.

Given an $n$-tuple $\vect{\pi}=(\pi_1,\ldots,\pi_n)$ of traces, a position vector $\wp=(j_1,\ldots,j_n)$, and a state $q$ of $\Au_\psi$,
a \emph{$(q,\wp)$-run  of $\Au_\psi$ over the input $\vect{\pi}$} is defined as a run of $\Au_\psi$ over
$\vect{\pi}$ but we require that the root of the run is labeled by the pair $(q,\wp)$ (i.e., initially, the automaton is in state $q$ and for each $i\in [1,n]$ reads the $(j_i+1)^{th}$ symbol of the trace $\pi_i$). By construction and a straightforward induction on the structure of the sub-formulas of $\psi$, we obtain the following result, hence,
Proposition~\ref{prop:FromContextHyperToAAWA} directly follows.\vspace{0.1cm}

\noindent \textbf{Claim.} Let $\Pi$ be a pointed trace assignment over $\{x_1,\ldots,x_n\}$ and $(\xi,C,\last(C))$ be a state of $\Au_\psi$. Then, there is an
accepting $((\xi,C,\last(C)),\wp(\Pi))$-run of $\Au_\psi$ over $\vect{\pi}(\Pi)$ \emph{if and only if} $(\Pi,C)\models \xi$.

 \details{
\noindent \textbf{Proof of the claim.} The proof is by induction on the structure of the sub-formula $\xi$ of $\psi$.
Here, we consider the cases where $\xi$ is not atomic and the root modality of $\xi$ is in $\{\Always,\Until,\tpl{C}\}$ for some context $C\subseteq \Var$
(the other cases are either straightforward or directly follow from the induction hypothesis).
 \begin{itemize}
   \item Case $\xi=\Always \theta$: by construction, there is a $((\Always \theta,C,\last(C)),\wp(\Pi))$-run of $\Au_\psi$ over $\vect{\pi}(\Pi)$ if and only if there is a sequence $(\wp_i)_{i\geq 0}$ of position vectors in $\N^{n}$ such that $\wp_0=\wp(\Pi)$ and for all $i\geq 0$,
       $\wp_{i+1}= \wp_i +_C  1$ and there is an accepting $((\theta,C,\last(C)),\wp_i)$-run of $\Au_\psi$ over $\vect{\pi}(\Pi)$. By the induction hypothesis, we obtain that there is a $((\Always \theta,C,\last(C)),\wp(\Pi))$-run of $\Au_\psi$ over $\vect{\pi}(\Pi)$ if and only if for each $i\geq 0$, $(\Pi +_C \,i,C)\models \theta$ if and only if $(\Pi,C)\models \Always \theta$. Hence, the result follows.
   \item Case $\xi=\theta_1\Until \theta_2$: the acceptance condition of $\Au_\psi$ ensures that each infinite branch of an accepting run does not visit  infinitely often state $(\theta_1\Until \theta_2,C,\last(C))$. Hence, by construction, there is a $((\theta_1\Until \theta_2,C,\last(C)),\wp(\Pi))$-run of $\Au_\psi$ over $\vect{\pi}(\Pi)$ if and only if
       \begin{compactitem}
         \item there is a finite sequence $\wp_0,\ldots,\wp_k $ of position vectors in $\N^{n}$ such that (i) $\wp_0=\wp(\Pi)$, (ii)
       there is an accepting  $((\theta_2,C,\last(C)),\wp_k)$-run of $\Au_\psi$ over $\vect{\pi}(\Pi)$, and (iii) for all $i<k$, $\wp_{i+1}= \wp_i +_C 1$ and there is an accepting  $((\theta_1,C,\last(C)),\wp_i)$-run of $\Au_\psi$ over $\vect{\pi}(\Pi)$.
       \end{compactitem}\vspace{0.1cm}

       Hence, by the induction hypothesis,  there is a $((\theta_1\Until \theta_2,C,\last(C)),\wp(\Pi))$-run of $\Au_\psi$ over $\vect{\pi}(\Pi)$ if and only if there is $k\geq 0$ such that $(\Pi +_C k,C)\models \theta_2$ and for all $i<k$,
        $(\Pi +_C i,C)\models \theta_1$ if and only if $(\Pi,C)\models \theta_1\Until \theta_2$, and the result follows.

   \item Case $\xi=\tpl{C'}\theta$: by construction, there is a $((\tpl{C'}\theta,C,\last(C)),\wp(\Pi))$-run of $\Au_\psi$ over $\vect{\pi}(\Pi)$ if and only if there is a $((\theta,C',\last(C')),\wp(\Pi))$-run of $\Au_\psi$ over $\vect{\pi}(\Pi)$. By the induction hypothesis, it follows that there is a $((\tpl{C'}\theta,C,\last(C)),\wp(\Pi))$-run of $\Au_\psi$ over $\vect{\pi}(\Pi)$ if and only if
       $(\Pi,C')\models \theta$ if and only if $(\Pi,C)\models \tpl{C'}\theta$, and we are done.
 \end{itemize}}
\end{proof}

\bigskip

\subsection{\textbf{Proof of the claim in the proof of Theorem~\ref{theo:ComplexityBoundedCHLTL}}}\label{APP:ComplexityBoundedCHLTL}

In this section, we establish the following result (corresponding to the claim in the proof of Theorem~\ref{theo:ComplexityBoundedCHLTL}).


\begin{lemma}\label{lemma:lowerboundBoundedCHLTL} Let $\AP=\{p\}$ and $n>1$. One construct in time polynomial in $n$ a bounded $\CHLTL$ formula $\psi(x,y)$
with two free variables $x$ and $y$ and not containing universal quantifiers (hence, the quantifier alternation depth is $0$)  such that for all traces $\pi_x$ and $\pi_y$, $\{x \mapsto (\pi_x,0),y \mapsto (\pi_y,0)\}\models \psi(x,y)$ iff
\begin{compactitem}
  \item $p$ occurs exactly once on $\pi_x$ (resp., $\pi_y$);
  \item for each $i\geq$, $p\in \pi_x(i)$ iff $p\in \pi_y(i+n*2^{n}* 2^{2^{n}})$.
\end{compactitem}
\end{lemma}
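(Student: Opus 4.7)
The plan is to establish the lemma by constructing, for each $n > 1$, a bounded $\CHLTL$ formula $\psi(x, y)$ of the form $\exists z_1 \cdots \exists z_m.\, Q(x, y, z_1, \ldots, z_m)$ of size polynomial in $n$, whose quantifier-free body $Q$ forces the unique $p$-position on $\pi_y$ to lie exactly $N_2 := n \cdot 2^n \cdot 2^{2^n}$ steps after the unique $p$-position on $\pi_x$. The auxiliary traces $z_1, \ldots, z_m$ will encode a hierarchical counter of total length $N_2$, and $Q$ will verify the correctness of this encoding together with the alignment of $\Rel{p}{x}$ (resp.\ $\Rel{p}{y}$) with the start (resp.\ end) of the counter.

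The counter has two levels, mimicking the inductive construction of Rabe's formula $\psi_1(x,y)$ for $\HLTL$ at alternation depth $1$: an outer counter ranging over $2^{2^n}$ cells, each of length $N_1 := n \cdot 2^n$, with the $2^n$-bit value of each outer cell stored bit-by-bit, one bit per inner subcell of length $n$. The $n$ positions of an inner subcell carry the binary encoding of an inner counter value in $\{0, 1, \ldots, 2^n-1\}$, providing an index for the corresponding outer-counter bit. Standard $\HLTL$-style formulas of size $O(n)$ verify the inner counter, the boundary alignment between $\Rel{p}{x}, \Rel{p}{y}$ and the endpoints of the outer counter, and the initial and final outer values $0$ and $2^{2^n}-1$.

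The crux of the construction, and its main obstacle, is verifying that consecutive outer cells differ by exactly $1$ in their $2^n$-bit binary values. Enumerating all $2^n$ bits is impossible in a polynomial formula, and in Rabe's $\HLTL$ construction this is precisely the step that introduces a universal trace quantifier iterating over bit positions, raising the alternation depth to $1$. The key technique in bounded $\CHLTL$ is to replace this universal quantifier by (i) existentially quantifying additional trace copies $z'_1, z'_2$ that encode the outer counter shifted by one cell, and (ii) using non-global context modalities of the form $\tpl{\{z'_i\}}\Next$ together with a global-context $\Always$ to compare corresponding bit indices across adjacent outer cells \emph{at the same global time point}. The inner counter supplies a canonical matching of bit indices between consecutive outer cells, so that the $2^n$ required comparisons collapse into a single polynomial-size subformula evaluated synchronously by the outer $\Always$.

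The remaining obstacle is then to ensure that the shift property of $z'_1, z'_2$---namely, that each encodes the outer counter offset by exactly one cell---can itself be forced without reintroducing a universal quantifier. The plan is to enforce this shift by a bootstrapping argument exploiting the already-verified periodicity of the inner counter: each of $z'_1, z'_2$ is independently constrained to be a syntactically valid counter trace of the same outer/inner form as $z_1$, with its initial outer value pinned to $1$ (rather than $0$) and its final outer cell occurring $N_1$ positions before that of $z_1$; since the inner counter has correct macroscopic period $N_1$, these local constraints propagate into the required global one-cell shift. Combining all components yields the desired bounded $\CHLTL$ formula of polynomial size and alternation depth $0$, completing the proof.
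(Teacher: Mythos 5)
Your overall architecture agrees with the paper's: a two-level counter (outer cells of length $n\cdot 2^{n}$ holding a $2^{n}$-bit value, inner subcells of length $n$ holding the bit index), existentially quantified auxiliary traces for the block markers and bit values, polynomial-size synchronous formulas for the inner counter, the boundary alignment with $\Rel{p}{x}$ and $\Rel{p}{y}$, and the first/last outer values, and the correct identification of the outer-counter increment as the crux. However, your mechanism for that crux does not go through. Enforcing that $z_1',z_2'$ encode the outer counter \emph{shifted by one cell} is exactly as hard as the property you are trying to express: the shift relation compares positions at distance $n\cdot 2^{n}$, which is not reachable by a polynomial nesting of $\Next$ under a non-global context (and using $\Until$ or $\Eventually$ there would leave the bounded fragment). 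Your replacement --- requiring each $z_i'$ to be a ``syntactically valid counter trace'' with initial value $1$ --- is either circular (if validity includes the outer increment property, you must verify for $z_i'$ the very property you could not verify for $z_1$) or insufficient (if validity means only the block structure and inner counter, then nothing constrains the intermediate outer values: $z_1$ could carry outer values $0,5,5,\dots,2^{2^{n}}-1$ and $z'$ values $1,6,6,\dots$, satisfying all your local constraints while the number of outer cells, hence the yardstick length, is unpinned). The ``periodicity of the inner counter'' fixes only the cell boundaries, not the cell contents, so there is no propagation from the local constraints to the global one-cell shift.

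The paper's construction avoids shifted copies altogether. It introduces $n$ extra variables $x_1,\dots,x_n$ on which $p$ holds exactly at the starts of $1$-blocks, and uses the context modality as a one-bit-per-variable memory: at the start of the $1$-block with index $j$ inside a $2$-block, a single $\tpl{\{x_i\}}\Next$ is applied iff bit $i$ of $j$ is $0$, so that the $n$-bit index $j$ is stored in the local offsets (each $0$ or $1$) of $x_1,\dots,x_n$. The formula then walks forward synchronously under the \emph{global} context until it reaches a $1$-block of the next $2$-block, reads the stored index back by testing which $\Rel{p}{x_i}$ still hold (an unshifted $x_i$ sits on a $1$-block start, a shifted one does not, since $n>1$), matches it against the destination block's index written on $x_{bl_1}$, and only then compares the two bit values. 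All non-global temporal operators are single $\Next$'s with offset at most one (plus $\tpl{\{x_{bl_1}\}}\Next^{i-1}$ with $i\leq n$), so the formula stays in the bounded fragment, and the $2^{n}$ bit comparisons collapse into one polynomial subformula under the outer $\Always$. This offset-as-register idea is the missing ingredient in your proposal.
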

\begin{proof}
We encode a yardstick of length $n*2^{n}* 2^{2^{n}}$ by using a $2^{n}$-bit counter for expressing integers in the range
$[0,2^{2^{n}}-1]$ and an $n$-bit counter for keeping track of the position (index) $i\in [0,2^{n}-1]$ of the $(i+1)^{th}$-bit of each valuation $v$ of the
$2^{n}$-bit counter. In particular, such a valuation $v\in [0,2^{2^{n}}-1]$ is encoded by a sequence, called \emph{$2$-block}, of $2^{n}$ sub-blocks of length $n$ where
for each $i\in [0,2^{n}-1]$, the $(i+1)^{th}$ sub-block encodes both the value and the index of the $(i+1)^{th}$-bit in the binary representation of $v$. To this purpose, we exploit  additional trace variables (existentially quantified):
\begin{compactitem}
  \item the trace variable $x_{bl_1}$ for encoding the bit indexes $i\in [0,2^{n}-1]$ of the $2^{n}$-bit counter by sequences of $n$ bits (\emph{$1$-blocks}) and the trace variable $x_{\#_1}$ for separating $1$-blocks: $\Rel{p}{x_{\#_1}}$ holds exactly at the beginning of each $1$-block;
  \item the trace variable $x_{bl_2}$ for encoding the bit values of the $2^{n}$-bit counter: in particular, $\Rel{p}{x_{bl_2}}$ can hold only at the beginning of
  a $1$-block;
  \item the trace variable $x_{\#_2}$ for separating $2$-blocks: $\Rel{p}{x_{\#_2}}$ holds exactly at the beginning of each $2$-block.
\end{compactitem}
\begin{figure}[hbt!]
\centering
\vspace{-0.3cm}
\begin{tikzpicture}[scale=1]

\coordinate  (FirstTag) at (0.0,0.0);
\coordinate  (FirstTagAbove) at (0.0,0.1);
\coordinate  (FirstTagBelow) at (0.0,-0.1);
\coordinate  (SecondTag) at (2.0,0.0);
\coordinate  (SecondTagAbove) at (2.0,0.1);
\coordinate  (SecondTagBelow) at (2.0,-0.1);
\coordinate  (ThirdTag) at (4.0,0.0);
\coordinate  (ThirdTagAbove) at (4.0,0.1);
\coordinate  (ThirdTagBelow) at (4.0,-0.1);
\coordinate  (LeftMiddle) at (4.2,0.0);
\coordinate  (RightMiddle) at (5.8,0.0);
\coordinate  (FourthTag) at (6.0,0.0);
\coordinate  (FourthTagAbove) at (6.0,0.1);
\coordinate  (FourthTagBelow) at (6.0,-0.1);
\coordinate  (FifthTag) at (8.0,0.0);
\coordinate  (FifthTagAbove) at (8.0,0.1);
\coordinate  (FifthTagBelow) at (8.0,-0.1);
\coordinate  (LastTag) at (10.0,0.0);
\coordinate  (LastTagAbove) at (10.0,0.1);
\coordinate  (LastTagBelow) at (10.0,-0.1);

\path[-, thin,black] (FirstTagAbove) edge node[above] {\footnotesize $\Rel{p}{x_{\#_2}}$} (FirstTag);
\path[-, thin,black] (FirstTag) edge node[below] {\footnotesize $\Rel{p}{x_{\#_1}}$} (FirstTagBelow);

\coordinate [label=below:{\footnotesize  $0$}] (FirstValue) at (1.0,-0.55);

\path[-, thin,black] (SecondTagAbove) edge  (SecondTag);
\path[-, thin,black] (SecondTag) edge node[below] {\footnotesize $\Rel{p}{x_{\#_1}}$} (SecondTagBelow);

\path[<->, thin,black] (2.0,0.4) edge node[above] {\footnotesize $n$ steps} (4.0,0.4);
\path[<->, thin,black] (6.0,0.4) edge node[above] {\footnotesize $1$-block} (8.0,0.4);

\coordinate [label=below:{\footnotesize  $1$}] (SecondValue) at (3.0,-0.55);
\coordinate [label=below:{\footnotesize  $\ldots$}] (MiddleValue) at (5.0,-0.55);

\path[-, thin,black] (ThirdTagAbove) edge  (ThirdTag);
\path[-, thin,black] (ThirdTag) edge node[below] {\footnotesize $\Rel{p}{x_{\#_1}}$} (ThirdTagBelow);

\path[-, thin,black] (FourthTagAbove) edge  (FourthTag);
\path[-, thin,black] (FourthTag) edge node[below] {\footnotesize $p[x_{\#_1}]$} (FourthTagBelow);

\path[-, thin,black] (FifthTagAbove) edge  (FifthTag);
\path[-, thin,black] (FifthTag) edge node[below] {\footnotesize $\Rel{p}{x_{\#_1}}$} (FifthTagBelow);

\coordinate [label=below:{\footnotesize  $2^n-2$}] (LastButOneValue) at (7.0,-0.45);
\coordinate [label=below:{\footnotesize  $2^n-1$}] (LastValue) at (9.0,-0.45);

\path[-, thin,black] (LastTagAbove) edge node[above] {\footnotesize $\Rel{p}{x_{\#_2}}$} (LastTag);
\path[-, thin,black] (LastTag) edge node[below] {\footnotesize $\Rel{p}{x_{\#_1}}$} (LastTagBelow);

\path[-, thin,black] (FirstTag) edge (LeftMiddle);
\path[-, dashed,black] (LeftMiddle) edge (RightMiddle);
\path[-, thin,black] (RightMiddle) edge (LastTag);

\end{tikzpicture}
\vspace{-0.3cm}
\end{figure}
The formula $\psi(x,y)$ ensures that the segment between the unique
position where $\Rel{p}{x}$ holds and the unique position where
$\Rel{p}{y}$ holds (the latter excluded) is a sequence of $2^{2^{n}}$
$2$-blocks such that the first block encodes the $2^{n}$-counter
valuation $0$, and the $2^{n}$-counter is incremented on moving from a
$2$-block to the next one.
We assume that the first bit of the $n$-bit counter (resp.,
$2^{n}$-bit counter) is the least significant one.
Moreover, in order to ensure that the $2^{n}$-counter is correctly
updated, we exploit $n$ additionally existentially quantified trace
variables $x_1,\ldots,x_n$ where for each $j\in [1,n]$, $\Rel{p}{x_j}$
holds exactly at the beginning of each $1$-block.
The meaning of such variables will be explained later.
\begin{figure}[hbt!]
\centering
\vspace{-0.3cm}
\begin{tikzpicture}[scale=1]

\coordinate  (FirstTag) at (0.0,0.0);
\coordinate  (FirstTagAbove) at (0.0,0.1);
\coordinate  (FirstTagBelow) at (0.0,-0.1);
\coordinate  (SecondTag) at (2.0,0.0);
\coordinate  (SecondTagAbove) at (2.0,0.1);
\coordinate  (SecondTagBelow) at (2.0,-0.1);
\coordinate  (ThirdTag) at (4.0,0.0);
\coordinate  (ThirdTagAbove) at (4.0,0.1);
\coordinate  (ThirdTagBelow) at (4.0,-0.1);
\coordinate  (LeftMiddle) at (4.2,0.0);
\coordinate  (RightMiddle) at (5.8,0.0);
\coordinate  (FourthTag) at (6.0,0.0);
\coordinate  (FourthTagAbove) at (6.0,0.1);
\coordinate  (FourthTagBelow) at (6.0,-0.1);
\coordinate  (FifthTag) at (8.0,0.0);
\coordinate  (FifthTagAbove) at (8.0,0.1);
\coordinate  (FifthTagBelow) at (8.0,-0.1);
\coordinate  (LastTag) at (10.0,0.0);
\coordinate  (LastTagAbove) at (10.0,0.1);
\coordinate  (LastTagBelow) at (10.0,-0.1);

\path[-, thin,black] (FirstTagAbove) edge node[above] {\footnotesize $\Rel{p}{x}$} (FirstTag);
\path[-, thin,black] (FirstTag) edge node[below] {\footnotesize $\Rel{p}{x_{\#_2}}$} (FirstTagBelow);

\coordinate [label=below:{\footnotesize  $0$}] (FirstValue) at (1.0,-0.55);

\path[-, thin,black] (SecondTagAbove) edge  (SecondTag);
\path[-, thin,black] (SecondTag) edge node[below] {\footnotesize $\Rel{p}{x_{\#_2}}$} (SecondTagBelow);

\path[<->, thin,black] (2.0,0.4) edge node[above] {\footnotesize $n* 2^{n}$ steps} (4.0,0.4);
\path[<->, thin,black] (6.0,0.4) edge node[above] {\footnotesize $2$-block} (8.0,0.4);

\coordinate [label=below:{\footnotesize  $1$}] (SecondValue) at (3.0,-0.55);
\coordinate [label=below:{\footnotesize  $\ldots$}] (MiddleValue) at (5.0,-0.55);

\path[-, thin,black] (ThirdTagAbove) edge  (ThirdTag);
\path[-, thin,black] (ThirdTag) edge node[below] {\footnotesize $\Rel{p}{x_{\#_2}}$} (ThirdTagBelow);

\path[-, thin,black] (FourthTagAbove) edge  (FourthTag);
\path[-, thin,black] (FourthTag) edge node[below] {\footnotesize $\Rel{p}{x_{\#_2}}$} (FourthTagBelow);

\path[-, thin,black] (FifthTagAbove) edge  (FifthTag);
\path[-, thin,black] (FifthTag) edge node[below] {\footnotesize $\Rel{p}{x_{\#_2}}$} (FifthTagBelow);

\coordinate [label=below:{\footnotesize  $2^{2^{n}}-2$}] (LastButOneValue) at (7.0,-0.45);
\coordinate [label=below:{\footnotesize  $2^{2^{n}}-1$}] (LastValue) at (9.0,-0.45);

\path[-, thin,black] (LastTagAbove) edge node[above] {\footnotesize $\Rel{p}{y}$} (LastTag);
\path[-, thin,black] (LastTag) edge node[below] {\footnotesize $\Rel{p}{x_{\#_2}}$} (LastTagBelow);

\path[-, thin,black] (FirstTag) edge (LeftMiddle);
\path[-, dashed,black] (LeftMiddle) edge (RightMiddle);
\path[-, thin,black] (RightMiddle) edge (LastTag);

\end{tikzpicture}
\vspace{-0.3cm}
\end{figure}
Formally, the bounded $\CHLTL$ formula $\psi(x,y)$ is defined as follows:
\[
\begin{array}{l}
\psi(x,y)\DefinedAs \exists x_{\#_1}.\,\exists x_{\#_2}.\,\exists x_{bl_1}.\,\exists x_{bl_2}.\,\exists x_{1}.\dots \exists x_{n}.\,\\
\hspace{2cm}\psi_{unique}\wedge \psi_{align}\wedge \psi_{con}\wedge \psi_{bl_2}\wedge\psi_{first}\wedge \psi_{last}\wedge \psi_{inc}
\end{array}
\]
where the various conjuncts are quantifier-free formulas defined in
the following (in particular, all the conjuncts but $\psi_{inc}$ are
$\HLTL$ quantifier-free formulas).  For each trace variable $z$ and
$b\in\{0,1\}$, we write $\Rel{p}{z} = b$ to mean the formula $\Rel{p}{z}$ if
$b=1$, and the formula $\neg \Rel{p}{z}$ otherwise.\vspace{0.1cm}

\noindent \textbf{Definition of $\psi_{unique}$.} $\psi_{unique}$
requires that $\Rel{p}{x}$ and $\Rel{p}{y}$ hold exactly once with
$\Rel{p}{x}$ strictly preceding $\Rel{p}{y}$.
\[
  \psi_{unique} \DefinedAs \Eventually(\Rel{p}{x}\wedge \Next\Eventually
  \Rel{p}{y})\wedge \displaystyle{\bigwedge_{z\in\{x,y\}}}
  \Always(\Rel{p}{z}\rightarrow \Next\Always(\neg \Rel{p}{z}))
\]

\noindent \textbf{Definition of $\psi_{align}$.} $\psi_{align}$
requires that
\begin{inparaenum}[(i)]
\item at the positions where $\Rel{p}{x}$ and $\Rel{p}{y}$
  hold, $\Rel{p}{x_{\#_1}}$ and $\Rel{p}{x_{\#_2}}$ hold as well,
\item $\Rel{p}{x_{\#_1}}$ and $\Rel{p}{x_{\#_2}}$ do not hold at
  positions preceding $\Rel{p}{x}$ and following $\Rel{p}{y}$,
\item $\Rel{p}{x_{bl_2}}$ can hold
  only where $\Rel{p}{x_{\#_1}}$ holds, and
\item for each $i\in [1,n]$, $\Rel{p}{x_i}$ holds exactly where
  $\Rel{p}{x_{\#_1}}$ holds.
\end{inparaenum}
\[ \begin{array}{l}
\psi_{align} \DefinedAs \displaystyle{\bigwedge_{j\in \{1,2\}}} \neg \Rel{p}{x_{\#_j}}\Until \bigl(\Rel{p}{x_{\#_j}}\wedge \Rel{p}{x}\wedge \Eventually(\Rel{p}{x_{\#_j}}\wedge \Rel{p}{y}\wedge \Next\Always \neg \Rel{p}{x_{\#_j}} ) \bigr)\wedge   \vspace{0.2cm}\\
 \hspace{1.5cm} \Always(\Rel{p}{x_{bl_2}}\rightarrow \Rel{p}{x_{\#_1}})\wedge \displaystyle{\bigwedge_{i=1}^{i=n}}\Always(\Rel{p}{x_i} \leftrightarrow \Rel{p}{x_{\#_1}})
\end{array} \]
\noindent \textbf{Definition of $\psi_{con}$.} $\psi_{con}$ ensures that the segment between $\Rel{p}{x}$ and $\Rel{p}{y}$ (the $\Rel{p}{y}$-position excluded) is the concatenation of segments of length $n$ (1-blocks) and $p[x_{\#_1}]$ holds exactly at the beginning of each of such segments.
\[
\psi_{con} \DefinedAs \Always(p[x_{\#_1}] \rightarrow [\displaystyle{\bigwedge_{i=1}^{i=n-1}}\Next^{i}\neg \Rel{p}{x_{\#_1}} \wedge (\Rel{p}{y}\vee \Next^{n}\Rel{p}{x_{\#_1}})]
\]
\noindent \textbf{Definition of $\psi_{bl_2}$.} $\psi_{bl_2}$ ensures
that each segment $\nu$ between two consecutive
$\Rel{p}{x_{\#_2}}$-positions (the last position of $\nu$ excluded)
encodes a $2$-block, i.e.
\begin{inparaenum}[(i)]
\item the first $1$-block of $\nu$ encodes $0$,
\item the last $1$-block of $\nu$ encodes $2^{n}-1$, and (iii) in
  moving along $\nu$ from a $1$-block to the next one, the $n$-bit
  counter is incremented.
\end{inparaenum}
\[ \begin{array}{l} \psi_{bl_2} \DefinedAs
    \Always(\Rel{p}{x_{\#_2}}\rightarrow \Rel{p}{x_{\#_1}}) \wedge
    \Always[(\Rel{p}{x_{\#_2}}\wedge \neg \Rel{p}{y}) \rightarrow
     (\displaystyle{\bigwedge_{i=0}^{i=n-1}}\Next^{i}\neg \Rel{p}{x_{bl_1}}\wedge \theta_{last})] \wedge \theta_{inc}  \\
     \theta_{last}\DefinedAs \Next [\neg \Rel{p}{x_{\#_2}}\,\Until\, (\neg \Rel{p}{x_{\#_2}}\wedge \Rel{p}{x_{\#_1}} \wedge \displaystyle{\bigwedge_{i=0}^{i=n-1}} \Next^{i} \Rel{p}{x_{bl_1}}\wedge  \Next^{n}\Rel{p}{x_{\#_2}})]\vspace{0.2cm}\\
     \theta_{inc}\DefinedAs \Always \bigl[\bigl(\Rel{p}{x_{\#_1}}\wedge \Next^{n}(\Rel{p}{x_{\#_1}}\wedge \neg \Rel{p}{x_{\#_2}})\bigr) \longrightarrow \vspace{0.2cm}\\
     \hspace{1cm}\displaystyle{\bigvee_{i=0}^{i=n-1}}
     \bigl( \Next^{i} \theta(0,1) \wedge \displaystyle{\bigwedge_{j=0}^{j=i-1}} \Next^{j} \theta(1,0) \wedge \displaystyle{\bigwedge_{j=i+1}^{j=n-1}\bigvee_{b\in \{0,1\}}} \Next^{j} \theta(b,b)    \bigr)\bigr]\vspace{0.2cm}\\
     \theta(b,b')\DefinedAs \Rel{p}{x_{bl_1}}= b \wedge \Next^{n}(\Rel{p}{x_{bl_1}}= b')
\end{array} \]
\noindent \textbf{Definition of $\psi_{first}$.} $\psi_{first}$ requires that the first $2$-block encodes $0$.
\[
  \psi_{first} \DefinedAs \Always\bigl(\Rel{p}{x}\rightarrow (\neg
  \Rel{p}{x_{bl_2}} \wedge \Next\{(\neg \Rel{p}{x_{\#_2}}\wedge
  (\Rel{p}{x_{\#_1}} \rightarrow \neg \Rel{p}{x_{bl_2}}))\,\Until\,
  \Rel{p}{x_{\#_2}} \} )\bigr) \]
\noindent \textbf{Definition of $\psi_{last}$.} $\psi_{last}$ requires that the last $2$-block encodes $2^{2^{n}}-1$.
\[
  \psi_{last} \DefinedAs \Always\bigl(\Next(\neg \Rel{p}{x_{\#_2}}\,\Until\,
  \Rel{p}{y})\longrightarrow (\Rel{p}{x_{\#_1}} \rightarrow
  \Rel{p}{x_{bl_2}})\bigr)
\]
\noindent \textbf{Definition of $\psi_{inc}$.}
Formula $\psi_{inc}$ ensures that in moving from a non-last $2$-block
$bl$ to the next one $bl'$, the $2^{n}$-counter is incremented.
This is equivalent to require that there is a $1$-block $sbl_0$ of
$bl$ whose bit value (i.e., the Boolean value of $\Rel{p}{x_{bl_2}}$
at the beginning of $sbl_0$) is $0$ such that for each $1$-block $sbl$
of $bl$, denoted by $sbl'$ the $1$-block of $bl'$ having the same
$n$-bit counter valuation as $sbl$, the following holds: (i) if $sbl$
precedes $sbl_0$, then the bit value of $sbl$ (resp., $sbl'$) is $1$
(resp., $0$), (ii) if $sbl$ corresponds to $sbl_0$, then the bit value
of $sbl'$ is $1$, and (iii) if $sbl$ follows $sbl_0$, then there is
$b\in\{0,1\}$ such that the bit value of both $sbl$ and $sbl'$ is $b$.
The definition of $\psi_{inc}$ is the crucial part of the
construction, where we exploit the context modalities (in a bounded
way) and the trace variables $x_1,\ldots,x_n$.
The definition of $\psi_{inc}$ is based on the construction of
auxiliary formulas $\psi_=(b,b')$ where $b,b'\in\{0,1\}$.
Formula $\psi_=(b,b')$ holds at a position $h$ (along all the current
traces) iff whenever $h$ corresponds to the beginning of a $1$-block
$sbl$ of a $2$-block $bl$, then (i) the bit value of $sbl$ is $b$,
(ii) the $2$-block $bl$ is followed by a $2$-block $bl'$, and (iii)
the $1$-block of $bl'$ having the same $n$-bit counter valuation as
$sbl$ has bit value $b'$.
For expressing this requirement, from the current position $h$ (along
all the current traces) associated to the first position of a
$1$-block $sbl$ (a $\Rel{p}{x_{\#_1}}$-position) of a non-last
$2$-block $bl$, we exploit the non-global context modality
$\tpl{\{x_i\}}$, for each $i\in [1,n]$, for ensuring that the local
position along the trace for $x_i$ moves one position to the right if
and only if the $i^{th}$-bit of $sbl$ along the trace for $x_{bl_1}$
is $0$ (recall that $\Rel{p}{x_i}$ holds exactly at the first
positions of $1$-blocks and $n>1$).
Then, by exploiting the global context modality and temporal
modalities, we require that the local positions of all the traces move
forward synchronously step by step until the local position of the
trace for variable $x_{bl_1}$ corresponds to the first position of a
$1$-block $sbl'$ of the $2$-block $bl'$ following $bl$.
At this point, in order to check that $sbl$ and $sbl'$ have the same
counter valuation, it suffices to check that the current Boolean value
of $\Rel{p}{x_i}$ corresponds to the $i^{th}$ bit of $sbl'$ for each
$i\in [1,n]$.

xIn order to define $\psi_=(b,b')$, we first define the auxiliary
quantifier-free formulas $\theta_1(b),\ldots,\theta_n(b)$ by induction
on $n-i$ where $i\in [0,n-1]$.
\[ \begin{array}{l}
     \theta_n(b) \DefinedAs \tpl{\{x_n\}}[ \Next\bigl(\xi(b) \wedge \tpl{\{x_{bl_1}\}}\Next^{n-1}\neg \Rel{p}{x_{bl_1}}\bigr)\vee \bigl(\xi(b) \wedge \tpl{\{x_{bl_1}\}}\Next^{n-1} \Rel{p}{x_{bl_1}}\bigr)]\vspace{0.2cm}\\
     \xi(b)\DefinedAs \tpl{\Var} \Next\Bigl(\neg \Rel{p}{x_{\#_2}}\Until \bigl(\Rel{p}{x_{\#_2}} \wedge (\chi(b)\vee \Next(\neg \Rel{p}{x_{\#_2}}\Until (\neg \Rel{p}{x_{\#_2}}\wedge \chi(b)) ) ) \bigr)\Bigr)\vspace{0.1cm}\\
     \chi(b) \DefinedAs \Rel{p}{x_{\#_1}} \wedge \Rel{p}{x_{bl_2}} = b \wedge
     \displaystyle{\bigwedge_{i=1}^{i=n}}(\Rel{p}{x_i}\leftrightarrow
     \Next^{i-1} \Rel{p}{x_{bl_1}}x )
   \end{array} \]
 Moreover, for each $j\in [1,n-1]$, formula $\theta_j(b)$ is defined
 as follows:
 \[
 \theta_j(b) \DefinedAs \tpl{\{x_j\}}[ \Next\bigl(\theta_{j+1}(b) \wedge \tpl{\{x_{bl_1}\}}\Next^{j-1}\neg \Rel{p}{x_{bl_1}}\bigr)\vee \bigl(\theta_{j+1}(b) \wedge \tpl{\{x_{bl_1}\}}\Next^{j-1} \Rel{p}{x_{bl_1}}\bigr)]
 \]
Then, for all $b,b'\in \{0,1\}$, the quantifier-free formula $\psi_=(b,b')$ is given by
\[
\psi_=(b,b')\DefinedAs \Rel{p}{x_{\#_1}} \rightarrow (\theta_1(b')\wedge \Rel{p}{x_{bl_2}}=b)
\]
Finally, formula $\psi_{inc}$ is defined as follows:
\[ \begin{array}{l}
     \psi_{inc}  ::=  \Always\Bigl[\bigl(\Rel{p}{x_{\#_2}}\wedge \Next\Eventually (\Rel{p}{x_{\#_2}} \wedge \neg \Rel{p}{y})\bigr) \longrightarrow \Bigl((\psi_=(0,1)\wedge \eta) \vee \\
     \hspace{1cm} \bigl(\psi_=(1,0)\wedge \Next((\neg \Rel{p}{x_{\#_2}} \wedge \psi_=(1,0))  \,\Until\, (\neg \Rel{p}{x_{\#_2}}\wedge \Rel{p}{x_{\#_1}}\wedge \psi_=(0,1) \wedge \eta  ))   \bigr)\Bigr)\Bigr] \vspace{0.2cm}\\
     \eta ::= \Next((\neg \Rel{p}{x_{\#_2}} \wedge
     \displaystyle{\bigvee_{b\in\{0,1\}}} \psi_=(b,b) ) \,\Until\,
     \Rel{p}{x_{\#_2}})
   \end{array} \]
 Note that the number of distinct sub-formulas of $\psi(x,y)$ is
 polynomial in $n$.
\end{proof}
\end{changemargin}


\end{document}